\DeclareMathOperator*{\argmin}{arg\,min}
\DeclareMathOperator*{\argmax}{arg\,max}
\DeclareMathOperator{\Tr}{Tr}
\theoremstyle{plain}
\newtheorem{theorem}{Theorem}[section]%
\newtheorem{lemma}[theorem]{Lemma}
\newtheorem{proposition}[theorem]{Proposition}
\newtheorem{corollary}[theorem]{Corollary}
\newtheorem{mydef}[theorem]{Definition}
\newtheorem{myprob}{Problem}
\newcommand{\1}{\mathbf{1}}
\def\trace#1{\mathrm{Tr} \left(#1 \right)}
\newcommand\LL{\bm{L}}
\newcommand{\kh}[1]{\left(#1\right)}
\def\aa{\bm{a}}
\def\bb{\bm{b}}
\renewcommand\AA{\bm{A}}
\def\defeq{\stackrel{\mathrm{def}}{=}}
\newcommand\ee{\boldsymbol{\mathit{e}}}
\newcommand{\exactGreedy}{\textsc{ExactGreedy}}
\def\setof#1{\left\{#1  \right\}}
\def\sizeof#1{\left|#1  \right|}
\def\mG{\mathcal{G}}
\def\PP{\bm{P}}
\def\mV{\mathcal{V}}
\def\mE{\mathcal{E}}
\def\mX{\mathcal{X}}
\def\mY{\mathcal{Y}}
\def\mZ{\mathcal{Z}}
\def\mmod{~\mathrm{mod}~}
\def\LL{\bm{L}}
\def\WW{\bm{W}}
\def\DD{\bm{D}}
\def\ppi{\bm{\pi}}
\def\PPi{\bm{\Pi}}
\def\oo{\1}
\def\ee{\bm{e}}
\def\II{\bm{I}}
\def\LM{\LL^{\dagger}}
\def\LK{(\LL^{\dagger})_{\backslash k}}
\providecommand{\abs}[1]{\left|#1\right|}
\newcommand{\true}{{{\tt true}}}
\newenvironment{fminipage}%
{\begin{Sbox}\begin{minipage}}%
		{\end{minipage}\end{Sbox}\fbox{\TheSbox}}
\newenvironment{algbox}[0]{\vskip 0.2in
	\noindent
	\begin{center}
		\begin{fminipage}{0.465\textwidth}
		}{
		\end{fminipage}
	\end{center}
	\vskip 0.2in
}
\begin{document}
\title{Resistance Distances in Directed Graphs: Definitions, Properties, and Applications}
\markboth{IEEE Transactions on Information Theory}%
{Zhu MZ, Zhu LW, Li H, Li W, Zhang ZZ \MakeLowercase{\textit{et al.}}: Resistance Distances in Directed Graphs: Definitions, Properties, and Applications}
\author{Mingzhe~Zhu, Liwang~Zhu, Huan~Li, Wei~Li, Zhongzhi~Zhang~\IEEEmembership{Member,~IEEE}
	\thanks{This work was supported by the National Natural Science Foundation of China (Nos. 61872093 and U20B2051),  Shanghai Municipal Science and Technology Major Project  (Nos.  2018SHZDZX01 and 2021SHZDZX03),  ZJ Lab, and Shanghai Center for Brain Science and Brain-Inspired Technology. \textit{(Corresponding authors:~Huan~Li;~Wei~Li;~Zhongzhi~Zhang.)}}
	\thanks{Mingzhe Zhu, Liwang Zhu, Huan Li and Zhongzhi Zhang are with the Shanghai Key Laboratory of Intelligent Information Processing, School of Computer Science, Fudan University, Shanghai 200433, China. Zhongzhi~Zhang is also with the Shanghai Engineering Research Institute of Blockchains,  Fudan University, Shanghai 200433, China; and Research Institute of Intelligent Complex Systems, Fudan University, Shanghai 200433, China. Wei~Li is with the Academy for Engineering and Technology, Fudan University, Shanghai, 200433, China.	(e-mail:mzzhu21@m.fudan.edu.cn; 19210240147@fudan.edu.cn;
		huanli@cis.upenn.edu; fd\_liwei@fudan.edu.cn; zhangzz@fudan.edu.cn).}
}
\IEEEtitleabstractindextext{%
\begin{abstract}
Resistance distance has been studied extensively in the past years, with the majority of previous studies devoted to undirected networks, in spite of the fact that various realistic networks are directed. Although several generalizations of resistance distance on directed graphs have been proposed, they either have no physical interpretation or are not a metric. In this paper, we first extend the definition of resistance distance to strongly connected directed graphs based on random walks and show that the two-node resistance distance on directed graphs is a metric. Then, we introduce the Laplacian matrix for directed graphs that subsumes the Laplacian matrix of undirected graphs as a particular case, and use its pseudoinverse to express the two-node resistance distance, and many other relevant quantities derived from resistance distances. Moreover, we define the resistance distance between a vertex and a vertex group on directed graphs and further define a problem of optimally selecting a group of fixed number of nodes, such that their resistance distance is minimized. Since this combinatorial optimization problem is NP-hard, we present a greedy algorithm with a proved approximation ratio, and conduct experiments on model and realistic networks to validate the performance of this approximation algorithm.

\end{abstract}

\begin{IEEEkeywords}
Resistance distance, random walks, directed graphs, spectral graph
theory, combinatorial optimization problem.
\end{IEEEkeywords}}
\maketitle
\IEEEdisplaynontitleabstractindextext
\IEEEpeerreviewmaketitle
\section{Introduction}\label{sec:introduction}

\IEEEPARstart{N}{etwork} science is a cornerstone in the study of realistic complex systems ranging from biologic to social systems. One of the most powerful tools for network science is electrical networks, which have led to great success in both algorithmic and practical aspects of complex networks~\cite{DoSn84}. Given an undirected graph, its underlying electrical network is the network obtained by replacing every edge $e$ with weight $w(e)$ in $\mG $ with a resistor having conductance $1/w(e)$. A fundamental quantity of electrical networks is effective resistance, also called resistance distance~\cite{KlRa93}. For any pair of nodes $i$ and $j$, its effective resistance is defined as the potential difference between them when a unit current is injected at $i$ and extracted from $j$. It has been proved that the effective resistance is a distance metric~\cite{KlRa93}, which plays a pivotal role in characterizing network structure~\cite{ElSpVaJaKo11} and various dynamics taking place on networks~\cite{Ne03}. 

Since its establishment, resistance distance has become an important basis of algorithmic graph theory, based on which researchers have obtained landmark results for fast algorithms solving multiple key problems~\cite{LiPaYiZh20}, such as computing maximum flows and minimum cuts~\cite{ChKeMaSpTe11,LeRaSr13,KeLeOrSi14}, sampling random trees~\cite{MaStTa15}, solving traveling salesman problems~\cite{AnGh15}, and sparsifying graphs~\cite{SpSr08}. In addition to its theoretical significance, resistance distance has proven ubiquitous in numerous practical settings, including graph clustering~\cite{SpSr08}, collaborative recommendation~\cite{FoPiReSa07}, graph embedding~\cite{CaZhCh18}, graph centrality~\cite{StZe89,BrFl05,ShYiZh18,LiPeShYiZh19}, link prediction~\cite{MaBeCu16}, and so on. 

Apart from the resistance distance itself, various graph invariants based on resistance distance have been defined and studied, such as the Kirchhoff index~\cite{KlRa93, GhBoSa08} and the multiplicative degree-Kirchhoff index~\cite{ChZh07}. The Kirchhoff index of a graph is the sum of effective resistances over all pairs of nodes, which has found broad applications in diverse fields~\cite{LiZh18, ZhXuYiZh22,YiYaZhZhPa22}. For example, it has been used to measure the overall connectedness of a network~\cite{TiLe10}, the global utility of social recommender networks~\cite{WoLiCh16}, as well as the robustness of the first-order consensus algorithm in noisy networks~\cite{PaBa14, QiZhYiLi19,YiZhPa20}. The multiplicative degree-Kirchhoff index of a graph is defined as a weighted sum of effective resistances of all node pairs. It is a multiple of the Kemeny's constant of the graph~\cite{ChZh07}, which spans a wide range of applications in various practical scenarios~\cite{Hu14,XuShZhKaZh20}. 

In view of the theoretical and practical importance of effective resistance and its related graph invariants, effective resistance has been studied extensively in the past decades~\cite{DoBu13, DoSibu18,ShZh19,ThYaNa19}. Particularly, a large volume of research has been devoted to resistance distance and their properties~\cite{ElSpVaJaKo11}. Most previous studies are intended for undirected networks, in spite of the fact that many of realistic networks are directed, including the World Wide Web, food webs, and social networks, among others. Although several existing studies touched on effective resistance for directed graphs~\cite{Ch19, YoScLe16a, YoScLe16b, SuSa22, BaBaGo20, BaBaGo22}, they have typically not tackled this question directly, since they have no physical interpretations and are not a distance metric. In this sense, there is a disconnect between the notion of effective resistance and directed graphs.

In this paper, we provide an in-depth study on effective resistances, their properties and applications in directed networks. First, based on the connection governing effective resistance and escape probability of random walks on undirected graphs~\cite{DoSn84}, we provide a natural generalization of effective resistance on undirected graphs to strongly connected directed graphs, which is shown to be a distance metric. We then define the Laplacian matrix for directed graphs and provide an examination on its properties, on the basis of which we provide expressions for two-node effective resistance, Kirchhoff index, and multiplicative degree-Kirchhoff index for directed graphs. Moreover, we introduce the notion of effective resistance between a node and a node group, and propose an NP-hard problem of selecting a set of fixed number of nodes, aiming at minimizing the sum of the effective resistance between the node group and all other nodes. We continue to prove that the objective function of the problem is monotone and supermodular, and develop a greedy algorithm to approximately solve this problem in cube time, which has a provable approximation guarantee. Finally, we carry on experiments on several model and realistic networks to evaluate this approximation algorithm.

\section{Preliminaries}

In this section, we briefly introduce some useful notations and tools for the convenience of the description of definitions, properties, and algorithms.

\subsection{Notations}

We use $\mathbb{R}$ to denote real number field, normal lowercase letters like $a,b,c$ to denote scalars in $\mathbb{R}$, calligraphic uppercase letters like $\mathcal{A},\mathcal{B}, \mathcal{C}$ to denote sets, bold lowercase letters like $\bm{a}, \bm{b}, \bm{c}$ to denote column vectors, and bold uppercase letters like $\bm{A}, \bm{B}, \bm{C}$ to denote matrices. We write $\bm{a}_{i}$ to denote the $i^{\text{th}}$ entry of vector $\bm{a}$ and $\bm{A}_{i,j}$ to denote  the $(i,j)^{\text{th}}$ entry of matrix $\bm{A}$ unless stated otherwise. We also write $\bm{A}_{i,:}$ to denote the $i^{\text{th}}$ row of $\bm{A}$ and $\bm{A}_{:,j}$ to denote the $j^{\text{th}}$ column of $\bm{A}$. 
For any matrix $\AA\in\mathbb{R}^{n\times n}$, we use $\AA^{\top}$ to denote its transpose satisfying $(\AA^\top)_{i,j} = \AA_{j,i}$, and we use $\trace{\AA}$ to denote the trace of the matrix $\AA$: $\trace{\AA}=\sum_{i=1}^n\AA_{i,i}$. For any vector $\aa\in\mathbb{R}^n$, we use $\mathrm{Diag}(\aa)$ to denote an $n$-by-$n$ diagonal matrix with its $i^{\text{th}}$ diagonal entry equalling $\aa_i$.

We write sets in matrix subscripts to denote submatrices. For example, $\bm{A}_{\mathcal{I}, \mathcal{J}}$ denotes the submatrix of $\bm{A}$  with row indices in $\mathcal{I}$ and column indices in $\mathcal{J}$. We write $\bm{A}_{\backslash i}$ to denote the submatrix of $\bm{A}$ obtained by removing the $i^{\text{th}}$ row and $i^{\text{th}}$ column of $\bm{A}$, and write $\bm{A}_{\backslash \mX}$ to denote the submatrix of $\bm{A}$ with rows and columns corresponding to indices in set $\mathcal{X}$ removed. For example, for an $n \times n$ matrix $\bm{A}$, $\bm{A}_{\backslash n}$ denotes the submatrix $\bm{A}_{1:n-1,1:n-1}$. It should be stressed that we use $\bm{A}^{-1}_{\backslash n}$ to denote the inverse of $\bm{A}_{\backslash n}$ instead of a submatrix of $\bm{A}^{-1}$.

For a matrix $\AA\in\mathbb{R}^{n\times n}$, we use $\lambda_1(\AA), \lambda_2(\AA),\ldots,\lambda_n(\AA)$ to denote its $n$ eigenvalues. Unless otherwise stated, the matrices considered in this paper are all real matrices. We introduce two types of generalized inverse for any matrix $\AA$.

\begin{mydef}\cite{Pe55}\label{def:moore}
	The Moore-Penrose inverse $\AA^{\dagger}$ of matrix $\AA$ is the matrix satisfying the following conditions:
	\begin{align*}
		&\bm{A}\bm{A}^{\dagger}\bm{A}=\bm{A},~~\bm{A}^{\dagger}\bm{A}\bm{A}^{\dagger}=\bm{A}^{\dagger},\\
		&\bm{A}\bm{A}^{\dagger}=\left(\bm{A}\bm{A}^{\dagger}\right)^\top,~~\bm{A}^{\dagger}\bm{A}=\left(\bm{A}^{\dagger}\bm{A}\right)^\top.
	\end{align*}
\end{mydef}

\begin{mydef}\cite{Er67}
	The group inverse $\AA^{\#}$ of $\AA$ is the matrix satisfying the following conditions:
	\begin{equation*}
		\bm{A}\bm{A}^{\#}\bm{A}=\bm{A},~~\bm{A}^{\#}\bm{A}\bm{A}^{\#}=\bm{A},~~\AA\AA^{\#}=\AA^{\#}\AA.
	\end{equation*}
\end{mydef}

In the sequel, unless otherwise noted, we refer to the Moore-Penrose inverse of a matrix simply as its pseudoinverse for conciseness. If a matrix $\AA$ commutes with its pseudoinverse, i.e., $\AA^\dagger\AA=\AA\AA^\dagger$, it is an EP-matrix~\cite{CaMe09}. In addition, we write $\bm{e}_i$, $\1$, $\bm{I}$ to denote, respectively, the $i^{\text{th}}$ standard basis vector, the all-ones vector and the identity matrix of appropriate dimension.

\subsection{Random Walks on Undirected Graphs}

Let $\mathcal{G} =(\mathcal{V},\mathcal{E},\bm{W})$ denote an undirected weighted graph on the node (vertex) set $\mathcal{V} = \{1, 2,\ldots,n\}$. Its weighted adjacent matrix $\bm{W} \in \mathbb{R}^{n\times n}$ is nonnegative and symmetric, the $(i,j)^{\text{th}}$ entry of which is defined as: $\bm{W}_{i,j} = \bm{W}_{j,i} > 0$ if $(i,j) \in \mathcal{E}$, and $\bm{W}_{i,j} = \bm{W}_{j,i} = 0$ otherwise. We use $\mathcal{N}(i)$ to denote the set of neighbors of node $i$. That is, for each node $j$ in $\mathcal{N}(i)$, there exists an edge $(i,j)\in\mE$. Then the degree of node $i\in\mathcal{V}$ is $d_i=
\sum_{j\in\mathcal{N}(i)}\WW_{i,j}=\sum_{j=1}^n\bm{W}_{i,j}$. The volume of $\mathcal{G}$, denoted by $d_{\mG}$, is defined as the sum of degrees of all nodes as $d_{\mG} = \sum_{i=1}^{n}d_i$. The Laplacian matrix $\LL$ of $\mG$ is defined as $\LL=\DD-\WW$, where $\DD={\rm Diag}(\WW\1)$ is the degree diagonal matrix of  $\mG$ with the $i^{\text{th}}$ diagonal element being the degree $d_i$ of node $i$.

The Laplacian matrix $\LL$ is symmetric and positive semidefinite. All its eigenvalues are non-negative, with a unique zero eigenvalue, and the null space of $\LL$ is $\{k\1|k\in\mathbb{R}\backslash\{0\}\}$. Since $\LL$ is not invertible,  its pseudoinverse $\LM$ is of great importance. As will be shown below, $\LM$ can be used to calculate various relevant quantities, such as the resistance distance and Kirchhoff index for electrical networks. Let $\bm{J}$ denote the matrix with all entries being ones, Then we have~\cite{GhBoSa08}
\begin{align}
    &\LL\LM=\LM\LL = \II-\frac{1}{n}\bm{J},\label{eq:LLLM}\\
    &\LM=\kh{\LL+\frac{1}{n}\bm{J}}^{-1}-\frac{1}{n}\bm{J}.\label{eq:LMexp}
\end{align} 
Note that for a general symmetric matrix,
it shares the same null space as its Moore-Penrose generalized
inverse~\cite{BeGr03}. Thus, the null space of $\LM$ is also $\{k\1|k\in\mathbb{R}\backslash\{0\}\}$.

The  normalized Laplacian matrix $\widetilde{\LL}$ of $\mG$ is defined as~\cite{Ch97}:
\begin{equation}\label{eq:normL}
\widetilde{\LL}\defeq\DD^{-1/2}\kh{\DD-\AA}\DD^{-1/2}=\DD^{-1/2} \LL \DD^{-1/2}.
\end{equation}
It is easy to verify that the normalized Laplacian matrix $\widetilde{\LL}$ is symmetric and positive semidefinite~\cite{HoJo12}, with all eigenvalues being nonnegative real numbers.

A random walk on graph $\mG$ is a Markov chain with transition probability matrix $\PP = \DD^{-1}\WW$. Here we assume that $\mG$ is a connected non-bipartite graph. Then the Markov chain is irreducible~\cite{AlFi02}, with a unique stationary distribution. Let $\ppi=(\ppi_{1},\ppi_{2},\ldots, \ppi_{n})^{\top}$ be the associated vector of stationary probabilities. Then $\ppi^{\top}\PP=\ppi^{\top}$ and $\1^\top\ppi=1$, obeying $\ppi_{i}=\frac{d_i}{\sum_{k}d_k}=\frac{d_i}{d_{\mG}}$ for $i=1,2,\ldots,n$. Moreover, this Markov chain on $\mG$ is reversible, satisfying 
$\ppi_{i}\PP_{i,j}=\ppi_{j}\PP_{j,i}$ for every pair of nodes  $i$ and $j$.

One of the most important quantities about random walks is the hitting time. The hitting time $H(i,j)$ from vertex $i$ to vertex $j$ is the expected time for a random walk starting from vertex $i$ visits vertex $j$ for the first time. The commute time $C(i,j)$ between vertex $i$ and vertex $j$ is the expected time taken by a random walk originating from vertex $i$ first reaches vertex $j$ and then returns to vertex $i$, namely, $C(i,j)=H(i,j)+H(j,i)$. The hitting time $H(i, \mathcal{X})$ from vertex $i$ to a vertex group $\mathcal{X}\subseteq\mathcal{V}$ is the expected time taken by a walk arrives at any node in $\mathcal{X}$ for the first time. The commute time $C(i,\mathcal{X})$ between vertex $i$ and vertex group $\mathcal{X}$ is the expected time needed by a random walk starting from vertex $i$ first visits some vertex in $\mathcal{X}$ and then returns to vertex $i$.

A key quantity based on hitting times for random walk on graph $\mG$ is the Kemeny’s constant $K(\mG)$, which is defined as the expected time required for a random walk starting from a vertex $i$ to a destination vertex chosen randomly according to
a stationary distribution $\ppi $ of random walks on $\mG$~\cite{Hu14}. In other words, $K(\mG)=\sum_{j=1}^{n}\ppi_jH(i,j)$, which is independent of the selection of starting vertex $i$~\cite{LeLo02}, obeying relation $\sum_{j=1}^{n}\ppi_jH(i,j)=\sum_{j=1}^{n}\ppi_jH(k,j)$ for an arbitrary pair of vertices $i$ and $k$. The quantity $K(\mG)$ is characterized by the eigenvalues of transition probability matrix $\PP$ and normalized Laplacian matrix $\widetilde{\LL}$~\cite{LeLo02}.
\begin{equation}\label{eq:Kemeny}
K(\mG)=\sum_{i=1\atop\lambda_i(\PP)\neq1}^n\frac{1}{1-\lambda_i(\PP)}= \sum_{i=1\atop\lambda_i(\widetilde{\LL})\neq 0}^n\frac{1}{\lambda_i(\widetilde{\LL})}.
\end{equation}

The Kemeny's constant has found applications in diverse areas~\cite{Hu14, XuShZhKaZh20}. First, it has been used to characterize the criticality~\cite{DeMeRoSaVa18} or connectivity~\cite{BeHe19} for a graph. It was also applied to measure the efficiency of user navigation through the World Wide Web~\cite{LeLo02}. Finally, it was exploited to quantify the performance of a class of noisy formation control protocols~\cite{JaOl19}, and the efficiency of robotic surveillance in network environments~\cite{PaAgBu15}. Very recently, nearly linear time algorithms for evaluating the Kemeny’s constant have been developed~\cite{ZhXuZh20,XuShZhKaZh20}.

The escape probability $P_{\mathrm{es}}(i,j)$ from vertex $i$ to vertex $j$ is the probability that a random walk starting at $i$ will reach $j$ before it returns to $i$. Analogously, the escape probability  $P_{\mathrm{es}}(i, \mathcal{X})$ from vertex $i$ to vertex set $\mathcal{X}$ is the probability that a random walk starting at $i$ will reach some node in set $\mathcal{X}$ before it returns to $i$.

\subsection{Harmonic Function, Electrical Network and Resistance Distance}

For graph $\mG=(\mV,\mE,\WW)$, a function $\phi: \mV\to\mathbb{R}$ defined on $\mG$ is called a harmonic function with boundary set $\mX\subset \mV$ if 
\begin{equation}\label{eq:harmonic}
	\sum_{j\in\mathcal{N}(i)}\PP_{i,j}\phi(j)=\phi(i).
\end{equation}
holds for every node $i\in\mV\backslash\mX$. The averaging in~\eqref{eq:harmonic} can be accounted for as an expectation after one jump of random walks. Thus, harmonic functions play an important role in the study of random walks and electrical networks, which has a close connection with random walks~\cite{DoSn84}. 

For any undirected weighted graph $\mG=(\mV,\mE,\WW)$, we can construct a corresponding electrical network by replacing each edge $(i,j)\in \mE$ with a resistor $r(i,j)=1/\WW_{i,j}$. Let $V(k)$ denote the electric potential at vertex $k$, and let $I(k)$ denote the amount of current injected into vertex $k$. If we apply a unit voltage between vertices $s$ and $t$, making $V(s) = 1$ and $V(t) = 0$, then the potential $V(k)$ at any vertex $k$ is a harmonic function with the boundary set $\{s,t\}$. Driven by the voltage, a current $I(s)$ will flow into the circuit from the outside source. The amount of current that flows depends upon the overall resistance in the circuit. Then, the resistance distance $\Omega(s,t)$ between vertices $s$ and $t$ is defined as $\Omega(s,t) = 1/I(s)$. The reciprocal $\zeta(s,t)$ of $\Omega(s,t)$ is called the effective conductance between $s$ and $t$. Note that if the voltage between $s$ and $t$ is multiplied by a constant, then the current $I(s)$ is multiplied by the same constant. Therefore, $\Omega(s,t)$ depends only on the ratio of the voltage between $s$ and $t$ to the current $I(s)$ flowing into the circuit.

The resistance distance $\Omega(i,j)$ is a remarkably important metric for measuring the similarity between vertices $s$ and $t$ on graph $\mG=(\mV,\mE,\WW)$~\cite{FoPiReSa07}. It can be expressed in terms of the entries of the pseudoinverse of Laplacian matrix $\LL$ as~\cite{KlRa93,FoPiReSa07}:
\begin{equation*}
	\Omega(i,j)=\left(\ee_i-\ee_j\right)^{\top}\LM\left(\ee_i-\ee_j\right).
\end{equation*}
The resistance distance $\Omega(i,j)$ can also be expressed in terms of the diagonal elements of the inverse for submatrices of $\LL$ as follows~\cite{IzKeWu13}:
\begin{equation}\label{eq:Omegaij}
\Omega(i,j)=\big(\LL^{-1}_{\backslash i}\big)_{j,j}=\big(\LL^{-1}_{\backslash j}\big)_{i,i}.
\end{equation}

Since electrical networks have been found to have interesting analogies of random walks in undirected graphs~\cite{DoSn84}, one can present a precise characterization of effective resistance in electrical networks in terms of random walks on corresponding graphs. For example, the resistance distance $\Omega(i,j)$ between a pair of vertices $i$ and $j$ in graph $\mG=(\mV,\mE,\WW)$ encodes their commute time $C(i,j)$~\cite{ChRaRuSmTi89}: 
\begin{equation*}
	d_{\mG}\Omega(i,j) = C(i,j).
\end{equation*}
Moreover, effective resistance can also be interpreted in terms of escape probability. 
It was shown in~\cite{DoSn84} that there exists an elegant connection between effective conductance $\zeta(i,j)$ and escape probability $P_{\mathrm{es}}(i,j)$ as $\zeta(i,j) = d_i\,P_{\mathrm{es}}(i,j)$. Considering $\zeta(i,j)=1/\Omega(i,j)$ and $d_i=d_{\mG}\ppi_i$, one can build the relation between resistance distance, escape probability and stationary distribution.
\begin{proposition}
For any pair of different vertices $i$ and $j$ in graph $\mG=(\mV,\mE,\WW)$,	
\begin{equation}\label{eq:omegaes}
		\Omega(i,j) = \frac{1}{d_{\mG}\ppi_iP_{\mathrm{es}}(i,j)}.
	\end{equation}
\end{proposition}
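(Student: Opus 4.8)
The plan is to chain together the three identities already assembled in the paragraph preceding the statement; the proposition is essentially an algebraic rearrangement once the random-walk interpretation of effective conductance is in hand. First I would invoke the relation $\zeta(i,j) = d_i\,P_{\mathrm{es}}(i,j)$ established in~\cite{DoSn84}, which is the only substantive ingredient. Second, I would use that effective conductance is by definition the reciprocal of resistance distance, $\zeta(i,j)=1/\Omega(i,j)$. Third, I would substitute the stationary-distribution formula $\ppi_i = d_i/d_{\mG}$, equivalently $d_i = d_{\mG}\,\ppi_i$, which holds since $\mG$ is connected and the walk is reversible. Putting these together gives
\begin{equation*}
\frac{1}{\Omega(i,j)} = \zeta(i,j) = d_i\,P_{\mathrm{es}}(i,j) = d_{\mG}\,\ppi_i\,P_{\mathrm{es}}(i,j),
\end{equation*}
and reciprocating the two ends yields~\eqref{eq:omegaes}. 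Reciprocation is legitimate because $i\neq j$ in a connected graph forces $P_{\mathrm{es}}(i,j)>0$ (some path leaves $i$ and reaches $j$ before returning), while $d_{\mG}>0$ and $\ppi_i>0$, so no quantity vanishes.

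Since the heavy lifting is outsourced to the cited conductance–escape identity, there is no genuine obstacle at the level of this proposition; the work is bookkeeping of definitions. The one place where a reader might want more detail is the identity $\zeta(i,j) = d_i\,P_{\mathrm{es}}(i,j)$ itself, so if a self-contained derivation were desired I would spell that out as follows. Take the harmonic potential $V$ with boundary values $V(i)=1$ and $V(j)=0$; then $V(k)$ equals the probability, started from $k$, of hitting $i$ before $j$, so $1-V(k)$ is the probability of hitting $j$ before $i$. The current leaving $i$ is
\begin{equation*}
I(i)=\sum_{k\in\mathcal{N}(i)}\frac{V(i)-V(k)}{r(i,k)}=\sum_{k\in\mathcal{N}(i)}\WW_{i,k}\,(1-V(k))=d_i\sum_{k\in\mathcal{N}(i)}\PP_{i,k}\,(1-V(k))=d_i\,P_{\mathrm{es}}(i,j),
\end{equation*}
where the last sum is exactly the one-step escape probability from $i$ to $j$. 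Finally $\zeta(i,j)=I(i)/(V(i)-V(j))=I(i)$ closes the argument.

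In short, the only conceptual content lives in the harmonic-function computation above (or its citation), and the proposition follows from it by two substitutions and one reciprocal, so I expect the proof to be short and the verification of positivity of $P_{\mathrm{es}}(i,j)$ to be the only subtlety worth a remark.
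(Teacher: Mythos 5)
Your argument is correct and is essentially the same as the paper's: the proposition is stated there as an immediate consequence of the cited identity $\zeta(i,j)=d_i\,P_{\mathrm{es}}(i,j)$ together with the substitutions $\zeta(i,j)=1/\Omega(i,j)$ and $d_i=d_{\mG}\ppi_i$, which is exactly your chain. Your optional self-contained derivation of the conductance--escape identity via the harmonic potential is also correct (and consistent with the paper's later formula $P_{\mathrm{es}}(i,j)=1-\sum_{k\neq i}\PP_{i,k}\phi_{i,j}(k)$), but the paper simply cites this fact rather than reproving it.
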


In addition to commute time and escape probability, many other quantities about random walks are related to some corresponding quantities of electrical networks. Let $\phi_{i,j}(k)$ denote the probability that a random walk starting from vertex $k$ will visit vertex $i$ before reaching $j$. Then, $\phi_{i,j}(\cdot)$ is a harmonic function with boundary set $\{i,j\}$. It has been known that $\phi_{i,j}(k)$ equals the voltage $V_k$ at vertex $k$ when a unit voltage is applied between $i$ and $j$~\cite{DoSn84}. By definition of escape probability, we have
\begin{equation*}
     P_{\mathrm{es}}(i,j) = 1-\sum\limits_{k\neq i} \bm{P}_{i,k} \phi_{i,j}(k).
\end{equation*}
Thus, for any pair of different  vertices $i$ and $j$,  the resistance distance $\Omega(i,j)$ can be alternatively expressed in the following way:
\begin{equation}\label{eq:omegaesPhi}
		\Omega(i,j) = \frac{1}{d_{\mG}\ppi_i \left(1-\sum\limits_{k\neq i} \bm{P}_{i,k} \phi_{i,j}(k)\right)}.
\end{equation}

We can also define the resistance distance $\Omega(i,\mX)$ between a vertex $i$ and a set of vertices $\mX$ in graph $\mG=(\mV,\mE,\WW)$. For this purpose, we treat $\mG$ as an electrical network, where all vertices in $\mX$ are grounded. Thus, those vertices in $\mX$ always have voltage $0$. The resistance distance $\Omega(i,\mX)$ is defined as the voltage of vertex $i$ when a unit current enters the network $\mG$ at vertex $i$ and leaves it at nodes $\mX$. For a random walk on an undirected graph $\mathcal{G}$, let $\phi_{i,\mX}(k)$ denote the probability that the walk starting at vertex $k$ reaches vertex $i$ before visiting any vertex in set $\mathcal{X}$. Then, $\phi_{i,\mX}(k)$ is equal to the voltage at vertex $k$ when a unit current is injected in vertex $i$ and extracted in nodes belonging to $\mathcal{X}$. It has been shown that this voltage equals $\ee_i^{\top}\LL_{\backslash \mX}^{-1}\ee_i$~\cite{LiPeShYiZh19}. Thus, we have
\begin{equation*}
\Omega(i,\mX)=\kh{\LL_{\backslash \mX}^{-1}}_{i,i},
\end{equation*} 
which is consistent with~\eqref{eq:Omegaij} when  $\mX$ includes only one node $j$.
The resistance distance $\Omega(i,\mX)$ can also be interpreted in terms of the escape probability $P_{\mathrm{es}}(i,\mX)$ of random walks as
\begin{equation*}
\Omega(i,\mX) = \frac{1}{d_{\mG}\ppi_iP_{\mathrm{es}}(i,\mX)}.
\end{equation*}

Besides the resistance distance itself, many other important quantities based on resistance distances have been defined and studied, such as the resistance distance of a single vertex or vertex set~\cite{StZe89,ShYiZh18}, the Kirchhoff index~\cite{KlRa93}, and the multiplicative degree-Kirchhoff index~\cite{ChZh07}.
\begin{mydef}\cite{BoFr13}\label{def:Omegai}
For a weighted undirected graph $\mG=(\mV,\mE,\WW)$,  the resistance distance of vertex $i$ is defined as 
	\begin{equation*}
		\Omega(i) = \sum_{j\in\mV}\Omega(i,j).
	\end{equation*}
\end{mydef}
\begin{mydef}\cite{LiPeShYiZh19}\label{def:OmegaX}
For a weighted undirected graph $\mG=(\mV,\mE,\WW)$, the resistance distance $\Omega(\mathcal{X})$ of a vertex group $\mathcal{X}\subseteq\mathcal{V}$ is
defined as
	\begin{equation*}
		\Omega(\mathcal{X}) = \sum_{j\in\mV} \Omega(i,\mathcal{X}).
	\end{equation*}
\end{mydef}

Both the resistance distance of a single vertex $i$ and the resistance distance of a vertex group $\mathcal{X}$ can be expressed in terms of the entries of the pseudoinverse of Laplacian matrix.
\begin{proposition}\label{thm:unOmegai}\cite{BoFr13}
For a weighted undirected graph $\mG=(\mV,\mE,\WW)$ with $n$ vertices, the resistance distance $\Omega(i)$ of any vertex $i\in\mV$ can be expressed as
	\begin{equation*}
		\Omega(i)=n\LM_{i,i} + 	\trace{\LM}.
	\end{equation*}
\end{proposition}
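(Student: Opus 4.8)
The plan is to substitute the known pseudoinverse expression for the pairwise resistance distance into the sum defining $\Omega(i)$, and then simplify using the structural properties of $\LM$ already recorded above.

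First I would recall the identity $\Omega(i,j)=\kh{\ee_i-\ee_j}^{\top}\LM\kh{\ee_i-\ee_j}$ and expand this quadratic form. Because $\LL$ is symmetric, its pseudoinverse $\LM$ is symmetric as well, so $\LM_{i,j}=\LM_{j,i}$, and the expansion collapses to $\Omega(i,j)=\LM_{i,i}+\LM_{j,j}-2\LM_{i,j}$. Substituting this into $\Omega(i)=\sum_{j\in\mV}\Omega(i,j)$ and splitting the summation into three pieces, the first piece contributes $n\LM_{i,i}$, since $\LM_{i,i}$ is independent of $j$ and there are $n$ vertices; the second piece is $\sum_{j=1}^{n}\LM_{j,j}=\trace{\LM}$ by the definition of the trace. (Note that the $j=i$ term contributes $\LM_{i,i}+\LM_{i,i}-2\LM_{i,i}=0$, consistent with $\Omega(i,i)=0$, so including it in the sum is harmless.)

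The only remaining piece is the cross term $-2\sum_{j=1}^{n}\LM_{i,j}=-2\kh{\LM\1}_{i}$. The key observation is that this term vanishes: since $\LM$ shares the null space $\{k\1\mid k\in\mathbb{R}\backslash\{0\}\}$ of $\LL$, we have $\LM\1=\bm{0}$, i.e.\ every row sum of $\LM$ is zero. Collecting the three contributions then yields $\Omega(i)=n\LM_{i,i}+\trace{\LM}$, as claimed. The main (and essentially only) obstacle is recognizing that the off-diagonal cross term is exactly a row sum of $\LM$, which is annihilated by the null-space property; once this is identified, everything else is routine bookkeeping.
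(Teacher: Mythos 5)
Your proof is correct, and it follows essentially the same route the paper uses for the directed generalization of this statement (Proposition~\ref{thm:OmegaiL}): expand $\Omega(i,j)$ via the entries of $\LM$, sum over $j$ to get $n\LM_{i,i}+\trace{\LM}$ plus cross terms, and kill the cross terms using the fact that $\1$ lies in the null space of $\LM$ (equivalently, that the row sums of $\LM$ vanish). The only cosmetic difference is that you exploit the symmetry $\LM_{i,j}=\LM_{j,i}$ to merge the two cross terms into one, whereas the paper's directed argument must handle $\sum_j\LM_{i,j}$ and $\sum_j\LM_{j,i}$ separately; both reduce to the same null-space observation.
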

\begin{proposition}\label{thm:unOmegaX}\cite{LiPeShYiZh19}
For a weighted undirected graph $\mG=(\mV,\mE,\WW)$, the resistance distance $\Omega(\mathcal{X})$ of any vertex set $\mathcal{X} \subset \mV$ can be expressed as
\begin{equation*}
		\Omega(\mathcal{X}) = \Tr\left(\LL_{\backslash \mX}^{-1}\right).
	\end{equation*}
\end{proposition}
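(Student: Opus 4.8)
The plan is to reduce everything to the per-vertex formula $\Omega(i,\mX)=\kh{\LL_{\backslash \mX}^{-1}}_{i,i}$ already established for the vertex-to-set resistance distance in the electrical-network discussion, and then recognize the resulting diagonal sum as a trace. First I would unwind Definition~\ref{def:OmegaX}, writing $\Omega(\mX)=\sum_{j\in\mV}\Omega(j,\mX)$, and split the index set as $\mV=\mX\cup(\mV\setminus\mX)$.

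For the vertices $j\in\mX$ I would argue that $\Omega(j,\mX)=0$. In the electrical-network interpretation underlying the definition of $\Omega(j,\mX)$, every vertex of $\mX$ is grounded and hence held at potential $0$; since $j$ itself belongs to $\mX$, its potential is identically $0$ no matter where the unit current is injected, so $\Omega(j,\mX)=0$. Thus these terms contribute nothing, and the sum collapses to $\Omega(\mX)=\sum_{j\in\mV\setminus\mX}\Omega(j,\mX)$.

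For the remaining vertices $j\in\mV\setminus\mX$ I would substitute the known expression $\Omega(j,\mX)=\kh{\LL_{\backslash \mX}^{-1}}_{j,j}$, giving $\Omega(\mX)=\sum_{j\in\mV\setminus\mX}\kh{\LL_{\backslash \mX}^{-1}}_{j,j}$. The final step is to observe that $\LL_{\backslash \mX}^{-1}$ is exactly the inverse of the principal submatrix whose rows and columns are indexed by $\mV\setminus\mX$, so its diagonal entries are precisely the numbers $\kh{\LL_{\backslash \mX}^{-1}}_{j,j}$ for $j\in\mV\setminus\mX$; summing them is by definition $\Tr(\LL_{\backslash \mX}^{-1})$, which yields the claim.

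The only point requiring care — and the one I would flag as the main obstacle — is that the statement implicitly presumes $\LL_{\backslash \mX}$ to be invertible, so that $\LL_{\backslash \mX}^{-1}$ and its trace are well defined. This follows from the spectral structure of $\LL$ recalled earlier: for a connected graph $\LL$ is positive semidefinite with one-dimensional kernel spanned by $\1$, so deleting the rows and columns of any nonempty $\mX$ leaves a strictly positive-definite (grounded-Laplacian) principal submatrix. Concretely, if $\xx^{\top}\LL_{\backslash \mX}\xx=0$, extending $\xx$ by zeros on $\mX$ produces a vector in $\ker\LL=\setof{k\1}$ that vanishes on the nonempty set $\mX$, forcing $\xx=\0$; hence $\LL_{\backslash \mX}$ is invertible. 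I would insert this one-line justification before invoking $\kh{\LL_{\backslash \mX}^{-1}}_{j,j}$, since both Proposition~\ref{thm:unOmegaX} and the preceding formula for $\Omega(i,\mX)$ depend on it.
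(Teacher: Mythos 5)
Your proposal is correct and follows essentially the same route the paper takes: the paper states this proposition as a cited result without proof, but it derives $\Omega(i,\mX)=\left(\LL_{\backslash \mX}^{-1}\right)_{i,i}$ in the immediately preceding discussion and (in the proof of the directed analogue, Theorem~\ref{GroupOmegaD}) obtains the trace formula exactly as you do, by summing these diagonal entries over $i\in\mV\setminus\mX$ after noting the terms with $i\in\mX$ contribute nothing. Your added justification that $\LL_{\backslash\mX}$ is positive definite (hence invertible) is a sound and worthwhile supplement, matching in spirit the role played by Lemma~\ref{thm:invertL} in the directed setting.
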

The resistance distance $\Omega(i)$ of vertex $i$ can be used to measure the importance of $i$, which is equal to the information centrality~\cite{StZe89,ShYiZh18}. Analogously, the resistance distance  $\Omega(\mX)$ has been applied to quantify the importance of nodes in $\mX$~\cite{LiPeShYiZh19}. 

We continue to introduce two quantities defined on the basis of resistance distances, the Kirchhoff index~\cite{KlRa93} and the multiplicative degree-Kirchhoff index~\cite{ChZh07}, both of which are graph invariants.  

\begin{mydef}\cite{KlRa93}
	For a weighted undirected graph $\mG=(\mV,\mE,\WW)$, the Kirchhoff index $R(\mathcal{G})$  of $\mG$ is defined as the sum of the resistance distances over all pairs of nodes in $\mV$: 
	\begin{equation*}
		R(\mathcal{G})=\sum_{i,j=1\atop i<j}^{N}\Omega(i,j).
	\end{equation*}
\end{mydef}


The multiplicative degree-Kirchhoff index~\cite{ChZh07} is a modification of the Kirchhoff index, which is defined as follows. 
\begin{mydef}\cite{ChZh07}
For a weighted undirected graph $\mG=(\mV,\mE,\WW)$, the multiplicative degree-Kirchhoff index $R^*(\mathcal{G})$  of $\mG$ is defined as the weighted sum of the resistance distances over all pairs of nodes in $\mV$: 
	\begin{equation*}
		R^*(\mathcal{G})=\sum_{i,j=1\atop i<j}^{N}(d_id_j)\Omega(i,j).
	\end{equation*}
\end{mydef}
It has been shown~\cite{ChZh07} that the multiplicative degree-Kirchhoff index $R^*(\mathcal{G})$ of a graph $\mathcal{G}$ is equal to $2d_{\mG}$ times the Kemeny constant of the graph. 


\section{Resistance distance on directed graphs}

In this section, we present a generalization of effective resistance for strongly connected directed graphs, which is a natural extension. Moreover, we introduce the Laplacian matrix for directed graphs and express the effective resistance in terms of the pseudoinverse of Laplacian matrix. Since the notion of electrical networks is inherently related to undirected graphs, we define the resistance distance for directed graphs based on random walks.

\subsection{Random Walks on Directed Graphs}

Let $\mG =(\mV, \mE, \WW)$ be a weighted directed graph (digraph) with vertex set $\mV = \{1, 2,\ldots,n\}$, edge set $\mE$, and nonnegative weighted adjacent matrix $\WW$. In general, $\WW$ is asymmetric, whose entry $\bm{W}_{i,j}$ is defined in the following way: $\bm{W}_{i,j} > 0$ if there is a directed edge (or arc) $\langle i, j\rangle$ in $\mE$ pointing to $j$ from $i$, $\bm{W}_{i,j}= 0$ otherwise. For any vertex $i\in\mV$, its out-degree is defined as $d_i^{+}=\sum_{j=1}^{n}\bm{W}_{i,j}$, and its in-degree is defined as $d_i^-=\sum_{j=1}^{n}\bm{W}_{j,i}$. Though $d^{+}_i$ is generally not equal to $d^-_i$, the relation $d_{\mG}=\sum_{i=1}^{n}d_i^{+}=\sum_{i=1}^{n}d_i^-=\sum_{i=1}^{n}\sum_{j=1}^{n}\bm{W}_{i,j}$ always holds. We call $d_{\mG}$ as the volume of the digraph $\mathcal{G}$. In the sequel, unless otherwise noted, we refer to the out-degree of a vertex simply as its degree and use $d_i$ for $d_i^+$ for conciseness.

For a digraph $\mG =(\mV, \mE, \WW)$, let $\bm{D}={\rm Diag}(\bm{W}\1)$ be its degree diagonal matrix, and define $\bm{P}=\bm{D}^{-1}\bm{W}$. By definition, $\bm{P}$ is the transition probability matrix of a Markov chain associated with a random walk on $\mathcal{G}$. At each time step, the random walk at its current state   $i$ jumps to a neighbor vertex $j$ with probability $\bm{P}_{i,j}=\bm{W}_{i,j}/d_i$. Throughout this paper we assume the graph $\mathcal{G}$ is strongly connected, i.e., every vertex in $\mV$ is reachable from every other vertex, implying that $\bm{P}$ is irreducible. Let $\ppi=(\ppi_{1},\ppi_{2},\ldots, \ppi_{n})^{\top}$ be the unique vector presenting the stationary distribution of the random walk on digraph $\mG$, satisfying $\ppi^{\top}\bm{P}=\ppi^{\top}$ and $\ppi^{\top}\1=\1$. Perron-Frobenius theory guarantees that $\ppi$ exists and its entries are strictly positive~\cite{HoJo94}. Let $\bm{\Pi}={\rm Diag}(\ppi)$ be the diagonal matrix with the entries of $\ppi$ on the diagonal.  

Note that for random walks on a digraph, the hitting time, the commute time, the escape probability, and the Kemeny's constant can also be defined as in the case of undirected graphs. Moreover,~\eqref{eq:Kemeny} still holds. In the case without incurring confusion, we represent relevant quantities for random walks on a digraph $\mG$ by using the same notations as those in undirected graphs. 

\subsection{Effective Resistance  between a Pair of Vertices}

Since electric networks cannot be constructed for digraphs, we define resistance distance on digraphs using the concept of escape probability for random walks by extending~\eqref{eq:omegaes} to digraphs.
\begin{mydef}\label{def:omega}
For a digraph $\mG=(\mV,\mE,\WW)$,  the resistance distance $\Omega(i,j)$ between any pair of vertices $i$ and $j$ is defined as
	\begin{equation*}
		\Omega(i,j) = \left\{
			\begin{aligned}
				&\frac{1}{d_{\mG}\ppi_iP_{\mathrm{es}}(i,j)},~~~~i\neq j,\\
				&0,~~\quad\quad\quad\quad\quad\quad i=j.
			\end{aligned}
	    \right.
    \end{equation*}
\end{mydef}

Recall that for random walks on a digraph $\mG$, $\phi_{i,j}(k)$ represents the probability that a walker starting from vertex $k$ will reach vertex $i$ before vertex $j$. We call the probability $\phi_{i,j}(k)$ as generalized voltage. It is easy to verify that the generalized voltage is still a harmonic function on directed graphs. Moreover, $P_{\mathrm{es}}(i,j)$ can be expressed in terms of  generalized voltages as $P_{\mathrm{es}}(i,j) = 1-\sum\limits_{k\neq i} \bm{P}_{i,k} \phi_{i,j}(k)$. Then as in the case of undirected graphs, the following relation holds 
\begin{equation}\label{eq:omegaesPhiD}
		\Omega(i,j) = \frac{1}{d_{\mG}\ppi_i \left(1-\sum\limits_{k\neq i} \bm{P}_{i,k} \phi_{i,j}(k)\right)}
\end{equation}
for $i \neq j$.

\begin{theorem}\label{thm:rcij}
For any pair of vertices $i$ and $j$ in digraph $\mG$,  the commute times $C(i,j)$ and the resistance $\Omega(i,j)$ obeys the following relation
	\begin{align}
		\Omega(i,j) = \frac{1}{d_{\mG}}C(i,j).
	\end{align}
\end{theorem}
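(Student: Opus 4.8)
The plan is to reduce the statement to a purely probabilistic identity about commute times and escape probabilities, and then to establish that identity by a renewal (excursion) argument that does not rely on reversibility. By Definition~\ref{def:omega}, for $i\neq j$ we have $\Omega(i,j)=1/(d_{\mG}\ppi_i P_{\mathrm{es}}(i,j))$, so proving $\Omega(i,j)=C(i,j)/d_{\mG}$ is equivalent to proving the Markov-chain identity
\begin{equation*}
C(i,j)=\frac{1}{\ppi_i\,P_{\mathrm{es}}(i,j)}.
\end{equation*}
This is the form I would target; the case $i=j$ is trivial since both sides vanish.

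First I would recast the commute time as an expected total excursion length. By the strong Markov property, $C(i,j)=H(i,j)+H(j,i)=\mathbb{E}_i[\tau_j]+\mathbb{E}_j[\tau_i]$ is exactly the expected time for a walk launched at $i$ to first reach $j$ and then return to $i$, where $\tau_j,\tau_i$ denote first hitting times. I would decompose the trajectory into successive excursions from $i$, i.e.\ the walk segments between consecutive visits to $i$. These excursions are i.i.d.\ by the strong Markov property at the return times to $i$; let $L_k$ be the length of the $k$-th excursion, and call an excursion a \emph{success} if it reaches $j$ before returning to $i$, which by definition occurs with probability $p\defeq P_{\mathrm{es}}(i,j)$. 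Let $N$ be the index of the first successful excursion. The key observation is that the commute terminates exactly at the end of excursion $N$: excursions $1,\dots,N-1$ return to $i$ without seeing $j$, while excursion $N$ reaches $j$ and the first return to $i$ after that is the end of that same excursion. Hence $C(i,j)=\mathbb{E}\big[\sum_{k=1}^{N}L_k\big]$.

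Next I would evaluate this expectation by a Wald-type identity. Since $\{N\ge k\}$ is the event that excursions $1,\dots,k-1$ all fail, it depends only on the success indicators of earlier excursions and is therefore independent of the length $L_k$; consequently
\begin{equation*}
\mathbb{E}\Big[\sum_{k=1}^{N}L_k\Big]=\sum_{k\ge 1}\mathbb{E}[L_k]\,\Pr[N\ge k]=\mathbb{E}[L]\,\mathbb{E}[N].
\end{equation*}
Here $N$ is geometric with parameter $p$, giving $\mathbb{E}[N]=1/p$, and $\mathbb{E}[L]$ is the mean length of an excursion from $i$, which is precisely the mean return time $\mathbb{E}_i[\tau_i^{+}]=1/\ppi_i$ (Kac's formula for an irreducible chain). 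Combining the two factors yields $C(i,j)=(1/\ppi_i)(1/p)=1/(\ppi_i P_{\mathrm{es}}(i,j))$, which is the desired identity, and the theorem follows after multiplying by $1/d_{\mG}$.

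I expect the main obstacle to be the careful justification of the excursion decomposition and the Wald step in the non-reversible setting, rather than any hard computation: specifically, verifying that the commute terminates exactly at the end of the first successful excursion (so that time is neither over- nor under-counted), and that the independence $\{N\ge k\}\perp L_k$ genuinely holds. Both hinge on applying the strong Markov property at the return times to $i$, which is legitimate because $\mG$ is strongly connected, so $\bm{P}$ is irreducible and, being finite, positive recurrent; this is also exactly what licenses Kac's formula $\mathbb{E}_i[\tau_i^{+}]=1/\ppi_i$. Notably, the argument uses no reversibility and so applies verbatim to digraphs.
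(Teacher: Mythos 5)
Your proof is correct and takes essentially the same approach as the paper's: both reduce the theorem to the identity $C(i,j)=1/\left(\ppi_i P_{\mathrm{es}}(i,j)\right)$ and establish it by excursion/renewal reasoning at vertex $i$ combined with Kac's formula $\mathrm{E}_i[\tau_i^{+}]=1/\ppi_i$, with no use of reversibility. The only difference is presentational: the paper writes the one-step fixed-point equation $\mathrm{E}(\sigma-\tau)=\left(1-P_{\mathrm{es}}(i,j)\right)\mathrm{E}(\sigma)$ for the return/commute times, whereas you unroll the same recursion into a geometric sum of i.i.d.\ excursion lengths and apply Wald's identity.
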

\begin{proof}
For a random walk on digraph $\mG$, let $\tau$ be the first time for the walk starting at vertex $i$ returns to vertex $i$. And let $\sigma$ be the first time for a walk starting at vertex $i$ returns to $i$ after visiting vertex $j$. It is  known that the expectation of $\tau$ is $\mathrm{E}(\tau) = \frac{1}{\ppi_i}$~\cite{No97}. By definition, the expectation of $\sigma$ is $\mathrm{E}(\sigma) = C(i,j)$. It is obvious that $\tau \leq \sigma$ and the probability of $\tau = \sigma$ is exactly the escape probability $P_{\mathrm{es}}(i,j)$. Furthermore, for the case $\tau < \sigma$, after the first $\tau$ step jumpings, the walk will continue to jump from $i$ until it visits $j$ and then returns to $i$. Thus, we have $\mathrm{E}(\sigma-\tau) = \kh{1-P_{\mathrm{es}}(i,j)}\mathrm{E}(\sigma)$, which leads to
	\begin{equation*}
		P_{\mathrm{es}}(i,j)=\frac{1}{\ppi_iC(i,j)},
	\end{equation*}
Then, the assertion follows by Definition~\ref{def:omega}.
\end{proof}
Theorem~\ref{thm:rcij} extends the results for undirected graphs~\cite{ChRaRuSmTi89}.

On the basis of above-defined resistance distance between a pair of vertices on digraphs, we can further define the resistance distance for each vertex $i\in\mathcal{V}$ and related Kirchhoff indices for digraphs.
\begin{mydef}
	For a digraph $\mathcal{G}=\left(\mathcal{V},\mathcal{E},\WW\right)$, the resistance distance of vertex $i\in\mathcal{V}$ is 
	\begin{equation}
		\Omega(i)=\sum_{j\in\mathcal{V}}\Omega(i,j).
	\end{equation}
\end{mydef}
As for unweighted graphs, the resistance distance $\Omega(i)$ on directed graphs can be used to quantify the importance of vertex $i$. We will show later, the smaller the resistance distance $\Omega(i)$, the more important vertex $i$ is.

We proceed to extend the definitions of the Kirchhoff index and multiplicative degree-Kirchhoff index to digraphs. 
\begin{mydef}
	For a weighted digraph $\mG=(\mV,\mE,\WW)$ with stationary distribution $\ppi$ for random walks, the Kirchhoff index $\mathrm{R}(\mG)$ and the multiplicative Kirchhoff index $\mathrm{R}^*(\mG)$ are defined, respectively, as 
	\begin{equation}\label{eq:ki}
		\mathrm{R}(\mG) = \sum_{i,j=1\atop i<j}^n\Omega(i,j)
	\end{equation}
and
    \begin{equation}\label{eq:mki}
    	\mathrm{R}^*(\mG) = d_{\mG}^2\sum_{i,j=1\atop i<j}^n\ppi_i\ppi_j\Omega(i,j).
    \end{equation}
\end{mydef} 
It can be easily verified that the Kirchhoff index and multiplicative Kirchhoff index for digraphs subsume the two indices on undirected graphs as special cases.

\subsection{Effective Resistance  between a Vertex and a Vertex Group}

By using the escape probability, we can also define the resistance distance between a vertex and a group of vertices on digraphs.  
\begin{mydef}\label{thm:defrix}
For a digraph $\mathcal{G}=\left(\mathcal{V},\mathcal{E},w\right)$,  the resistance distance $\Omega(i,\mathcal{X})$ between a vertex $i\in\mathcal{V}$ and a nonempty vertex group $\mathcal{X}\subset\mathcal{V}$ with $i\notin\mathcal{X}$ is defined as
\begin{align}\label{eq:defrix}
\Omega(i,\mathcal{X}) = \frac{1}{d_{\mG}\ppi_iP_{\mathrm{es}}(i,\mX)}.
\end{align}
\end{mydef}
For random walks on a digraph $\mG$, we call the probability $\phi_{i, \mathcal{X}}(k)$ as the generalized voltage at vertex $k$. It is not difficult to verify that the escape probability $P_{\mathrm{es}}(i, \mathcal{X})$ satisfies $P_{\mathrm{es}}(i, \mathcal{X}) =  1-\sum\limits_{j\neq i} \bm{P}_{i,j} \phi_{i,\mathcal{X}}(j) $, we have 
\begin{equation}\label{eq:omegaesiXPhiD}
		\Omega(i, \mathcal{X}) = \frac{1}{d_{\mG}\ppi_i \left( 1-\sum\limits_{j\neq i} \bm{P}_{i,j} \phi_{i,\mathcal{X}}(j) \right)}.
\end{equation}

\begin{theorem}\label{thm:rciX}
For a vertex $i$ and a set $\mathcal{X}$ of vertices in digraph $\mG$,  the commute times $C(i,\mathcal{X})$ and the resistance distance $\Omega(i,\mathcal{X})$ obeys the following relation
	\begin{align}
		\Omega(i,\mathcal{X}) = \frac{1}{d_{\mG}}C(i,\mathcal{X}).
	\end{align}
\end{theorem}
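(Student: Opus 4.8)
The plan is to reduce the claim to the escape-probability identity $P_{\mathrm{es}}(i,\mathcal{X}) = \frac{1}{\ppi_i\,C(i,\mathcal{X})}$ and then invoke Definition~\ref{thm:defrix}, following the template of the proof of Theorem~\ref{thm:rcij} with the single target $j$ replaced by the target set $\mathcal{X}$. Indeed, once the identity is in hand, substituting it into $\Omega(i,\mathcal{X}) = \frac{1}{d_{\mG}\ppi_i P_{\mathrm{es}}(i,\mathcal{X})}$ immediately cancels the factor $\ppi_i$ and yields $\Omega(i,\mathcal{X}) = \frac{1}{d_{\mG}}C(i,\mathcal{X})$.

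To establish the identity, I would introduce two stopping times for a random walk launched at $i$: let $\tau$ be the first return time to $i$, and let $\sigma$ be the first time the walk returns to $i$ after having visited some vertex of $\mathcal{X}$. By the standard mean-return-time formula for an irreducible chain with stationary distribution $\ppi$ we have $\mathrm{E}(\tau) = 1/\ppi_i$, and by the definition of the commute time to a set we have $\mathrm{E}(\sigma) = C(i,\mathcal{X})$. Since reaching $\mathcal{X}$ and coming back to $i$ can only take at least as long as merely returning to $i$, we always have $\tau \le \sigma$, and the event $\{\tau = \sigma\}$ occurs exactly when the walk meets $\mathcal{X}$ before its first return to $i$, that is, with probability $P_{\mathrm{es}}(i,\mathcal{X})$.

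The crux is the regeneration step. Conditioned on $\{\tau < \sigma\}$, at time $\tau$ the walk sits at $i$ without yet having touched $\mathcal{X}$, so by the strong Markov property the residual duration $\sigma - \tau$ is distributed as a fresh copy of $\sigma$, while conditioned on $\{\tau = \sigma\}$ the residual vanishes. Taking expectations gives $\mathrm{E}(\sigma - \tau) = \left(1 - P_{\mathrm{es}}(i,\mathcal{X})\right)\mathrm{E}(\sigma)$, and since $\mathrm{E}(\sigma - \tau) = \mathrm{E}(\sigma) - \mathrm{E}(\tau)$ this rearranges to $\mathrm{E}(\tau) = P_{\mathrm{es}}(i,\mathcal{X})\,\mathrm{E}(\sigma)$, i.e. $1/\ppi_i = P_{\mathrm{es}}(i,\mathcal{X})\,C(i,\mathcal{X})$, which is the desired identity.

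I expect the main obstacle to be a clean justification of the regeneration identity $\mathrm{E}(\sigma - \tau) = (1 - P_{\mathrm{es}}(i,\mathcal{X}))\,\mathrm{E}(\sigma)$: one must argue, via the strong Markov property at the stopping time $\tau$, that on the event that the walk has returned to $i$ without visiting $\mathcal{X}$ its future is an independent copy of the original excursion, so that the expected remaining time equals $\mathrm{E}(\sigma) = C(i,\mathcal{X})$. Everything else is a direct transcription of the single-target argument, since neither the mean-return-time formula $\mathrm{E}(\tau) = 1/\ppi_i$ nor the inequality $\tau \le \sigma$ is sensitive to whether the target is a single vertex or a set.
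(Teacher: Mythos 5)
Your proposal is correct and is exactly the argument the paper intends: its proof of Theorem~\ref{thm:rciX} just says it is similar to that of Theorem~\ref{thm:rcij}, and you have carried out precisely that adaptation (stopping times $\tau\le\sigma$, $\mathrm{E}(\tau)=1/\ppi_i$, the regeneration identity $\mathrm{E}(\sigma-\tau)=(1-P_{\mathrm{es}}(i,\mathcal{X}))\mathrm{E}(\sigma)$, then Definition~\ref{thm:defrix}). Your explicit appeal to the strong Markov property at $\tau$ is in fact a cleaner justification of the regeneration step than the paper spells out.
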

\begin{proof}
The proof is similar to that of Theorem~\ref{thm:rcij}.
\end{proof}

We now introduce the notion of resistance distance of a vertex group $\mathcal{X}$.
\begin{mydef}\label{thm:centdef}
For a vertex group $\mathcal{X}\subseteq\mathcal{V}$ in a connected directed weighted graph $\mathcal{G} =(\mathcal{V}, \mathcal{E},\WW)$, its resistance distance $\Omega(\mathcal{X})$ is defined as
\begin{equation*}
\Omega(\mathcal{X}) = \sum_{i=1}^n \Omega(i,\mathcal{X}). 
\end{equation*}
\end{mydef}
By Definition~\ref{thm:centdef}, the resistance distance $\Omega(\mathcal{X})$ is the sum of the resistance distance $\Omega(i,\mX)$ between vertex group $\mX$ and all vertices $i\in \mV$. As will be shown later, $\Omega(\mathcal{X})$ can be used to identify the importance of the vertices in $\mathcal{X}$ as a group. The smaller the quantity $\Omega(\mathcal{X})$, the more important the vertices in $\mathcal{X}$. Thus, the reciprocal $1/\Omega(\mathcal{X})$ of  $\Omega(\mathcal{X})$ is a group centrality for digraphs.

\section{Laplacian Matrix and Effective Resistance for Digraphs}

In this section, we introduce the Laplacian matrix for digraphs and study its properties by using the tools of matrix analysis. Then, we represent resistance distances and their associated quantities in terms of the pseudoinverse of Laplacian matrix for digraphs. Moreover, we will prove that two-node resistance distance defined in the proceeding section is a distance metric.  

\subsection{Definition and Properties of Laplacian Matrix }

We here introduce the notion of the Laplacian matrix for digraphs, and present a detailed analysis for its properties and those for its pseudoinverse. 

\begin{mydef}\label{Def:dilap}
For a weighted digraph $\mG=(\mV,\mE,\WW)$, let $\PP$ denote its transition probability matrix and let $\PPi=\mathrm{Diag}(\ppi)$ denote the diagonal matrix with the stationary probabilities on the diagonal. Then, the Laplacian matrix $\LL$ of $\mG$ is defined as 
	\begin{equation}\label{eq:dilap}
		\LL\defeq d_{\mG}\PPi(\II-\PP).
	\end{equation}
\end{mydef}
It is easy to verify that $\LL=\bm{D}-\bm{W}$ when the graph $\mathcal{G}$ is undirected. Thus, the Laplacian matrix for digraphs is a natural extension of that for undirected graphs. Note that in~\cite{BoRaZh11,LiZh12, LiZh13}, the Laplacian matrix for digraphs has been defined as $\PPi(\II-\PP)$, which does not include the Laplacian matrix for undirected graphs as a particular case, and is thus different from that in Definition~\ref{Def:dilap}. As we will show below, using the Laplacian matrix in Definition~\ref{Def:dilap}, the effective resistances and their relevant quantities have the same expression forms as those for undirected graphs.  

For a digraph, we can also define the normalized Laplacian matrix $\widetilde{\LL}$ as 
\begin{equation}\label{eq:dinormL}
	\widetilde{\LL}\defeq \frac{\PPi^{-1/2}}{\sqrt{d_{\mG}}}\LL\frac{\PPi^{-1/2}}{\sqrt{d_{\mG}}}=\PPi^{1/2}(\II-\PP)\PPi^{-1/2}.
\end{equation}
It is easy to verify that~\eqref{eq:dinormL} is reduced to~\eqref{eq:normL} when $\mathcal{G}$ is undirected. 
The normalized Laplacian matrix $\widetilde{\LL}$ was  previously introduced in~\cite{BoRaZh11,LiZh12,LiZh13}.

We next show that the Laplacian matrix given by Definition~\ref{Def:dilap} has some typical properties or identical expressions as those for undirected graphs.
\begin{lemma}\label{thm:rcL}
	The sum of each column and each row of $\LL$ equals $0$. Namely, $\LL\1= \bm{0}$ and $\1^{\top}\bm{L} = \bm{0}$.
\end{lemma}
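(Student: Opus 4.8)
The plan is to verify both identities directly from the definition $\LL = d_{\mG}\PPi(\II-\PP)$ in Definition~\ref{Def:dilap}, exploiting the two defining properties of the stationary distribution, namely $\PP\1 = \1$ (each row of the transition matrix sums to one) and $\ppi^{\top}\PP = \ppi^{\top}$ (stationarity of $\ppi$).

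First I would establish $\LL\1 = \bm{0}$. Since $\PP = \DD^{-1}\WW$ is a row-stochastic matrix, every row sums to one, so $\PP\1 = \1$, and hence $(\II-\PP)\1 = \1 - \1 = \bm{0}$. Multiplying on the left by $d_{\mG}\PPi$ gives
\begin{equation*}
	\LL\1 = d_{\mG}\PPi(\II-\PP)\1 = d_{\mG}\PPi\,\bm{0} = \bm{0}.
\end{equation*}
This handles the row-sum claim (note that $\LL\1 = \bm{0}$ is exactly the statement that each row of $\LL$ sums to zero).

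Next I would establish $\1^{\top}\LL = \bm{0}$, which is the column-sum claim. Here the relevant fact is stationarity: $\ppi^{\top}\PP = \ppi^{\top}$. Observe that $\1^{\top}\PPi = \1^{\top}\mathrm{Diag}(\ppi) = \ppi^{\top}$, since multiplying the all-ones row vector by the diagonal matrix $\PPi$ simply reads off its diagonal entries. Therefore
\begin{equation*}
	\1^{\top}\LL = d_{\mG}\,\1^{\top}\PPi(\II-\PP) = d_{\mG}\,\ppi^{\top}(\II-\PP) = d_{\mG}(\ppi^{\top} - \ppi^{\top}\PP) = \bm{0},
\end{equation*}
where the last equality uses $\ppi^{\top}\PP = \ppi^{\top}$. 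Taking the transpose gives the equivalent form $\LL^{\top}\1 = \bm{0}$ stated in the lemma.

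There is no genuine obstacle here: the two claims are dual to one another, with the row-sum identity resting on the row-stochasticity of $\PP$ and the column-sum identity resting on the stationarity of $\ppi$. The only point requiring a moment's care is keeping straight which of the two structural properties of the Markov chain is invoked on each side, and correctly commuting $\1^{\top}$ through the diagonal matrix $\PPi$ to produce $\ppi^{\top}$. The overall volume factor $d_{\mG}$ is harmless, as it multiplies a zero vector in both computations.
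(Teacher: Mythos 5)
Your proof is correct and follows essentially the same route as the paper: both arguments use the row-stochasticity $\PP\1=\1$ for the identity $\LL\1=\bm{0}$ and the stationarity $\ppi^{\top}\PP=\ppi^{\top}$ (together with $\1^{\top}\PPi=\ppi^{\top}$) for $\1^{\top}\LL=\bm{0}$. The only difference is cosmetic ordering of the algebra, so there is nothing to add.
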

\begin{proof}
We first prove  $\LL\1= \bm{0}$. By definition of  $\LL$, one has
	\begin{equation*}
		\bm{L}\1=d_{\mG}\left(\ppi-\bm{\Pi}\PP\1\right)=d_{\mG}\left(\ppi-\bm{\Pi}\1\right)=\bm{0},
	\end{equation*}
where the fact that $\PP$ is a row stochastic matrix has been used.
Similarly, we have
	\begin{equation*}
\1^{\top}\LL=d_{\mG}\left(\ppi^{\top}-\1^{\top}\PPi\PP\right)=d_{\mG}\left(\ppi^{\top}-\ppi^{\top}\PP\right).
	\end{equation*}
	Considering $\ppi^{\top}\PP=\ppi^{\top}$ we derive $\1^{\top}\LL=\bm{0}$.
\end{proof}
\begin{lemma}\label{zerosp}
	The null space of $\LL$ is $\{k\1|k\in\mathbb{R}\backslash\{0\}\}$.
\end{lemma}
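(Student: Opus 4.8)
The plan is to factor out the invertible part of $\LL$ and reduce the claim to a statement about the transition matrix $\PP$. By Definition~\ref{Def:dilap}, $\LL = d_{\mG}\PPi(\II-\PP)$. Since the graph is strongly connected, Perron--Frobenius theory guarantees that every entry of $\ppi$ is strictly positive, so $\PPi=\mathrm{Diag}(\ppi)$ is invertible; moreover $d_{\mG}>0$. Hence the scalar $d_{\mG}$ and the diagonal factor $\PPi$ contribute nothing to the kernel, and I would first argue that $\LL\xx=\bm{0}$ holds if and only if $(\II-\PP)\xx=\bm{0}$. This identifies $\ker(\LL)$ with $\ker(\II-\PP)$, i.e., with the eigenspace of $\PP$ associated with the eigenvalue $1$.

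Next I would pin down that eigenspace. Because $\PP$ is row stochastic we have $\PP\1=\1$, so $\1$ (and every multiple $k\1$) lies in $\ker(\II-\PP)$; this recovers the inclusion $\{k\1\}\subseteq\ker(\LL)$ already implicit in Lemma~\ref{thm:rcL}. The reverse inclusion is the substantive part: I must show that the eigenvalue $1$ of $\PP$ has geometric multiplicity exactly $1$. Here strong connectivity is essential --- it makes $\PP$ an irreducible nonnegative matrix with spectral radius $1$, and the Perron--Frobenius theorem then asserts that the spectral-radius eigenvalue is simple, with a one-dimensional eigenspace. Consequently $\dim\ker(\II-\PP)=1$, so $\ker(\II-\PP)=\mathrm{span}(\1)$.

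Combining the two steps gives $\ker(\LL)=\mathrm{span}(\1)=\{k\1\mid k\in\mathbb{R}\}$, which is the asserted null space (following the paper's convention of listing the nonzero multiples). The only real obstacle is the simplicity of the Perron eigenvalue; everything else is routine bookkeeping with the invertible factor $d_{\mG}\PPi$. I would appeal to the same Perron--Frobenius statement already invoked earlier in the paper for the existence and positivity of $\ppi$, since the identical irreducibility hypothesis simultaneously yields simplicity of the eigenvalue $1$.
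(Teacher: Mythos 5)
Your proof is correct. Note that the paper states Lemma~\ref{zerosp} with no proof at all, so there is no in-paper argument to compare against; your write-up supplies the missing justification. The reduction $\ker(\LL)=\ker(\II-\PP)$ via invertibility of the factor $d_{\mG}\PPi$ is sound (positivity of $\ppi$ is exactly what the paper's Perron--Frobenius citation provides), and the appeal to simplicity of the Perron root of the irreducible row-stochastic matrix $\PP$ correctly yields $\dim\ker(\II-\PP)=1$, hence $\ker(\LL)=\mathrm{span}(\1)$. If you prefer to avoid invoking simplicity of the Perron eigenvalue, there is an equally short elementary route: if $\xx=\PP\xx$, choose $i$ maximizing $\xx_i$; row-stochasticity forces $\xx_j=\xx_i$ for every $j$ with $\PP_{i,j}>0$, and strong connectivity propagates this equality to all of $\mV$, so $\xx\in\mathrm{span}(\1)$. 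Your parenthetical remark that the set written in the statement omits $\bm{0}$ (so it is, read literally, not a subspace) is also a fair observation about the paper's phrasing.
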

\begin{lemma}\label{thm:invertL-1}
The matrix $\LL - \frac{1}{n}\1\1^{\top}$ is invertible.
\end{lemma}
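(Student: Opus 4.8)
The plan is to establish invertibility of the square matrix $\bm{M}\defeq\LL-\frac{1}{n}\1\1^{\top}$ by showing that its kernel is trivial; for an $n\times n$ matrix, $\bm{M}\xx=\bm{0}\Rightarrow\xx=\bm{0}$ is equivalent to invertibility. The whole argument is a short kernel-chasing computation that leverages the two facts already in hand: that $\1$ is simultaneously a right and left null vector of $\LL$ (Lemma~\ref{thm:rcL}), and that the null space of $\LL$ is exactly the one-dimensional span of $\1$ (Lemma~\ref{zerosp}).

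Concretely, I would suppose $\bm{M}\xx=\bm{0}$ for some $\xx\in\mathbb{R}^{n}$, which rearranges to
\[
\LL\xx=\frac{1}{n}\kh{\1^{\top}\xx}\1 .
\]
Left-multiplying by $\1^{\top}$ and using $\1^{\top}\LL=\bm{0}$ from Lemma~\ref{thm:rcL}, together with $\1^{\top}\1=n$, gives
\[
0=\1^{\top}\LL\xx=\frac{1}{n}\kh{\1^{\top}\xx}\1^{\top}\1=\1^{\top}\xx ,
\]
so $\1^{\top}\xx=0$. Substituting this back yields $\LL\xx=\bm{0}$, whence $\xx$ lies in the null space of $\LL$, which by Lemma~\ref{zerosp} is the span of $\1$. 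Writing $\xx=k\1$ and combining with $\1^{\top}\xx=kn=0$ forces $k=0$, so $\xx=\bm{0}$, and $\bm{M}$ is invertible.

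I expect no serious obstacle here; the one point that must be handled carefully is that the argument genuinely uses \emph{both} $\LL\1=\bm{0}$ and $\1^{\top}\LL=\bm{0}$, so it is essential that Lemma~\ref{thm:rcL} supplies the left null vector as well as the right one. Because $\LL$ is in general non-symmetric for digraphs, I deliberately avoid any spectral or eigenvalue reasoning and work purely at the level of kernels; intuitively, the rank-one correction $\frac{1}{n}\1\1^{\top}$ restores a nonzero action exactly along the missing kernel direction $\1$ while the complementary directions are already injective under $\LL$, and the direct computation above is the cleanest way to make this rigorous without invoking symmetry.
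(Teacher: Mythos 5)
Your proof is correct and is essentially the paper's own argument: the paper also assumes $\bigl(\LL-\frac{1}{n}\1\1^{\top}\bigr)\xx=\bm{0}$, left-multiplies by $\1^{\top}$ using Lemma~\ref{thm:rcL} to get $\1^{\top}\xx=0$, deduces $\LL\xx=\bm{0}$, and then invokes Lemma~\ref{zerosp} to reach a contradiction. The only difference is presentational (you phrase it as showing the kernel is trivial rather than as a proof by contradiction), and your explicit note that the left null vector $\1^{\top}\LL=\bm{0}$ is the essential ingredient is accurate.
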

\begin{proof}
We prove the lemma by contradiction. Suppose that matrix $\LL - \frac{1}{n}\1\1^{\top}$ is not invertible, then there exists a nonzero vector $\bm{x}$ satisfying $\left(\bm{L} - \frac{1}{n}\1\1^{\top}\right)\bm{x} = \bm{0}$. Multiplying both sides by $\1^{\top}$  gives
\begin{equation*}
		\1^{\top}\left(\bm{L} - \frac{1}{n}\1\1^{\top}\right) \bm{x} = \bm{0}.
\end{equation*}
Considering $\1^{\top}\LL=\bm{0}$ and $\1^{\top}\1=n$ we obtain
\begin{equation}\label{eq:invertL-1}
		\1^{\top}\bm{x}=\bm{0}.
\end{equation}
On the other hand, by applying~\eqref{eq:invertL-1} we have $\left(\bm{L} - \frac{1}{n}\1\1^{\top}\right)\bm{x} = \bm{L}\bm{x}=\bm{0}$. According to Lemma~\ref{zerosp}, we have
	\begin{equation}\label{eq:invertL-2}
		\bm{x} = k\1, k\in \mathbb{R}\setminus{\{0\}}.
	\end{equation}
Both~\eqref{eq:invertL-1} and~\eqref{eq:invertL-2} lead to a contradiction. Therefore, $\left(\bm{L} - \frac{1}{n}\1\1^{\top}\right)$ is invertible.
\end{proof}
\begin{proposition}\label{thm:psuformula}
Let $\LM$ be the pseudoinverse of the Laplacian matrix $\LL$. Then, $\LM= \left(\bm{L} - \frac{1}{n}\1\1^{\top}\right)^{-1} + \frac{1}{n}\1\1^{\top}$.
\end{proposition}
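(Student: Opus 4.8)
The plan is to verify that the proposed matrix satisfies all four defining conditions of the Moore-Penrose inverse in Definition~\ref{def:moore}, and then to invoke the uniqueness of the pseudoinverse. Write $M \defeq \LL - \frac{1}{n}\1\1^\top$, which is invertible by Lemma~\ref{thm:invertL-1}. The essential preliminary step is to record how $M$ and $M^{-1}$ act on $\1$ from both sides. Since $\LL\1 = \bm{0}$ and $\1^\top\LL = \bm{0}$ by Lemma~\ref{thm:rcL}, and $\1^\top\1 = n$, I would first obtain $M\1 = -\1$ and $\1^\top M = -\1^\top$, from which $M^{-1}\1 = -\1$ and $\1^\top M^{-1} = -\1^\top$ follow immediately. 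Writing $\LM = M^{-1} + \frac{1}{n}\1\1^\top$, these identities also give $\LM\1 = \bm{0}$ and $\1^\top\LM = \bm{0}$.

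With these in hand, the two symmetry conditions are the heart of the argument. Using $\LL = M + \frac{1}{n}\1\1^\top$ together with $\LL\1=\bm{0}$ and $\1^\top M^{-1} = -\1^\top$, I would find $\LL\LM = \LL M^{-1} = \II - \frac{1}{n}\1\1^\top$, and symmetrically, using $M^{-1}\1 = -\1$, that $\LM\LL = M^{-1}\LL = \II - \frac{1}{n}\1\1^\top$. Both products collapse to the same symmetric matrix, so the conditions $\LL\LM = (\LL\LM)^\top$ and $\LM\LL = (\LM\LL)^\top$ hold simultaneously. Granting this, the remaining two conditions are routine: $\LL\LM\LL = (\II - \frac{1}{n}\1\1^\top)\LL = \LL$ because $\1^\top\LL = \bm{0}$, and $\LM\LL\LM = (\II - \frac{1}{n}\1\1^\top)\LM = \LM$ because $\1^\top\LM = \bm{0}$.

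The point I expect to require the most care — and the reason the undirected shortcut of~\eqref{eq:LMexp} cannot simply be quoted — is that $\LL$ is generally asymmetric for a digraph, so all four Moore-Penrose conditions genuinely have to be checked, rather than just the two that suffice in the symmetric case. What rescues the computation is that $\1$ spans both the right null space (Lemma~\ref{zerosp}) and the left null space of $\LL$: since $\1^\top\LL=\bm{0}$ and the rank of $\LL$ is $n-1$, the one-dimensional left null space must also equal $\mathrm{span}\{\1\}$. Consequently $\LL\LM$ (the orthogonal projection onto the column space) and $\LM\LL$ (the orthogonal projection onto the row space) both reduce to the projection $\II - \frac{1}{n}\1\1^\top$ onto $\{\1\}^\perp$, which is precisely what makes the symmetry conditions succeed despite the asymmetry of $\LL$. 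Once all four conditions are verified, uniqueness of the Moore-Penrose inverse yields the claimed identity.
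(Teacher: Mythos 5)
Your proposal is correct and follows essentially the same route as the paper's own proof: both verify the four Penrose conditions for the candidate matrix $\left(\LL - \frac{1}{n}\1\1^{\top}\right)^{-1} + \frac{1}{n}\1\1^{\top}$, using $\LL\1=\bm{0}$ and $\1^{\top}\LL=\bm{0}$ to derive $\left(\LL-\frac{1}{n}\1\1^{\top}\right)^{-1}\1=-\1$ and $\1^{\top}\left(\LL-\frac{1}{n}\1\1^{\top}\right)^{-1}=-\1^{\top}$, and showing that both products $\LL\LM$ and $\LM\LL$ equal the symmetric projection $\II-\frac{1}{n}\1\1^{\top}$ before concluding by uniqueness of the Moore--Penrose inverse. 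The only difference is organizational: you front-load the action of the inverse on $\1$, while the paper derives those identities inline.
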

\begin{proof}
Let $\bm{M}=\left(\bm{L} - \frac{1}{n}\1\1^{\top}\right)^{-1} + \frac{1}{n}\1\1^{\top}$. To prove $\LM=\bm{M}$, we will check the Penrose conditions. First, we prove that $\LL\bm{M}$ is symmetric. From Lemma~\ref{thm:rcL},
	\begin{align}\label{eq:lm}
		\bm{L}\bm{M} =&\bm{L}\left(\bm{L} - \frac{1}{n}\1\1^{\top}\right)^{-1} + \frac{1}{n} \bm{L}\1\1^{\top} \notag \\
		=& \left[\left(\bm{L} - \frac{1}{n}\1\1^{\top}\right) + \frac{1}{n}\1\1^{\top}\right]\left(\bm{L} - \frac{1}{n}\1\1^{\top}\right)^{-1}\notag\\
		=&\bm{I}+\frac{1}{n}\1\1^{\top}\left(\bm{L} - \frac{1}{n}\1\1^{\top}\right)^{-1}.
	\end{align} 
Again using Lemma~\ref{thm:rcL}, we have
	\begin{equation*}
		\1^{\top} \left(\bm{L}-\frac{1}{n}\1\1^{\top}\right) = -\1^{\top},
	\end{equation*}
which implies
	\begin{equation}\label{invtmp1}
		-\1^{\top} = \1^{\top} \left(\bm{L}-\frac{1}{n}\1\1^{\top}\right)^{-1}.
	\end{equation}
Both~\eqref{eq:lm} and~\eqref{invtmp1} lead to
\begin{align}\label{eq:lmn}
		\bm{L}\bm{M} =\II- \frac{1}{n} \1\1^{\top}.
	\end{align}
Thus, $\LL\bm{M}$ is symmetric. In a similar way, we next prove that $\bm{M}\LL$ is symmetric. From
	\begin{equation*}
		\left(\bm{L}-\frac{1}{n}\1\1^{\top}\right)\1= -\1,
	\end{equation*}
we have
	\begin{equation*}
		-\1= \left(\bm{L}-\frac{1}{n}\1\1^{\top}\right)^{-1} \1.
	\end{equation*}
	Therefore 
	\begin{align}\label{eq:ml}
		\bm{M}\bm{L} =&  \left[ \left(\bm{L} - \frac{1}{n}\1\1^{\top}\right)^{-1} + \frac{1}{n}\1\1^{\top} \right] \bm{L} \notag \\
		=& \left(\bm{L} - \frac{1}{n}\1\1^{\top}\right)^{-1} \left[\left(\bm{L} - \frac{1}{n}\1\1^{\top}\right) + \frac{1}{n}\1\1^{\top}\right]\notag\\
		=& \II- \frac{1}{n}\1\1^{\top},
	\end{align}
indicating that $\bm{M}\LL$ is symmetric. We continue to check  $\LL\bm{M}\LL=\LL$. Based on~\eqref{eq:lmn}, we have
	\begin{equation*}
		\bm{L}\bm{M}\bm{L} = \left(\II- \frac{1}{n}\1\1^{\top}\right) \bm{L} = \bm{L}.
	\end{equation*}
The final step is to prove  $\bm{M}\LL\bm{M}=\bm{M}$.~\eqref{eq:ml} means
	\begin{equation*}
		\bm{M}\bm{L}\bm{M}  = \left(\bm{I}-\frac{1}{n}\1\1^{\top} \right)\bm{M} = \bm{M} - \frac{1}{n}\1\1^{\top}\bm{M}.
	\end{equation*}
The term $\frac{1}{n}\1\1^{\top}\bm{M}$ is evaluated as
	\begin{align*}
		\frac{1}{n}\1\1^{\top}\bm{M}=&\frac{1}{n}\1\1^{\top} \left(\bm{L} - \frac{1}{n}\1\1^{\top}\right)^{-1}+ \frac{1}{n}\1\1^{\top} \times \frac{1}{n}\1\1^{\top}\\
		=&-\frac{1}{n}\1\1^{\top} + \frac{1}{n}\1\1^{\top}=\bm{0}.
	\end{align*}
Thus, we have
	\begin{equation*}
		\bm{M}\bm{L}\bm{M}=\bm{M}.
	\end{equation*}
Since $\bm{M}$  satisfies the Penrose conditions, it equals the pseudoinverse $\LM$ of $\LL$.
\end{proof}
\begin{corollary}\label{thm:rcL+}
The sum of each column and each row of $\LL^{\dagger}$ equals $0$. Namely, the following equalities hold
\begin{enumerate}
\item $\LL^{\dagger} \1= \bm{0}$;
\item $\1^{\top} \LL^{\dagger} = \bm{0}$;
\item $\sum\limits_{i,j\in\mathcal{V}} \LL^{\dagger}_{i,j} = 0$.
	\end{enumerate}
\end{corollary}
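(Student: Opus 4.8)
The plan is to read off all three identities directly from the closed form for the pseudoinverse established in Proposition~\ref{thm:psuformula}, namely $\LM = \bm{B}^{-1} + \frac{1}{n}\1\1^{\top}$, where I abbreviate $\bm{B} \defeq \LL - \frac{1}{n}\1\1^{\top}$ (invertible by Lemma~\ref{thm:invertL-1}). The entire argument reduces to understanding how $\bm{B}^{-1}$ acts on $\1$ from the right and on $\1^{\top}$ from the left, which is exactly what was already computed inside the proof of Proposition~\ref{thm:psuformula}.

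First I would record the two facts $\bm{B}\1 = -\1$ and $\1^{\top}\bm{B} = -\1^{\top}$. These are immediate from Lemma~\ref{thm:rcL}: since $\LL\1 = \bm{0}$ and $\1^{\top}\1 = n$, one gets $\bm{B}\1 = \LL\1 - \frac{1}{n}\1(\1^{\top}\1) = -\1$, and symmetrically $\1^{\top}\bm{B} = \1^{\top}\LL - \frac{1}{n}(\1^{\top}\1)\1^{\top} = -\1^{\top}$. Since $\bm{B}$ is invertible, applying $\bm{B}^{-1}$ yields $\bm{B}^{-1}\1 = -\1$ and $\1^{\top}\bm{B}^{-1} = -\1^{\top}$; these already appear as intermediate identities in the preceding proof (cf. equation~\eqref{invtmp1}).

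For item~(1) I then compute $\LM\1 = \bm{B}^{-1}\1 + \frac{1}{n}\1(\1^{\top}\1) = -\1 + \1 = \bm{0}$, and for item~(2) symmetrically $\1^{\top}\LM = \1^{\top}\bm{B}^{-1} + \frac{1}{n}(\1^{\top}\1)\1^{\top} = -\1^{\top} + \1^{\top} = \bm{0}$. Item~(3) is then free: the double sum equals $\sum_{i,j}\LM_{i,j} = \1^{\top}\LM\1 = \1^{\top}(\LM\1) = \1^{\top}\bm{0} = 0$, using item~(1).

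There is essentially no obstacle here; the only thing to watch is the sign bookkeeping, namely that the $+\frac{1}{n}\1\1^{\top}$ correction term in the pseudoinverse formula exactly cancels the $-\1$ coming from $\bm{B}^{-1}\1$, and likewise on the left. As an alternative that avoids even this computation, I could invoke the general facts that $\mathrm{null}(\bm{A}^{\dagger}) = \mathrm{null}(\bm{A}^{\top})$ and that $(\bm{A}^{\dagger})^{\top} = (\bm{A}^{\top})^{\dagger}$, so $\mathrm{null}\big((\bm{A}^{\dagger})^{\top}\big) = \mathrm{null}(\bm{A})$: combined with Lemma~\ref{thm:rcL}, which says $\1 \in \mathrm{null}(\LL) \cap \mathrm{null}(\LL^{\top})$, this delivers items~(1) and~(2) at once, with item~(3) following as before.
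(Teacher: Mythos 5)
Your proposal is correct and follows exactly the route the paper intends: the corollary is stated without proof immediately after Proposition~\ref{thm:psuformula}, and the identities $\left(\LL - \frac{1}{n}\1\1^{\top}\right)^{-1}\1 = -\1$ and $\1^{\top}\left(\LL - \frac{1}{n}\1\1^{\top}\right)^{-1} = -\1^{\top}$ that you cite are precisely the ones derived inside that proposition's proof (cf.~\eqref{invtmp1}), so reading off items (1)--(3) from $\LM = \left(\LL - \frac{1}{n}\1\1^{\top}\right)^{-1} + \frac{1}{n}\1\1^{\top}$ is the intended argument. Your alternative via $\mathrm{null}(\AA^{\dagger}) = \mathrm{null}(\AA^{\top})$ is also valid and slightly more general, but offers no advantage here since the explicit formula is already available.
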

\begin{corollary}
The Laplacian matrix $\LL$ is EP-matrix. 
\end{corollary}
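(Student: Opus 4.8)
The plan is to verify the EP-matrix condition directly, namely that $\LL$ commutes with its pseudoinverse: $\LL^\dagger \LL = \LL \LL^\dagger$. The crucial observation is that both products have already been computed inside the proof of Proposition~\ref{thm:psuformula}, so no new calculation is required; I would simply assemble them.

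Concretely, recall that in proving Proposition~\ref{thm:psuformula} one sets $\bm{M}=\left(\bm{L}-\frac{1}{n}\1\1^\top\right)^{-1}+\frac{1}{n}\1\1^\top$ and establishes that $\bm{M}=\LL^\dagger$. Along the way, equation~\eqref{eq:lmn} shows
\begin{equation*}
\LL\LL^\dagger=\II-\frac{1}{n}\1\1^\top,
\end{equation*}
while equation~\eqref{eq:ml} shows
\begin{equation*}
\LL^\dagger\LL=\II-\frac{1}{n}\1\1^\top.
\end{equation*}
Since the right-hand sides coincide, I would conclude $\LL\LL^\dagger=\LL^\dagger\LL$, which is precisely the defining property of an EP-matrix as stated in the preliminaries.

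I do not anticipate any genuine obstacle here: the corollary is an immediate bookkeeping consequence of the two displayed identities already derived for Proposition~\ref{thm:psuformula}. The only point worth flagging for cleanliness is that one must invoke the fact that $\bm{M}$ has been \emph{identified} with $\LL^\dagger$ (via the Penrose conditions checked in that proposition) before equating the two products; once that identification is in hand, the equality of $\LL\LL^\dagger$ and $\LL^\dagger\LL$ follows by transitivity through the common value $\II-\frac{1}{n}\1\1^\top$, completing the proof.
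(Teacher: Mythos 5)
Your proposal is correct and matches the paper's proof exactly: the paper also cites equations~\eqref{eq:lmn} and~\eqref{eq:ml} from the proof of Proposition~\ref{thm:psuformula}, which show that both $\LL\LM$ and $\LM\LL$ equal $\II-\frac{1}{n}\1\1^{\top}$, and concludes $\LL\LM=\LM\LL$. No differences worth noting.
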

\begin{proof}
Combining~\eqref{eq:lmn} and~\eqref{eq:ml} gives $\LL\LM=\LM\LL$.
\end{proof}
Since $\LL$ is EP-matrix, its Moore-Penrose inverse $\LM$ equals its group inverse $\LL^{\#}$~\cite{CaMe09}.
\begin{lemma}\label{thm:invertL}
For any nonempty node set	 $\mX\subsetneqq\mathcal{V}$,  the submatrix $\LL_{\backslash \mX}$ is invertible and the inverse  $\LL_{\backslash \mX}^{-1}$ is entrywise nonnegative.
\end{lemma}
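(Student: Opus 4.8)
The plan is to exploit the defining factorization $\LL = d_{\mG}\PPi(\II-\PP)$ and reduce the whole claim to a statement about a principal submatrix of the transition matrix $\PP$. Write $\mY \defeq \mV\setminus\mX$, which is nonempty and proper since $\emptyset\neq\mX\subsetneqq\mV$. Because $\PPi$ is diagonal, deleting the rows and columns indexed by $\mX$ interacts cleanly with the diagonal factor, and the $(i,j)$ entry of $\LL$ is simply $d_{\mG}\ppi_i(\II-\PP)_{i,j}$; restricting to $i,j\in\mY$ therefore gives
\[
\LL_{\backslash\mX} = d_{\mG}\,\PPi_{\mY}\kh{\II_{\mY}-\PP_{\mY,\mY}},
\]
where $\PPi_{\mY}$ is the diagonal matrix carrying the strictly positive stationary probabilities $\ppi_i$, $i\in\mY$ (positivity from Perron--Frobenius), and $\PP_{\mY,\mY}$ is the principal submatrix of $\PP$ indexed by $\mY$. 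Since $\PPi_{\mY}$ is invertible with positive diagonal and $d_{\mG}>0$, both the invertibility of $\LL_{\backslash\mX}$ and the entrywise nonnegativity of its inverse will follow once the corresponding facts are established for $\II_{\mY}-\PP_{\mY,\mY}$.

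The core of the argument, and the step I expect to be the main obstacle, is showing that the substochastic matrix $\PP_{\mY,\mY}$ has spectral radius strictly below $1$. It is entrywise nonnegative, and because $\mX\neq\emptyset$ removes some nonnegative entries from certain rows, each of its row sums is at most $1$, giving the easy bound $\rho(\PP_{\mY,\mY})\le 1$. To upgrade this to a strict inequality I would invoke the strong connectivity assumption: $\PP$ is an irreducible nonnegative matrix with $\rho(\PP)=1$ (it is row stochastic), and $\PP_{\mY,\mY}$ is a \emph{proper} principal submatrix because $\mX$ is nonempty. Perron--Frobenius theory then guarantees that deleting a nonempty set of states from an irreducible chain strictly shrinks the Perron root, so $\rho(\PP_{\mY,\mY})<\rho(\PP)=1$. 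Equivalently, one may argue probabilistically that the walk absorbed on first hitting $\mX$ is absorbed with probability one (since $\mX$ is reachable from every state), whence $\kh{\PP_{\mY,\mY}}^{k}\to\bm 0$ and $\rho(\PP_{\mY,\mY})<1$.

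With $\rho(\PP_{\mY,\mY})<1$ in hand, the Neumann series
\[
\kh{\II_{\mY}-\PP_{\mY,\mY}}^{-1}=\sum_{k=0}^{\infty}\kh{\PP_{\mY,\mY}}^{k}
\]
converges, which simultaneously shows that $\II_{\mY}-\PP_{\mY,\mY}$ is invertible and that its inverse is entrywise nonnegative, since every power $\kh{\PP_{\mY,\mY}}^{k}$ is nonnegative. Finally I would assemble
\[
\LL_{\backslash\mX}^{-1}=\frac{1}{d_{\mG}}\kh{\II_{\mY}-\PP_{\mY,\mY}}^{-1}\PPi_{\mY}^{-1},
\]
and note that right-multiplying the nonnegative matrix $\kh{\II_{\mY}-\PP_{\mY,\mY}}^{-1}$ by the positive diagonal matrix $\PPi_{\mY}^{-1}$ and by the positive scalar $1/d_{\mG}$ preserves entrywise nonnegativity, completing the proof. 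Everything after the spectral radius bound is routine bookkeeping; the essential input is the Perron--Frobenius monotonicity of the spectral radius under passage to a proper principal submatrix of an irreducible matrix.
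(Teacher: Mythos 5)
Your proof is correct and follows essentially the same route as the paper's: factor $\LL_{\backslash \mX}$ as $d_{\mG}\,\PPi_{\backslash\mX}(\II-\PP_{\backslash\mX})$, invert the middle factor via the Neumann series of nonnegative powers, and absorb the positive diagonal factor and scalar. The only difference is that you actually justify the crucial bound $\rho(\PP_{\backslash\mX})<1$ (via Perron--Frobenius monotonicity of the spectral radius under passage to a proper principal submatrix of an irreducible matrix, or the absorption argument), whereas the paper simply asserts $\lVert\PP_{\backslash\mX}\rVert<1$ without argument --- a claim that would even be false for the max-row-sum norm --- so your version is the more careful one.
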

\begin{proof}
Clearly, the diagonal matrix $\bm{\Pi}$ is invertible, and its inverse $\bm{\Pi}^{-1}={\rm Diag}\left(\left(\frac{1}{\ppi_{1}},\frac{1}{\ppi_{2}},\ldots,\frac{1}{\ppi_{n}}\right)^{\top}\right)$. Define
	\begin{equation}\label{eqT}
		\bm{T}\defeq\frac{1}{d_{\mG}}\left[\lim\limits_{t\to\infty}\sum_{i=0}^{t}\left(\bm{P}_{\backslash \mX}\right)^{i}\right]\bm{\Pi}_{\backslash\mX}^{-1}.
	\end{equation}
	Since $\lVert\bm{P}_{\backslash \mX}\rVert<1$, we have
	\begin{equation*}
		\lim\limits_{t \to \infty}\left(\bm{P}_{\backslash \mX}\right)^t = 0,
	\end{equation*}
	which leads to the convergence of the limit in~\eqref{eqT}:
	\begin{equation}\label{eq:lim}
\lim\limits_{t\to\infty}\sum_{i=0}^{t}\left(\bm{P}_{\backslash \mX}\right)^{i}=\left(\bm{I}-\bm{P}_{\backslash \mX}\right)^{-1}.
	\end{equation}
	Thus, the following statement holds
	\begin{equation*}
		\LL_{\backslash \mX}\bm{T}=\bm{\Pi}_{\backslash \mX}\left(\bm{I}-\bm{P}_{\backslash \mX}\right)\left[\lim\limits_{t\to\infty}\sum_{i=0}^{t}\left(\bm{P}_{\backslash \mX}\right)^{i}\right]\bm{\Pi}_{\backslash \mX}^{-1}=\bm{I},
	\end{equation*}
	which implies $\bm{T}=\LL_{\backslash \mX}^{-1}$. Due to the nonnegativity of $\frac{1}{d_{\mG}}$, $\bm{\Pi}^{-1}$ and $\bm{P}_{\backslash \mX}$, $\bm{T}$ defined by~\eqref{eqT} is entrywise nonnegative.
\end{proof}

\begin{lemma}\label{thm:L-1L+}
Let $\hat{\LL}=\begin{pmatrix}
		\LL_{\backslash k} & \bm{p} \\
		\bm{q}^{\top} & \LL_{k,k}
	\end{pmatrix}$
be a block matrix obtained from the Laplacian matrix $\LL$ associated with a directed graph by exchanging respectively, the $k^{\mathrm{th}}$ and $n^{\mathrm{th}}$ rows, and $k^{\mathrm{th}}$ and $n^{\mathrm{th}}$ columns of $\LL$   where $\bm{p}=\begin{pmatrix}
		\LL_{1:k-1,k} \\
		\LL_{k+1:n,k} 
	\end{pmatrix}$ and $\bm{q}=\begin{pmatrix}
	\LL_{k,1:k-1}^{\top} \\
	\LL_{k,k+1:n}^{\top} 
\end{pmatrix}$. Let $\hat{\LL}^{\dagger} = \begin{pmatrix}
		(\LL^{\dagger})_{\backslash k} & \bm{r} \\
		\bm{s}^{\top} & (\LM)_{k,k}
	\end{pmatrix}$ be the matrix obtained from pseudoinverse $\LL^{\dagger}$ of $\LL$ by performing the same operation on $\LL^{\dagger}$. Then the inverse of the $(n-1)\times(n-1)$ matrix $\LL_{\backslash k}$ exists and is given by 
	\begin{equation*}
		\LL_{\backslash k}^{-1}=\left(\II+ \1\1^{\top}\right) \!(\LM)_{\backslash k}\!\left(\II+ \1\1^{\top}\right)
		=\begin{pmatrix}
			\II&\! -\1
		\end{pmatrix}
		\hat{\LL}^{\dagger}
		\begin{pmatrix}
			\II\\
			-\1^{\top}
		\end{pmatrix}.
	\end{equation*}
\end{lemma}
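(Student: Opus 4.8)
The plan is to work with the symmetric permutation $\bm{S}$ that swaps indices $k$ and $n$, so that $\hat{\LL}=\bm{S}\LL\bm{S}$ and, since $\bm{S}=\bm{S}^{\top}=\bm{S}^{-1}$, the block matrix $\hat{\LL}^{\dagger}=\bm{S}\LM\bm{S}$ is genuinely the Moore--Penrose inverse of $\hat{\LL}$. Because $\bm{S}\1=\1$, every identity established for $\LL$ and $\LM$ transfers verbatim to $\hat{\LL}$ and $\hat{\LL}^{\dagger}$: from Lemma~\ref{thm:rcL} and Corollary~\ref{thm:rcL+} I get $\hat{\LL}\1=\bm{0}$, $\1^{\top}\hat{\LL}=\bm{0}$, $\hat{\LL}^{\dagger}\1=\bm{0}$, $\1^{\top}\hat{\LL}^{\dagger}=\bm{0}$, and from~\eqref{eq:lmn} I get $\hat{\LL}\hat{\LL}^{\dagger}=\II-\frac{1}{n}\1\1^{\top}$ (using $\bm{S}\1\1^{\top}\bm{S}=\1\1^{\top}$). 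The invertibility of $\LL_{\backslash k}$ is already supplied by Lemma~\ref{thm:invertL} with $\mX=\{k\}$, so it only remains to identify the inverse. Writing $\bm{A}=\LL_{\backslash k}$ and $\bm{C}=(\LM)_{\backslash k}$ for the leading blocks of $\hat{\LL}$ and $\hat{\LL}^{\dagger}$, I would read off from the vanishing row/column sums the block relations $\bm{p}=-\bm{A}\1$, $\bm{r}=-\bm{C}\1$, $\bm{s}^{\top}=-\1^{\top}\bm{C}$, together with the scalar identity $(\LM)_{k,k}=\1^{\top}\bm{C}\1$ (this last one from $\sum_{i,j}\LM_{i,j}=0$ in Corollary~\ref{thm:rcL+}, combined with the fact that row $k$ and column $k$ of $\LM$ each sum to zero).

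To identify the middle expression, I would compare the $(n-1)\times(n-1)$ leading block on both sides of $\hat{\LL}\hat{\LL}^{\dagger}=\II-\frac{1}{n}\1\1^{\top}$, which gives $\bm{A}\bm{C}+\bm{p}\bm{s}^{\top}=\II-\frac{1}{n}\1\1^{\top}$. Substituting $\bm{p}\bm{s}^{\top}=\bm{A}\1\1^{\top}\bm{C}$ turns this into $\bm{A}\kh{\II+\1\1^{\top}}\bm{C}=\II-\frac{1}{n}\1\1^{\top}$. Multiplying on the right by $\II+\1\1^{\top}$ and using $\1^{\top}\1=n-1$ (the all-ones vector here has length $n-1$), the defect cancels exactly:
\[
\bm{A}\kh{\II+\1\1^{\top}}\bm{C}\kh{\II+\1\1^{\top}}=\kh{\II-\tfrac{1}{n}\1\1^{\top}}\kh{\II+\1\1^{\top}}=\II,
\]
so $\kh{\II+\1\1^{\top}}(\LM)_{\backslash k}\kh{\II+\1\1^{\top}}$ is a right inverse of $\LL_{\backslash k}$, hence equals $\LL_{\backslash k}^{-1}$ by invertibility.

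For the second equality I would expand the right-hand block product directly. A short computation gives
\[
\begin{pmatrix}\II & -\1\end{pmatrix}\hat{\LL}^{\dagger}\begin{pmatrix}\II\\ -\1^{\top}\end{pmatrix}=\bm{C}-\bm{r}\1^{\top}-\1\bm{s}^{\top}+(\LM)_{k,k}\1\1^{\top},
\]
and substituting $\bm{r}=-\bm{C}\1$, $\bm{s}^{\top}=-\1^{\top}\bm{C}$, and $(\LM)_{k,k}=\1^{\top}\bm{C}\1$ converts the right side into $\bm{C}+\bm{C}\1\1^{\top}+\1\1^{\top}\bm{C}+\kh{\1^{\top}\bm{C}\1}\1\1^{\top}=\kh{\II+\1\1^{\top}}\bm{C}\kh{\II+\1\1^{\top}}$, which is precisely the middle expression already shown to equal $\LL_{\backslash k}^{-1}$.

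The block multiplications are routine; the two steps that actually carry the argument are the cancellation $\kh{\II-\tfrac1n\1\1^{\top}}\kh{\II+\1\1^{\top}}=\II$ and the scalar identity $(\LM)_{k,k}=\1^{\top}(\LM)_{\backslash k}\1$. I expect the main obstacle to be bookkeeping: keeping the two all-ones vectors straight (length $n$ inside the $\hat{\LL}$-relations versus length $n-1$ inside the blocks, which is what makes $\1^{\top}\1=n-1$ rather than $n$), and justifying that $\hat{\LL}^{\dagger}$ is truly the pseudoinverse of $\hat{\LL}$ rather than merely a relabelling of $\LM$ --- exactly what the symmetric permutation $\bm{S}$ guarantees.
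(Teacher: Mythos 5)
Your proposal is correct, and steps (a)--(b) of it --- deriving the block relations $\bm{p}=-\LL_{\backslash k}\1$, $\bm{r}=-(\LM)_{\backslash k}\1$, $\bm{s}^{\top}=-\1^{\top}(\LM)_{\backslash k}$, $(\LM)_{k,k}=\1^{\top}(\LM)_{\backslash k}\1$ from the vanishing row and column sums, and then expanding the block product to match $\left(\II+\1\1^{\top}\right)(\LM)_{\backslash k}\left(\II+\1\1^{\top}\right)$ --- coincide with the paper's proof. Where you genuinely diverge is in certifying that this candidate is the inverse. The paper sandwiches the candidate between two copies of $\LL_{\backslash k}$, invokes the three-factor Penrose identity $\hat{\LL}\hat{\LL}^{\dagger}\hat{\LL}=\hat{\LL}$ to conclude $\LL_{\backslash k}\,\bm{N}\,\LL_{\backslash k}=\LL_{\backslash k}$, and then cancels two factors of $\LL_{\backslash k}^{-1}$; you instead read off the leading block of the two-factor identity $\hat{\LL}\hat{\LL}^{\dagger}=\II-\frac{1}{n}\1\1^{\top}$ to get $\LL_{\backslash k}\left(\II+\1\1^{\top}\right)(\LM)_{\backslash k}=\II-\frac{1}{n}\1\1^{\top}$ and then exploit the exact cancellation $\left(\II-\frac{1}{n}\1\1^{\top}\right)\left(\II+\1\1^{\top}\right)=\II$ (valid precisely because $\1^{\top}\1=n-1$ inside the block) to exhibit a right inverse directly. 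Your route is slightly more economical --- one fewer matrix multiplication and only a single cancellation against invertibility --- and it makes visible why the rank-one correction $\1\1^{\top}$ is the right one. You are also more careful than the paper on one point it leaves implicit: by writing $\hat{\LL}=\bm{S}\LL\bm{S}$ with $\bm{S}$ a symmetric permutation, you justify that $\hat{\LL}^{\dagger}=\bm{S}\LM\bm{S}$ really is the Moore--Penrose inverse of $\hat{\LL}$, which is what licenses transporting~\eqref{eq:lmn} (and, in the paper's version, the Penrose identities) to the permuted matrices. Both arguments lean identically on Lemma~\ref{thm:invertL} for the invertibility of $\LL_{\backslash k}$.
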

\begin{proof}
From Lemma~\ref{thm:rcL}, we obtain $\LL_{\backslash k}\1+\bm{p}=\bm{0}$,  $\1^{\top}\LL_{\backslash k}+\bm{q}^{\top}=\bm{0}$, and $\LL_{k,k}=-\1^{\top}\bm{p}=\1^{\top}\LL_{\backslash k}\1$; while from Corollary~\ref{thm:rcL+}, we have $(\LM)_{\backslash k}\1+\bm{r}=\bm{0}$,  $\1^{\top}(\LM)_{\backslash k}+\bm{s}^{\top}=\bm{0}$, and $(\LM)_{k,k}=-\1^{\top}\bm{r}=\1^{\top}(\LM)_{\backslash k}\1$. Then, we derive
	\begin{align}\label{eq:NN}
\kh{\II+\oo\oo^{\top}}\LK\kh{\II+\oo\oo^{\top}}=&\LK+\LK\oo\oo^{\top}+\oo\oo^{\top}\LK+\oo\oo^{\top}\LK\oo\oo^{\top}\notag\\
		=&\LK-\bm{r}\oo^{\top}-\oo\bm{s}^{\top}+\oo\LM_{k,k}\oo^{\top}\notag\\
		=&\begin{pmatrix}
			\II& -\1
		\end{pmatrix}
		\hat{\LL}^{\dagger}
		\begin{pmatrix}
			\II\\
			-\1^{\top}
		\end{pmatrix}.
	\end{align}
Multiplying on the left and right sides of the term in the last line of~\eqref{eq:NN} by $\LL_{\backslash k}$ yields:
	\begin{align*}
		\LL_{\backslash k}\begin{pmatrix}
			\II & -\oo
\end{pmatrix}\hat{\LL}^{\dagger}\begin{pmatrix}
			\II \\
			-\oo^{\top}
		\end{pmatrix}\LL_{\backslash k}
		=&\begin{pmatrix}
			\LL_{\backslash k} & \bm{p}
		\end{pmatrix}\hat{\LL}^{\dagger}\begin{pmatrix}
			\LL_{\backslash k}\\
			\bm{q}^{\top}
		\end{pmatrix}\\
		=&\begin{pmatrix}
			\II & \bm{0}
		\end{pmatrix}\hat{\LL}\hat{\LL}^{\dagger}\hat{\LL}\begin{pmatrix}
			\II \\
			\bm{0}
		\end{pmatrix}\\
		=&\begin{pmatrix}
			\II & \bm{0}
		\end{pmatrix}\hat{\LL}\begin{pmatrix}
			\II \\
			\bm{0}
		\end{pmatrix}
		=\LL_{\backslash k}.
	\end{align*}
Lemma~\ref{thm:invertL} implies that $\LL_{\backslash k}$ is invertible. Multiplying both sides of the above equation by matrix $\LL_{\backslash k}^{-1}$ on the left and right gives the desired result.
\end{proof}

\begin{corollary}\label{thm:corL-1L+}
Let $i$, $j$, and $k$ be three vertices in a weighted digraph $\mathcal{G} =(\mathcal{V}, \mathcal{E},\WW)$. If $i \neq k$ and $j \neq k$, then 
	\begin{equation*}
		\left(\LL^{-1}_{\backslash k}\right)_{i,j} = \LL^{\dagger}_{k,k} + \LL^{\dagger}_{i,j} - \LL^{\dagger}_{i,k} - \LL^{\dagger}_{k,j}.
	\end{equation*}
\end{corollary}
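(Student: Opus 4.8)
The plan is to read the result directly off the closed form for $\LL_{\backslash k}^{-1}$ furnished by Lemma~\ref{thm:L-1L+}, namely $\LL_{\backslash k}^{-1} = \kh{\II + \1\1^{\top}}(\LM)_{\backslash k}\kh{\II + \1\1^{\top}}$, and then expand the product and extract its $(i,j)$ entry. Writing $A = (\LM)_{\backslash k}$ for brevity, expanding the symmetric sandwich gives $A + \1\1^{\top}A + A\1\1^{\top} + \1\1^{\top}A\1\1^{\top}$, so the whole computation reduces to evaluating four matrix entries at position $(i,j)$, where $i,j$ range over the original vertex labels in $\mV \setminus \{k\}$ (the hypotheses $i \neq k$, $j \neq k$ being exactly what makes these legitimate indices of the reduced matrix).

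Next I would evaluate each term using the row-, column-, and total-sum identities for $\LM$ collected in Corollary~\ref{thm:rcL+}. The first term contributes $A_{i,j} = \LL^{\dagger}_{i,j}$. For the second term, $(\1\1^{\top}A)_{i,j} = \sum_{l \neq k}\LL^{\dagger}_{l,j}$, which by $\1^{\top}\LM = \bm{0}$ (so the full column sum vanishes) equals $-\LL^{\dagger}_{k,j}$. Symmetrically, the third term $(A\1\1^{\top})_{i,j} = \sum_{l\neq k}\LL^{\dagger}_{i,l}$ equals $-\LL^{\dagger}_{i,k}$ by $\LM\1 = \bm{0}$.

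The one piece demanding slightly more care is the rank-one-on-both-sides term $(\1\1^{\top}A\1\1^{\top})_{i,j} = \sum_{l\neq k}\sum_{m\neq k}\LL^{\dagger}_{l,m}$. I would handle this double sum by inclusion–exclusion, completing it to the full grid and subtracting the deleted row and column: $\sum_{l,m \neq k}\LL^{\dagger}_{l,m} = \sum_{l,m}\LL^{\dagger}_{l,m} - \sum_m \LL^{\dagger}_{k,m} - \sum_l \LL^{\dagger}_{l,k} + \LL^{\dagger}_{k,k}$. The full sum vanishes by part~3 of Corollary~\ref{thm:rcL+}, while the deleted row and column sums vanish by parts~1 and~2, leaving exactly $\LL^{\dagger}_{k,k}$. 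Collecting the four contributions yields $\LL^{\dagger}_{i,j} - \LL^{\dagger}_{k,j} - \LL^{\dagger}_{i,k} + \LL^{\dagger}_{k,k}$, which is precisely the claimed formula.

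I expect no genuine obstacle here: the entire argument is essentially a one-line consequence of Lemma~\ref{thm:L-1L+} once the sandwich is expanded, and the only spot requiring attention is the inclusion–exclusion bookkeeping in the double-sum term, where all three boundary corrections must be recognized as vanishing sums coming from Corollary~\ref{thm:rcL+}. No case analysis on the relative positions of $i$, $j$, $k$ is needed beyond the stated hypotheses.
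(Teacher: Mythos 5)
Your proposal is correct and follows essentially the same route as the paper: both read the $(i,j)$ entry directly off Lemma~\ref{thm:L-1L+}. The only cosmetic difference is that you expand the sandwich form $\kh{\II+\1\1^{\top}}(\LM)_{\backslash k}\kh{\II+\1\1^{\top}}$ and kill the sums via Corollary~\ref{thm:rcL+}, whereas the paper starts from the equivalent block expression $\begin{pmatrix}\II& -\1\end{pmatrix}\hat{\LL}^{\dagger}\begin{pmatrix}\II\\-\1^{\top}\end{pmatrix}$, whose off-diagonal blocks $\bm{r}$, $\bm{s}$ already encode exactly those vanishing-sum identities.
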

\begin{proof}
Applying Lemma~\ref{thm:L-1L+} we have
	\begin{align*}
		\left(\LL^{-1}_{\backslash k}\right)_{i,j} =& \bm{e}_i^{\top} \LL_{\backslash k}^{-1} \bm{e}_j
		= \bm{e}_i^{\top} \begin{pmatrix}
			\II& -\1
		\end{pmatrix}
		\hat{\LL}^{\dagger}
		\begin{pmatrix}
			\II\\
			-\1^\top
		\end{pmatrix} \bm{e}_j \\
		=& \bm{e}_i^{\top} (\LL^{\dagger})_{\backslash k} \bm{e}_j 
		- \bm{e}_i^{\top} \bm{r} \left(\1^{\top} \bm{e}_j\right) \\
		&- \left(\bm{e}_i^{\top} \1\right) \bm{s}^{\top} \bm{e}_j
		+ \LM_{k,k} \left( \bm{e}_i^{\top} \1\1^{\top} \bm{e}_j \right)\\
		=& \LL^{\dagger}_{i,j} - \LL^{\dagger}_{i,k} - \LL^{\dagger}_{k,j} + \LL^{\dagger}_{k,k},
	\end{align*}
	which completes the proof.
\end{proof}

\subsection{Expressions for Effective Resistance and Associated Quantities}

In this subsection, we provide expressions for two-node effective resistance,  effective resistance between a node and a node set, and their associated quantities, in terms of the entries of the pseudoinverse  $\LM$ of Laplacian matrix $\LL$ and its submatrices.

\begin{theorem}\label{thm:omegaL}
For a weighted digraph $\mG=(\mV,\mE,\WW)$, the resistance distance $\Omega(i,j)$ between a pair of vertices $i$ and $j$ is expressed as
\begin{align}\label{eq:rdef}
	\Omega(i,j) =(\bm{e}_i - \bm{e}_j)^{\top} \LL^{\dagger} (\bm{e}_i - \bm{e}_j)=\LL^{\dagger}_{i,i}+\LL^{\dagger}_{j,j}-\LL^{\dagger}_{i,j}-\LL^{\dagger}_{j,i}.
\end{align}
\end{theorem}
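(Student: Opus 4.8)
The plan is to prove the first equality through the diagonal-submatrix identity $\Omega(i,j)=\big(\LL_{\backslash j}^{-1}\big)_{i,i}$, the digraph analogue of~\eqref{eq:Omegaij}, and then to rewrite this entry of the inverted submatrix in terms of the pseudoinverse via Corollary~\ref{thm:corL-1L+}. The second equality requires no separate work: it is merely the expansion of the quadratic form, $(\ee_i-\ee_j)^{\top}\LM(\ee_i-\ee_j)=\LM_{i,i}+\LM_{j,j}-\LM_{i,j}-\LM_{j,i}$. Throughout I take $i\neq j$, which is precisely the regime where Definition~\ref{def:omega} gives $\Omega(i,j)=1/(d_{\mG}\ppi_iP_{\mathrm{es}}(i,j))$.

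First I would evaluate $\LL_{\backslash j}^{-1}$ explicitly. Since $\PPi$ is diagonal, the principal submatrix obtained by deleting index $j$ factors as $\LL_{\backslash j}=d_{\mG}\PPi_{\backslash j}(\II-\PP_{\backslash j})$, and Lemma~\ref{thm:invertL} with $\mX=\{j\}$ guarantees invertibility, yielding through~\eqref{eqT} the closed form $\LL_{\backslash j}^{-1}=\frac{1}{d_{\mG}}(\II-\PP_{\backslash j})^{-1}\PPi_{\backslash j}^{-1}$. Reading off the $i$-th diagonal entry, and using that $\PPi_{\backslash j}^{-1}$ is diagonal with $i$-th entry $1/\ppi_i$, gives $\big(\LL_{\backslash j}^{-1}\big)_{i,i}=\frac{1}{d_{\mG}\ppi_i}\big[(\II-\PP_{\backslash j})^{-1}\big]_{i,i}$, so the whole problem reduces to evaluating the fundamental-matrix diagonal $\big[(\II-\PP_{\backslash j})^{-1}\big]_{i,i}$.

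The key step, which I expect to be the main obstacle, is the probabilistic identity $\big[(\II-\PP_{\backslash j})^{-1}\big]_{i,i}=1/P_{\mathrm{es}}(i,j)$. Writing $(\II-\PP_{\backslash j})^{-1}=\sum_{t\ge 0}(\PP_{\backslash j})^{t}$ exhibits it as the fundamental matrix of the walk absorbed at $j$, whose $(i,i)$ entry is the expected number of visits to $i$ (including the start) before absorption at $j$, for a walk started at $i$. Decomposing the trajectory into successive returns to $i$, each excursion reaches $j$ first with probability $P_{\mathrm{es}}(i,j)$ and otherwise returns to $i$ first, so the number of visits to $i$ is geometric with success probability $P_{\mathrm{es}}(i,j)$ and has mean $1/P_{\mathrm{es}}(i,j)$. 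Substituting back gives $\big(\LL_{\backslash j}^{-1}\big)_{i,i}=1/(d_{\mG}\ppi_iP_{\mathrm{es}}(i,j))=\Omega(i,j)$ by Definition~\ref{def:omega}. Finally, applying Corollary~\ref{thm:corL-1L+} with deleted index $j$ and both free indices equal to $i$ (its hypotheses reduce to $i\neq j$) converts $\big(\LL_{\backslash j}^{-1}\big)_{i,i}$ into $\LM_{j,j}+\LM_{i,i}-\LM_{i,j}-\LM_{j,i}$, which together with the quadratic-form expansion yields both equalities in the statement.
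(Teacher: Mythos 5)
Your proof is correct, but it takes a genuinely different route from the paper's. The paper argues through generalized voltages: it encodes the harmonicity of $\phi_{i,j}(\cdot)$ together with the boundary conditions as the linear system $\LL\hat{\bm{\phi}}(i,j)=\alpha(\ee_i-\ee_j)$, solves it via the pseudoinverse, determines the constants from $\phi_{i,j}(i)=1$ and $\phi_{i,j}(j)=0$, and reads off $\Omega(i,j)$ from $\Omega(i,j)=1/[\LL\bm{\phi}(i,j)]_i$ using the identity $\LL\LM=\II-\tfrac{1}{n}\1\1^{\top}$. You instead establish the submatrix identity $\Omega(i,j)=\bigl(\LL_{\backslash j}^{-1}\bigr)_{i,i}$ first, by combining the explicit inverse $\LL_{\backslash j}^{-1}=\tfrac{1}{d_{\mG}}(\II-\PP_{\backslash j})^{-1}\PPi_{\backslash j}^{-1}$ implicit in the proof of Lemma~\ref{thm:invertL} with the classical absorbing-chain fact that $[(\II-\PP_{\backslash j})^{-1}]_{i,i}$ is the expected number of visits to $i$ before hitting $j$, which is geometric with mean $1/P_{\mathrm{es}}(i,j)$; you then transfer to $\LM$ via Corollary~\ref{thm:corL-1L+}. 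This reverses the paper's logical order: the paper deduces Theorem~\ref{thm:rl-1} from Theorem~\ref{thm:omegaL}, whereas you deduce Theorem~\ref{thm:omegaL} from an independent proof of what is effectively Theorem~\ref{thm:rl-1}; there is no circularity, since Corollary~\ref{thm:corL-1L+} rests only on Lemmas~\ref{thm:rcL}, \ref{thm:invertL} and \ref{thm:L-1L+}. Your route is more transparently probabilistic and yields Theorem~\ref{thm:rl-1} as a free by-product; the paper's route stays closer to the classical electrical-network derivation and avoids the excursion decomposition. The only (trivial) omission in your write-up is the case $i=j$, where both sides of~\eqref{eq:rdef} vanish.
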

\begin{proof}
Note that when $i=j$, $\Omega(i,j)=0$ and $(\bm{e}_i - \bm{e}_j)^{\top} \LL^{\dagger} (\bm{e}_i - \bm{e}_j)=0$. Thus,~\eqref{eq:rdef} holds for $i=j$. We next show that~\eqref{eq:rdef} also holds for $i\neq j$.

According to the connection between escape probability and generalized voltages, we have
\begin{equation*}
    d_{\mG}\ppi_iP_{\mathrm{es}}(i,j)=d_{\mG}\ppi_i \left(1-\sum\limits_{k\neq i} \bm{P}_{i,k} \phi_{i,j}(k)\right),
\end{equation*}
which can be written in matrix-vector form as 
\begin{equation}\label{eq:mvOmega}
	d_{\mG}\ppi_iP_{\mathrm{es}}(i,j)=\left[d_{\mG}\PPi(\II-\PP)\bm{\phi}(i,j)\right]_i=\left[\LL\bm{\phi}(i,j)\right]_i.
\end{equation}
where $\bm{\phi}(i,j)=(\phi_{i,j}(1), \phi_{i,j}(2), \ldots, \phi_{i,j}(n))^{\top}$ with $\phi_{i,j}(i)$
$=1$ and $\phi_{i,j}(j)=0$. 
Recall that the generalized voltage $\phi_{i,j}(k)$ is a harmonic function, satisfying  
\begin{equation*}
\sum_{l\in\mathcal{N}(k)}\PP_{k,l}\phi_{i,j}(l)=\phi_{i,j}(k)
\end{equation*}
 for every $k\in\mV\backslash\{i,j\}$. From the above equation, we deduce
\begin{equation*}
\sum_{l\in\mathcal{N}(k)}d_{\mG}\ppi_k\PP_{k,l}\kh{\phi_{i,j}(k)-\phi_{i,j}(l)}=0,
\end{equation*}
which is recast in matrix-vector form as
\begin{equation}\label{eq:mvOmegaIJ}
\LL_{k,:}\bm{\phi}(i,j)=0,
\end{equation}
where $k \neq i$ and $k \neq j$. 
By~\eqref{eq:omegaesPhiD}, the resistance distance $\Omega(i,j)$ is represented as   
\begin{equation}\label{eq: Omega_ij}
\Omega(i,j)=\frac{1}{\left[\LL\bm{\phi}(i,j)\right]_i}.
\end{equation}
For brevity of the following proof, we define $\bm{\chi}(i,j)=\ee_i-\ee_j$. In order to obtain $\bm{\phi}(i,j)$, we introduce the following system of linear equations: 
 \begin{equation}\label{eq:LLx}
\LL\hat{\bm{\phi}}(i,j)=\alpha\bm{\chi}(i,j),
\end{equation}
where $\alpha>0$ is a constant.
Note that, any solution $\bm{\phi}(i,j)$ of the system of linear equations given by~\eqref{eq:mvOmegaIJ} is also a solution of system~\eqref{eq:LLx} with properly selected $\alpha$. Since $\LL$ is a singular matrix, system~\eqref{eq:LLx} has an infinite family of possible solutions. Moreover, if $\hat{\bm{\phi}}(i,j)$ is a solution of system~\eqref{eq:LLx}, then any solution of system~\eqref{eq:LLx} takes the form of $\hat{\bm{\phi}}(i,j)+\beta\1$, where $\beta$ is a real number. It is easy to verify that $\hat{\bm{\phi}}(i,j)=\alpha\LM\bm{\chi}(i,j)$ is a solution of system~\eqref{eq:LLx}. Then $\bm{\phi}(i,j)$ can be represented as $\bm{\phi}(i,j)=\hat{\bm{\phi}}(i,j)+\beta\1=\alpha\LM\bm{\chi}(i,j)+\beta\1$. From $\phi_{i,j}(i)=1$ and $\phi_{i,j}(j)=0$, we have
\begin{align}
    \alpha\ee_i^\top\LM\bm{\chi}(i,j)+\beta&=1,\label{eq:alphabeta1}\\
    \alpha\ee_j^\top\LM\bm{\chi}(i,j)+\beta&=0.\label{eq:alphabeta2}
\end{align}
Combining~\eqref{eq:alphabeta1} and~\eqref{eq:alphabeta2} gives 
\begin{align*}
    \alpha=\frac{1}{\bm{\chi}(i,j)^{\top}\LM\bm{\chi}(i,j)},~~~\beta=-\frac{\ee_j^\top\LM\bm{\chi}(i,j)}{\bm{\chi}(i,j)^{\top}\LM\bm{\chi}(i,j)}.
\end{align*}
Thus, we have
\begin{equation*}
\bm{\phi}(i,j) = \frac{\LM\bm{\chi}(i,j)}{\bm{\chi}(i,j)^{\top}\LM\bm{\chi}(i,j)}-\frac{\1\ee_j^\top\LM\bm{\chi}(i,j)}{\bm{\chi}(i,j)^{\top}\LM\bm{\chi}(i,j)}.
\end{equation*}
According to~\eqref{eq: Omega_ij}, $\Omega(i,j)$ can be expressed as
\begin{equation}\label{eq:omegaLp}
\Omega(i,j) =\frac{\bm{\chi}(i,j)^{\top}\LM\bm{\chi}(i,j)}{\left[\LL\LM\bm{\chi}(i,j)-\LL\1\ee^{\top}_j\LM\bm{\chi}(i,j)\right]_i}.
	\end{equation}
	From Lemma~\ref{thm:rcL}, we have
	\begin{equation}\label{eq:omegaLpTheq1}
		\Omega(i,j) = \frac{\bm{\chi}(i,j)^{\top}\LM\bm{\chi}(i,j)}{\left[\LL\LM\bm{\chi}(i,j)\right]_i}.
	\end{equation}
By using~\eqref{eq:lmn}, the numerator of right-hand side of~\eqref{eq:omegaLpTheq1} is 
    \begin{equation*}
    	\left[\LL\LM\bm{\chi}(i,j)\right]_i=\left[\left(\II-\frac{1}{n}\1\1^{\top}\right)\bm{\chi}(i,j)\right]_i=1.
    \end{equation*}
 Then, the desired result follows.
\end{proof}

The resistance distance $\Omega(i,j)$ between two vertices $i$ and $j$ can also be expressed in terms of the diagonal elements of the inverse for submatrices of $\LL$.
\begin{theorem}\label{thm:rl-1}
The resistance distance $\Omega(i,j)$ between any pair of vertices $i$ and $j$ in a weighted digraph $\mG=(\mV,\mE,\WW)$ can be represented as
	\begin{equation*}
		\Omega(i,j) = \left(\LL_{\backslash i}^{-1}\right)_{j,j} = \left(\LL_{\backslash j}^{-1}\right)_{i,i}.
	\end{equation*}
\end{theorem}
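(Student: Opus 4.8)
The plan is to read off both expressions directly from two results already in hand: the closed form for $\Omega(i,j)$ in terms of the pseudoinverse (Theorem~\ref{thm:omegaL}), and the formula relating entries of $\LL_{\backslash k}^{-1}$ to entries of $\LL^{\dagger}$ (Corollary~\ref{thm:corL-1L+}). Since $\LL_{\backslash i}$ is obtained by deleting row and column $i$, the diagonal entry $\left(\LL_{\backslash i}^{-1}\right)_{j,j}$ is only meaningful for $j \neq i$, so I would treat the case $i \neq j$ throughout; the degenerate case $i=j$ carries no content since $\Omega(i,i)=0$ by Definition~\ref{def:omega}.

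First I would specialize Corollary~\ref{thm:corL-1L+}, matching its generic labels $(i,j,k)$ to the substitution $k \mapsto i$ and both free indices $\mapsto j$, which is legitimate precisely because $j \neq i$. This yields
\begin{equation*}
\left(\LL_{\backslash i}^{-1}\right)_{j,j} = \LL^{\dagger}_{i,i} + \LL^{\dagger}_{j,j} - \LL^{\dagger}_{j,i} - \LL^{\dagger}_{i,j}.
\end{equation*}
The right-hand side is exactly the symmetric combination of pseudoinverse entries that Theorem~\ref{thm:omegaL} identifies with $\Omega(i,j)$, so $\left(\LL_{\backslash i}^{-1}\right)_{j,j} = \Omega(i,j)$.

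Next I would repeat the step with the roles of $i$ and $j$ interchanged, applying Corollary~\ref{thm:corL-1L+} with deleted index $j$ and both free indices equal to $i$:
\begin{equation*}
\left(\LL_{\backslash j}^{-1}\right)_{i,i} = \LL^{\dagger}_{j,j} + \LL^{\dagger}_{i,i} - \LL^{\dagger}_{i,j} - \LL^{\dagger}_{j,i},
\end{equation*}
which is the same combination and hence also equals $\Omega(i,j)$ by Theorem~\ref{thm:omegaL}. Chaining the two identities gives the claimed chain $\Omega(i,j) = \left(\LL_{\backslash i}^{-1}\right)_{j,j} = \left(\LL_{\backslash j}^{-1}\right)_{i,i}$.

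I do not expect any genuine analytic obstacle: the result is a direct algebraic consequence of the two cited facts. The only point demanding care is index bookkeeping — correctly matching the substitution into Corollary~\ref{thm:corL-1L+} and verifying the hypothesis $i \neq j$, which both the corollary and the well-definedness of the submatrix diagonal require. The mild conceptual payoff worth flagging is that the equality of $\left(\LL_{\backslash i}^{-1}\right)_{j,j}$ and $\left(\LL_{\backslash j}^{-1}\right)_{i,i}$ — two quantities built from distinct submatrices — becomes automatic once both reduce to the same symmetric pseudoinverse expression, generalizing the undirected identity~\eqref{eq:Omegaij}.
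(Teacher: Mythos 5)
Your proposal is correct and follows exactly the paper's route: the paper proves this theorem in one line as "a direct conclusion from Theorem~\ref{thm:omegaL} and Corollary~\ref{thm:corL-1L+}," and your two specializations of the corollary (deleted index $i$ with free indices $j,j$, and vice versa) are precisely the omitted bookkeeping. Your explicit handling of the $i\neq j$ requirement is a reasonable extra care that the paper leaves implicit.
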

\begin{proof}
This is a direct conclusion from Theorem~\ref{thm:omegaL} and Corollary~\ref{thm:corL-1L+}.
\end{proof}

We proceed to prove that $\Omega(i,j)$  on a digraph $\mathcal{G}$  is a distance metric.
\begin{theorem}
The effective resistance in Definition~\ref{def:omega} is a metric.
In other words, for three vertices $i$, $j$, and $k$ in a digraph $\mG=(\mV,\mE)$, their resistance distances  satisfy the following three properties:
	\begin{enumerate}
		\item (Non-negativity) $\Omega(i,j)\geq0$ with equality  if and only $i=j$.
		\item (Symmetry) $\Omega(i,j) = \Omega(j,i)$.
		\item (Triangle inequality) $\Omega(i,j) \leq \Omega(i,k)+\Omega(k,j)$.
	\end{enumerate}
\end{theorem}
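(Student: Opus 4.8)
The plan is to verify the three metric axioms in turn, leaning on the matrix expressions already established, with the triangle inequality being the only substantive case.

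For \emph{symmetry}, I would simply invoke Theorem~\ref{thm:omegaL}, which gives $\Omega(i,j) = \LM_{i,i}+\LM_{j,j}-\LM_{i,j}-\LM_{j,i}$. This expression is manifestly invariant under interchanging $i$ and $j$, so $\Omega(i,j)=\Omega(j,i)$ with no further work. For \emph{non-negativity}, I would use the submatrix expression $\Omega(i,j) = \left(\LL^{-1}_{\backslash i}\right)_{j,j}$ from Theorem~\ref{thm:rl-1} together with Lemma~\ref{thm:invertL}, which states that $\LL^{-1}_{\backslash i}$ is entrywise nonnegative; in particular its diagonal entries are nonnegative, so $\Omega(i,j)\geq 0$. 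The equality case splits into two parts: $\Omega(i,i)=0$ holds by Definition~\ref{def:omega}, while for $i\neq j$ I would appeal to the escape-probability form $\Omega(i,j) = 1/\big(d_{\mG}\ppi_i P_{\mathrm{es}}(i,j)\big)$. Since $\mG$ is strongly connected, $\ppi_i>0$ and the escape probability $P_{\mathrm{es}}(i,j)$ is strictly positive, so $\Omega(i,j)$ is finite and strictly positive. Hence $\Omega(i,j)=0$ if and only if $i=j$.

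The \emph{triangle inequality} is the heart of the statement. I would start from the pseudoinverse expression in Theorem~\ref{thm:omegaL} and compute the excess $\Omega(i,k)+\Omega(k,j)-\Omega(i,j)$ directly. After substituting the three expressions and cancelling the $\LM_{i,i}$ and $\LM_{j,j}$ contributions, the excess collapses to $2\LM_{k,k}-\LM_{i,k}-\LM_{k,i}-\LM_{k,j}-\LM_{j,k}+\LM_{i,j}+\LM_{j,i}$. The key observation is that, after grouping these terms into two blocks and comparing with Corollary~\ref{thm:corL-1L+}, they are exactly $\left(\LL^{-1}_{\backslash k}\right)_{i,j}$ and $\left(\LL^{-1}_{\backslash k}\right)_{j,i}$, yielding the clean identity
\[
\Omega(i,k)+\Omega(k,j)-\Omega(i,j)=\left(\LL^{-1}_{\backslash k}\right)_{i,j}+\left(\LL^{-1}_{\backslash k}\right)_{j,i}.
\]
The inequality then follows immediately because Lemma~\ref{thm:invertL} guarantees that $\LL^{-1}_{\backslash k}$ is entrywise nonnegative, so both off-diagonal entries on the right are at least $0$.

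The main obstacle is the bookkeeping in this last step: the algebraic combination of pseudoinverse entries is not obviously signed, and the crux is recognizing that it reorganizes precisely into two off-diagonal entries of the nonnegative matrix $\LL^{-1}_{\backslash k}$ via Corollary~\ref{thm:corL-1L+}. Once this identification is made, the entrywise nonnegativity from Lemma~\ref{thm:invertL} closes the argument at once, and everything else reduces to routine substitution. Note in particular that the proof does not rely on any positive-semidefiniteness of $\LM$ (which fails for digraphs), but instead on the probabilistic nonnegativity built into the Laplacian submatrix inverse.
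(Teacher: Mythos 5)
Your proposal is correct and follows essentially the same route as the paper: symmetry from the pseudoinverse formula of Theorem~\ref{thm:omegaL}, non-negativity from Theorem~\ref{thm:rl-1} together with the entrywise nonnegativity in Lemma~\ref{thm:invertL}, and the triangle inequality by rewriting the excess $\Omega(i,k)+\Omega(k,j)-\Omega(i,j)$ as $\left(\LL^{-1}_{\backslash k}\right)_{i,j}+\left(\LL^{-1}_{\backslash k}\right)_{j,i}$ via Corollary~\ref{thm:corL-1L+}. Your only deviation is in justifying strict positivity for $i\neq j$ through the escape-probability definition rather than the matrix expression, which if anything fills in a step the paper leaves implicit.
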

\begin{proof}
First, by definition $\Omega(i,j)=0$ when $i=j$. For $i \neq j$, it is easy to see that $\Omega(i,j)>0$ by using Lemma~\ref{thm:invertL} and Theorem~\ref{thm:rl-1}. 
	
From Theorem~\ref{thm:omegaL}, we have
	\begin{align*}
\Omega(i,j)=\LL^{\dagger}_{i,i}+\LL^{\dagger}_{j,j}-\LL^{\dagger}_{i,j}-\LL^{\dagger}_{j,i}
=\LL^{\dagger}_{j,j}+\LL^{\dagger}_{i,i}-\LL^{\dagger}_{j,i}-\LL^{\dagger}_{i,j}=\Omega(j,i),
	\end{align*}
which implies symmetry. 
	
We finally prove the triangle inequality. For  three arbitrary vertices $i$, $j$, and $k$ in  digraph $\mG$, we have
	\begin{align}\label{triangleeq}
		\Omega(i,k)+\Omega(k,j)-\Omega(i,j)
		=\left(\LL^{\dagger}_{k,k}+\LL^{\dagger}_{i,j}-\LL^{\dagger}_{k,j}-\LL^{\dagger}_{i,k}\right)
		+\left(\LL^{\dagger}_{k,k}+\LL^{\dagger}_{j,i}-\LL^{\dagger}_{k,i}-\LL^{\dagger}_{j,k}\right).
	\end{align}
By Corollary~\ref{thm:corL-1L+}, the left-hand side of~\eqref{triangleeq} is equal to $(\LL^{-1}_{\backslash k})_{i,j}+(\LL^{-1}_{\backslash k})_{j,i}$, which is nonnegative according to Lemma~\ref{thm:invertL}.
\end{proof}

\begin{proposition}\label{thm:OmegaiL}
The resistance distance $\Omega(i)$ for arbitrary vertex $i$ in a weighted digraph $\mathcal{G} =(\mathcal{V}, \mathcal{E},\WW)$ can be expressed as 
	\begin{equation*}
		\Omega(i) = n\LM_{i,i}+\trace{\LM}.
	\end{equation*}
\end{proposition}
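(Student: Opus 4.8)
The plan is to expand the definition $\Omega(i)=\sum_{j\in\mV}\Omega(i,j)$ by substituting the closed form for the two-node resistance distance proved in Theorem~\ref{thm:omegaL}, namely $\Omega(i,j)=\LM_{i,i}+\LM_{j,j}-\LM_{i,j}-\LM_{j,i}$, and then to evaluate the resulting four sums term by term. Writing
\begin{equation*}
\Omega(i)=\sum_{j=1}^n\left(\LM_{i,i}+\LM_{j,j}-\LM_{i,j}-\LM_{j,i}\right),
\end{equation*}
I would split the right-hand side into four separate summations over $j$ and handle each in turn.

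The first sum contributes $\sum_{j=1}^n\LM_{i,i}=n\LM_{i,i}$, since the summand is constant in $j$; the second sum is $\sum_{j=1}^n\LM_{j,j}=\trace{\LM}$ by the definition of the trace. These two terms produce exactly the claimed expression, so the entire argument hinges on showing that the remaining two cross terms vanish. This is where the key structural input enters: the third sum equals $\sum_{j=1}^n\LM_{i,j}=\left(\LM\1\right)_i$ and the fourth equals $\sum_{j=1}^n\LM_{j,i}=\left(\1^\top\LM\right)_i$, and both are zero by the row- and column-sum identities $\LM\1=\bm{0}$ and $\1^\top\LM=\bm{0}$ established in Corollary~\ref{thm:rcL+}.

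Combining these four evaluations yields $\Omega(i)=n\LM_{i,i}+\trace{\LM}$ directly. I do not anticipate any genuine obstacle in this proof: it is essentially a bookkeeping computation once Theorem~\ref{thm:omegaL} supplies the entrywise formula for $\Omega(i,j)$. The only point requiring care, and the step I would flag as the crux, is the vanishing of the two cross terms, since it relies specifically on the fact that the Laplacian of a digraph defined in Definition~\ref{Def:dilap} inherits the zero row/column sums of its pseudoinverse; this is precisely what makes the directed-graph expression for $\Omega(i)$ formally identical to the undirected case recorded in Proposition~\ref{thm:unOmegai}.
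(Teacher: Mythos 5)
Your proposal is correct and follows essentially the same route as the paper: substitute the formula $\Omega(i,j)=\LM_{i,i}+\LM_{j,j}-\LM_{i,j}-\LM_{j,i}$ from Theorem~\ref{thm:omegaL}, sum over $j$, and kill the two cross terms via $\LM\1=\bm{0}$ and $\1^{\top}\LM=\bm{0}$ from Corollary~\ref{thm:rcL+}. No differences worth noting.
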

\begin{proof}
	From Theorem~\ref{thm:omegaL} and Corollary~\ref{thm:rcL+}, we have
	\begin{align*}
		\Omega(i)  =& \sum_{j\in\mathcal{V}} \Omega(i,j)
		= \sum_{j\in\mathcal{V}}\kh{\LM_{i,i}+\LM_{j,j}-\LM_{i,j}-\LM_{j,i}}\\
		=& n\LM_{i,i}+\trace{\LM}+\sum_{j\in\mV}\LM_{i,j}+\sum_{j\in\mV}\LM_{j,i}\\
		=&n\LM_{i,i}+\trace{\LM},
	\end{align*}
 which finishes the proof.
  \end{proof}
When $\mathcal{G}$ is undirected,  Proposition~\ref{thm:OmegaiL} is reduced to Proposition~\ref{thm:unOmegai}.

The resistance distance $\Omega(i)$ can also be expressed by the trace or spectrum of submatrix $\LL_{\backslash i}$.
\begin{proposition}
The resistance distance $\Omega(i)$ for arbitrary vertex $i$ in a weighted digraph $\mathcal{G} =(\mathcal{V}, \mathcal{E},\WW)$ can be expressed as 
	\begin{equation*}
		\Omega(i)=\trace{\LL_{\backslash i}^{-1}}=\sum_{k=1}^{n-1} \frac{1}{\lambda_k(\LL_{\backslash i})}.
	\end{equation*}
\end{proposition}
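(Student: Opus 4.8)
The plan is to establish the two claimed equalities separately, chaining together results already proved in the excerpt. The statement asserts that $\Omega(i) = \trace{\LL_{\backslash i}^{-1}}$ and that this trace equals $\sum_{k=1}^{n-1} 1/\lambda_k(\LL_{\backslash i})$. The first equality is the substantive one; the second is a standard fact about traces and eigenvalues.

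For the first equality, I would start from the definition $\Omega(i) = \sum_{j\in\mV}\Omega(i,j)$ and invoke Theorem~\ref{thm:rl-1}, which states $\Omega(i,j) = \left(\LL_{\backslash i}^{-1}\right)_{j,j}$. Summing over $j$, the terms with $j \neq i$ contribute exactly the diagonal entries of $\LL_{\backslash i}^{-1}$, and the term $j = i$ contributes $\Omega(i,i) = 0$, which is consistent since $\LL_{\backslash i}^{-1}$ is an $(n-1)\times(n-1)$ matrix indexed by $\mV\setminus\{i\}$ and has no diagonal entry for index $i$. Hence
\begin{equation*}
\Omega(i) = \sum_{j\in\mV\setminus\{i\}} \left(\LL_{\backslash i}^{-1}\right)_{j,j} = \trace{\LL_{\backslash i}^{-1}}.
\end{equation*}
This is essentially a one-line consequence once Theorem~\ref{thm:rl-1} is in hand; I must only be careful about the indexing bookkeeping so that the sum over $j\in\mV$ correctly matches the trace of the deflated matrix.

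For the second equality, I would use the fact that the trace of any invertible matrix equals the sum of the reciprocals of the eigenvalues of that matrix, since the eigenvalues of $\LL_{\backslash i}^{-1}$ are the reciprocals $1/\lambda_k(\LL_{\backslash i})$ of the eigenvalues of $\LL_{\backslash i}$, and the trace is the sum of eigenvalues. Lemma~\ref{thm:invertL} guarantees $\LL_{\backslash i}$ is invertible, so none of its eigenvalues vanish and all reciprocals are well defined. Since $\LL_{\backslash i}$ is $(n-1)\times(n-1)$, it has $n-1$ eigenvalues $\lambda_1(\LL_{\backslash i}),\ldots,\lambda_{n-1}(\LL_{\backslash i})$, giving $\trace{\LL_{\backslash i}^{-1}} = \sum_{k=1}^{n-1} 1/\lambda_k(\LL_{\backslash i})$.

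I do not anticipate a genuine obstacle here, since both steps are direct applications of prior results. The only mild subtlety — and the one place worth stating carefully — is the eigenvalue step: because $\LL$ is generally nonsymmetric for a digraph, its submatrix $\LL_{\backslash i}$ is nonsymmetric and may have complex eigenvalues, so I would phrase the trace-eigenvalue identity via the Jordan form (trace equals the sum of eigenvalues counted with algebraic multiplicity, and the eigenvalues of the inverse are the reciprocals), rather than relying on any orthogonal diagonalization. This ensures the argument remains valid without a symmetry hypothesis.
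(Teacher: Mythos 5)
Your proof is correct and follows essentially the same route as the paper's: apply Theorem~\ref{thm:rl-1} termwise to the defining sum $\Omega(i)=\sum_{j\in\mV}\Omega(i,j)$, note the $j=i$ term vanishes so the sum is exactly $\trace{\LL_{\backslash i}^{-1}}$, and then convert the trace to the eigenvalue sum. Your added remark that $\LL_{\backslash i}$ is nonsymmetric and the trace--eigenvalue identity should be justified via algebraic multiplicities rather than diagonalization is a point the paper glosses over, but it does not change the argument.
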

\begin{proof}
	By Corollary~\ref{thm:rl-1}, we have
	\begin{align*}
		\Omega(i)  =& \sum\limits_{j\in\mathcal{V}} \Omega(i,j)
		= \sum\limits_{j\in\mathcal{V}\setminus{\{i\}}} \left(\LL_{\backslash i}^{-1}\right)_{j,j} 
		= \Tr\left(\LL_{\backslash i}^{-1}\right) 
		= \sum_{k=1}^{n-1} \frac{1}{\lambda_k(\LL_{\backslash i})}.
	\end{align*}
	This completes the proof.
\end{proof}

In the following, we show that these two Kirchhoff indices on directed graphs can be expressed by using the Laplacians. 

\begin{theorem}
For a weighted digraph $\mG=(\mV,\mE,\WW)$ with $n$ vertices and volume $d_{\mG}$, let $\LL$ and  $\widetilde{\LL}$ be, respectively, its Laplacian matrix and normalized Laplacian matrix. Then the Kirchhoff index $R(\mG)$  and the multiplicative Kirchhoff index $R^*(\mG)$ of digraph $\mG$ can be expressed as follows.
\begin{enumerate}
		\item\label{it:1} (Kirchhoff index)
		\begin{equation*}
			R(\mG) = n\trace{\LM}=n\sum_{i=1\atop \lambda_i(\LL)\neq 0}\frac{1}{\lambda_i(\LL)}.
		\end{equation*}
	    \item\label{it:2} (Multiplicative Kirchhoff index)  
	    \begin{equation*}
	    	R^*(\mG) = d_{\mG}K(\mG)=d_{\mG}\sum_{i=1\atop\lambda_i(\widetilde{\LL})\neq 0}^{n}\frac{1}{\lambda_i(\widetilde{\LL})}=d_{\mG}\trace{\widetilde{\LL}^{\dagger}}.
	    \end{equation*}
	\end{enumerate}
\end{theorem}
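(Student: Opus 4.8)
The plan is to treat the two indices separately, in each case reducing the defining sum to a trace through results already available.

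For the Kirchhoff index, I would first use the symmetry $\Omega(i,j)=\Omega(j,i)$ proved earlier together with $\Omega(i,i)=0$ to pass from the sum over unordered pairs to a full double sum, $R(\mG)=\tfrac{1}{2}\sum_{i,j\in\mV}\Omega(i,j)=\tfrac{1}{2}\sum_{i\in\mV}\Omega(i)$. Proposition~\ref{thm:OmegaiL} supplies $\Omega(i)=n\LM_{i,i}+\trace{\LM}$, and summing over the $n$ vertices gives $\sum_{i\in\mV}\Omega(i)=n\trace{\LM}+n\trace{\LM}=2n\trace{\LM}$, hence $R(\mG)=n\trace{\LM}$. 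The eigenvalue form then reduces to identifying $\trace{\LM}$ with $\sum_{\lambda_i(\LL)\neq0}1/\lambda_i(\LL)$, which I discuss below.

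For the multiplicative Kirchhoff index I would convert resistances into hitting times. Theorem~\ref{thm:rcij} gives $\Omega(i,j)=C(i,j)/d_{\mG}=(H(i,j)+H(j,i))/d_{\mG}$, so the defining relation~\eqref{eq:mki} becomes $R^*(\mG)=d_{\mG}\sum_{i<j}\ppi_i\ppi_j(H(i,j)+H(j,i))$. Relabelling the $H(j,i)$ terms merges the two orientations into $R^*(\mG)=d_{\mG}\sum_{i\neq j}\ppi_i\ppi_j H(i,j)$, and since $H(i,i)=0$ this is $d_{\mG}\sum_{i\in\mV}\ppi_i\big(\sum_{j\in\mV}\ppi_j H(i,j)\big)$. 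The inner sum is exactly the Kemeny's constant $K(\mG)$, which does not depend on the starting vertex $i$; factoring it out and using $\sum_{i}\ppi_i=1$ collapses everything to $R^*(\mG)=d_{\mG}K(\mG)$. The two remaining equalities $K(\mG)=\sum_{\lambda_i(\widetilde{\LL})\neq0}1/\lambda_i(\widetilde{\LL})=\trace{\widetilde{\LL}^{\dagger}}$ then follow from~\eqref{eq:Kemeny} and the trace argument below.

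I expect the main obstacle to be the two ``trace equals sum of reciprocal nonzero eigenvalues'' steps, since for a non-symmetric matrix the trace of the Moore--Penrose pseudoinverse need not equal $\sum_{\lambda\neq0}1/\lambda$. The resolution is to use that both matrices are EP. For $\LL$ this is already established in the corollary following Proposition~\ref{thm:psuformula}, so $\LM$ equals the group inverse $\LL^{\#}$. For $\widetilde{\LL}=\PPi^{1/2}(\II-\PP)\PPi^{-1/2}$ I would verify it directly: a short computation shows $\mathrm{null}(\widetilde{\LL})=\mathrm{null}(\widetilde{\LL}^{\top})=\mathrm{span}(\PPi^{1/2}\1)$, the one-dimensionality coming from the simplicity of the Perron eigenvalue of the irreducible $\PP$, so $\widetilde{\LL}$ is EP and $\widetilde{\LL}^{\dagger}=\widetilde{\LL}^{\#}$. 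Being EP forces index one, whence the group inverse inverts each nonzero eigenvalue and annihilates the null space; taking traces yields $\trace{\LM}=\sum_{\lambda_i(\LL)\neq0}1/\lambda_i(\LL)$ and $\trace{\widetilde{\LL}^{\dagger}}=\sum_{\lambda_i(\widetilde{\LL})\neq0}1/\lambda_i(\widetilde{\LL})$, completing both parts.
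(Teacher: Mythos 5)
Your proof is correct and follows essentially the same route as the paper: reduce $R(\mG)$ to $\tfrac{1}{2}\sum_{i,j}\Omega(i,j)$ and collapse it to $n\trace{\LM}$ via the vanishing row/column sums of $\LM$ (you route through $\Omega(i)=n\LM_{i,i}+\trace{\LM}$, the paper expands $\Omega(i,j)$ directly, but these are the same computation), and reduce $R^*(\mG)$ to $\sum_{i,j}\ppi_i\ppi_jH(i,j)=K(\mG)$ via commute times and the starting-vertex independence of the Kemeny constant. The one place you go beyond the paper is the justification of $\trace{\LM}=\sum_{\lambda_i(\LL)\neq0}1/\lambda_i(\LL)$ and its normalized analogue via the EP/group-inverse argument; the paper simply asserts these identities as facts, and since $\LL$ and $\widetilde{\LL}$ are not symmetric your observation that this requires the EP property is a genuine (and welcome) tightening.
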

\begin{proof}
We first prove item~\ref{it:1}). From~\eqref{eq:rdef} and~\eqref{eq:ki}, we have
\begin{align}\label{eq:KG1}
		R(\mathcal{G}) =& \sum_{i,j=1\atop i<j}^n\Omega(i,j)=\frac{1}{2}\sum_{i,j\in\mathcal{V}}\Omega(i,j)
		= \frac{1}{2}\sum_{i,j\in\mV} \left(\LL^{\dagger}_{i,i} + \LL^{\dagger}_{j,j} - \LL^{\dagger}_{i,j} - \LL^{\dagger}_{j,i}\right).
	\end{align}
By Corollary~\ref{thm:rcL+}, the last sum term is evaluated by  
\begin{align}\label{eq:KG2}
\sum\limits_{i,j\in\mathcal{V}} \left(\LL^{\dagger}_{i,i} + \LL^{\dagger}_{j,j} - \LL^{\dagger}_{i,j} - \LL^{\dagger}_{j,i}\right)
		= \sum\limits_{i,j\in\mathcal{V}} \LL^{\dagger}_{i,i} + \sum\limits_{i,j\in\mathcal{V}} \LL^{\dagger}_{j,j}
		= 2n \Tr\left(\LL^{\dagger}\right).
	\end{align}
Considering~\eqref{eq:KG1},~\eqref{eq:KG2}, and the fact that $\trace{\LL^{\dagger}}=\sum_{\lambda_i(\LL)\neq 0}\frac{1}{\lambda_i(\LL)}$, we deduce 
	\begin{equation*}
		R(\mG) =n\trace{\LM}= n\sum_{\lambda_i(\LL)\neq 0}\frac{1}{\lambda_i(\LL)}.
	\end{equation*}
	
Next, we prove item~\ref{it:2}). Applying Corollary~\ref{thm:rcij} and considering the fact that $\sum_{j=1}^n\ppi_jH(i,j)$ is independent of  vertex $i$, we have
	\begin{equation*}
		\begin{aligned}
			R^*(\mG) =& d_{\mG}^2\sum_{i,j=1\atop i<j}^n\ppi_i\ppi_j\Omega(i,j)
			=d_{\mG}\sum_{i,j=1\atop i<j}^n\ppi_i\ppi_jC(i,j)\\
			=&d_{\mG}\sum_{i,j\in\mV}\ppi_i\ppi_jH(i,j)
			=d_{\mG}\sum_{j\in\mV}\ppi_jH(i,j)=d_{\mG}K(\mG).
		\end{aligned}
	\end{equation*}
 Since the Kemeny's constant can be expressed by the eigenvalues of $\PP$ and there is a one-to-one correspondence between the eigenvalues of $\PP$ and $\widetilde{\LL}$, we have
    \begin{equation*}
    	K(\mG) = \sum_{i=1\atop\lambda_i(\PP)\neq1}^n\frac{1}{1-\lambda_i(\PP)}= \sum_{i=1\atop\lambda_i(\widetilde{\LL})\neq0}^n\frac{1}{\lambda_i(\widetilde{\LL})}=\trace{\widetilde{\LL}^\dagger},
    \end{equation*}
which implies
\begin{equation*}
    	R^*(\mG)=d_{\mG}\sum_{i=1\atop\lambda_i(\widetilde{\LL})\neq0}^n\frac{1}{\lambda_i(\widetilde{\LL})}=d_{\mG}\trace{\widetilde{\LL}^\dagger}.
    \end{equation*}
This completes the proof.
\end{proof}

Finally, we use the diagonal elements of the inverse for submatrices of $\LL$ to represent the effective resistance $\Omega(i,\mX)$ between a vertex $i$ and a vertex group $\mX$, as well as the resistance distance $\Omega(\mX)$ of $\mX$. 

\begin{theorem}\label{GroupOmegaD}
In a digraph $\mG=(\mV,\mE,\WW)$, the effective resistance between a vertex $i\in \mV$ and a vertex group $\mX\subseteq\mV$ is
\begin{equation*}
\Omega(i,\mX) = \left(\LL_{\backslash \mX}^{-1}\right)_{i,i},
\end{equation*}
and  the resistance distance of  vertex set $\mathcal{X} \subset \mV$ is
\begin{equation*}
\Omega(\mathcal{X}) = \Tr\left(\LL_{\backslash \mX}^{-1}\right).
\end{equation*}
\end{theorem}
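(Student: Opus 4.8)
The plan is to mirror the argument used for the two-node case in Theorem~\ref{thm:omegaL}, but to exploit the boundary conditions at $\mX$ so that the full (singular) Laplacian $\LL$ collapses onto its invertible submatrix $\LL_{\backslash\mX}$. First I would recall that the generalized voltage $\phi_{i,\mX}(\cdot)$ is harmonic with boundary set $\{i\}\cup\mX$, normalized by $\phi_{i,\mX}(i)=1$ and $\phi_{i,\mX}(j)=0$ for every $j\in\mX$. Writing $\bm{\phi}=(\phi_{i,\mX}(1),\ldots,\phi_{i,\mX}(n))^{\top}$ and using $\LL=d_{\mG}\PPi(\II-\PP)$, the harmonic condition translates into $\left[\LL\bm{\phi}\right]_k=0$ for every $k\in\mV\setminus(\{i\}\cup\mX)$, while at the source the identity $\left[\LL\bm{\phi}\right]_i=d_{\mG}\ppi_i\kh{1-\sum_{k\neq i}\PP_{i,k}\phi_{i,\mX}(k)}$ combined with~\eqref{eq:omegaesiXPhiD} gives $\left[\LL\bm{\phi}\right]_i=1/\Omega(i,\mX)$.

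The crucial reduction is that, since $\phi_{i,\mX}(j)=0$ for all $j\in\mX$, the columns of $\LL$ indexed by $\mX$ contribute nothing, so for every $k\in\mV\setminus\mX$ one has $\left[\LL\bm{\phi}\right]_k=\left[\LL_{\backslash\mX}\bm{\phi}_{\backslash\mX}\right]_k$, where $\bm{\phi}_{\backslash\mX}$ is the restriction of $\bm{\phi}$ to the indices in $\mV\setminus\mX$. Together with the two relations above (and assuming, as the statement implicitly requires, $i\notin\mX$), this shows that $\bm{\phi}_{\backslash\mX}$ solves the reduced square system $\LL_{\backslash\mX}\bm{\phi}_{\backslash\mX}=\frac{1}{\Omega(i,\mX)}\ee_i$. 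By Lemma~\ref{thm:invertL} the matrix $\LL_{\backslash\mX}$ is invertible, so $\bm{\phi}_{\backslash\mX}=\frac{1}{\Omega(i,\mX)}\LL_{\backslash\mX}^{-1}\ee_i$. Reading off the $i$-th coordinate and using $\phi_{i,\mX}(i)=1$ yields $1=\frac{1}{\Omega(i,\mX)}\kh{\LL_{\backslash\mX}^{-1}}_{i,i}$, which rearranges to the first claimed identity $\Omega(i,\mX)=\kh{\LL_{\backslash\mX}^{-1}}_{i,i}$.

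For the second identity I would sum the first over all vertices via Definition~\ref{thm:centdef}. Adopting the natural convention that $\Omega(i,\mX)=0$ when $i\in\mX$ (a vertex already in the grounded set has zero escape resistance), the summands indexed by $\mX$ drop out, leaving $\Omega(\mX)=\sum_{i\in\mV\setminus\mX}\kh{\LL_{\backslash\mX}^{-1}}_{i,i}=\Tr\kh{\LL_{\backslash\mX}^{-1}}$, as desired.

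I expect the main obstacle to be the reduction step rather than any heavy computation: one must argue carefully that zeroing $\bm{\phi}$ on $\mX$ is exactly what lets the submatrix $\LL_{\backslash\mX}$ replace the full singular $\LL$, and that the restricted system is genuinely square with a unique solution so that inversion is legitimate. Verifying $\left[\LL\bm{\phi}\right]_i=1/\Omega(i,\mX)$ also relies on the no-self-loop normalization already implicit in~\eqref{eq:mvOmega} for the two-node proof, which I would flag for consistency.
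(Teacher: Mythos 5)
Your proposal is correct and follows essentially the same route as the paper: the paper likewise forms the reduced square system $\LL_{\backslash\mX}\bigl(\begin{smallmatrix}\bar{\bm{\phi}}(i,\mX)\\ 1\end{smallmatrix}\bigr)=\bigl(\begin{smallmatrix}\bm{0}\\ d_{\mG}\ppi_i P_{\mathrm{es}}(i,\mX)\end{smallmatrix}\bigr)$ (its $\bigl(\begin{smallmatrix}\bar{\bm{\phi}}\\ 1\end{smallmatrix}\bigr)$ is exactly your $\bm{\phi}_{\backslash\mX}$ with $i$ placed last), inverts via Lemma~\ref{thm:invertL}, and reads off the $i$-th diagonal entry, using the same $\bm{P}_{i,i}=0$ normalization you flag. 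The only cosmetic difference is that you substitute $1/\Omega(i,\mX)$ for $d_{\mG}\ppi_i P_{\mathrm{es}}(i,\mX)$ at the start rather than the end.
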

\begin{proof}
To prove the theorem, we first introduce some notations. Let  $\bm{\phi}(i,\mathcal{X})$ be the vector of generalized voltages $\phi_{i,\mathcal{X}}(k)$, $k=1,2,\ldots,n$, defined by
\begin{equation*}
		\bm{\phi}(i,\mathcal{X}) = \Big(\phi_{i,\mathcal{X}}(1), \phi_{i,\mathcal{X}}(2), \cdots, \phi_{i,\mathcal{X}}(n)\Big)^{\top}.
	\end{equation*}
Define $\bar{\bm{\phi}}(i,\mathcal{X})$ as
	\begin{equation*}
		\bar{\bm{\phi}}(i,\mathcal{X}) = \bm{\phi}(i,\mathcal{X})_{\backslash \{\mathcal{X} \cup \{i\}\}},
	\end{equation*}
namely, $\bar{\bm{\phi}}(i,\mathcal{X})$ is equal to $\bm{\phi}(i,\mathcal{X})$ with entries  corresponding to $\mathcal{X}\cup\{i\}$ being removed.
	
Without loss of generality, suppose $i$ is the last vertex of $\mathcal{V}\backslash\mathcal{X}$. Then, considering the fact that  $\phi_{i,\mathcal{X}}(i)=1$ and $\bm{P}_{i,i}=0$, we have 
\begin{equation}\label{eq:lxvt2}
		\begin{aligned}
\left[\LL_{\backslash{\mathcal{X}}}
			\begin{pmatrix}
				\bar{\bm{\phi}}(i,\mathcal{X}) \\
				1
			\end{pmatrix}\right]_{i}=&\left[\LL_{\backslash{\mathcal{X}}}
			\begin{pmatrix}
				\bar{\bm{\phi}}(i,\mathcal{X}) \\
				\phi_{i,\mathcal{X}}(i)
			\end{pmatrix}\right]_{i}
			\\
			=& d_{\mG}\ppi_{i}\left[-\sum_{
				j\in\mathcal{V}\backslash\mathcal{X}\atop
				j\neq i}\bm{P}_{i,j}\phi_{i,\mathcal{X}}(j)+\left(1-\bm{P}_{i,i}\right)\phi_{i,\mathcal{X}}(i)\right]\\
			=&d_{\mG}\ppi_{i} \left[ 1-\sum_{j\in\mathcal{V}\backslash\mathcal{X}\atop j\neq i} \bm{P}_{i,j} \phi_{i,\mathcal{X}}(j) \right] 
			\\=& d_{\mG}\ppi_{i} P_{\mathrm{es}}(i,\mathcal{X}).
		\end{aligned}
    \end{equation}
	
Furthermore, for any $k\neq i$, since $\bm{P}_{k,k}=0$ and $\sum_{j\in\mathcal{V}\backslash\mathcal{X}\atop
		j\neq k}\bm{P}_{k,j}\phi_{i,\mathcal{X}}(j)=\phi_{i,\mathcal{X}}(k)$, we have
	\begin{align*}
			\left[\LL_{\backslash{\mathcal{X}}}
			\begin{pmatrix}
				\bar{\bm{\phi}}(i,\mathcal{X}) \\
				1
\end{pmatrix}\right]_{k}=&\left[\LL_{\backslash{\mathcal{X}}}
			\begin{pmatrix}
				\bar{\bm{\phi}}(i,\mathcal{X}) \\
				\phi_{i,\mathcal{X}}(i)
			\end{pmatrix}\right]_{k}
			\\
			=& d_{\mG}\ppi_{k}\left[-\sum_{
				j\in\mathcal{V}\backslash\mathcal{X}\atop
				j\neq k}\bm{P}_{k,j}\phi_{i,\mathcal{X}}(j)+\left(1-\bm{P}_{k,k}\right)\phi_{i,\mathcal{X}}(k)\right]=0.
	\end{align*}
  Combining the above-obtained results, we get
\begin{align}\label{eq:lxvt}
		\LL_{\backslash \mX}
		\begin{pmatrix}
			\bar{\bm{\phi}}(i,\mathcal{X}) \\
			1
		\end{pmatrix}
		= \begin{pmatrix} \bm{0} \\ d_{\mG}\ppi_{i} P_{\mathrm{es}}(i,\mathcal{X}) \end{pmatrix},
	\end{align}
Multiplying $\LL_{\backslash \mX}^{-1}$ to the left on both sides of~\eqref{eq:lxvt} gives	\begin{align}\label{eq:lx-1t}
		\begin{pmatrix}
			\bar{\bm{\phi}}(i,\mathcal{X}) \\
			1
		\end{pmatrix}
		=
		\LL_{\backslash \mX}^{-1}
		\begin{pmatrix} \bm{0} \\ d_{\mG}\ppi_{i} P_{\mathrm{es}}(i,\mathcal{X}) \end{pmatrix}.
	\end{align}
	Considering the last row of~\eqref{eq:lx-1t}, we have
	\begin{align*}
		1 = \left(\LL^{-1}_{\backslash \mX}\right)_{i,i}d_{\mG}\ppi_{i} P_{\mathrm{es}}(i,\mathcal{X}),
	\end{align*}
	which yields
\begin{align}\label{eq:mrixc1}
\left(\LL^{-1}_{\backslash \mX}\right)_{i,i} = \frac{1}{d_{\mG} \ppi_{i}P_{\mathrm{es}}(i,\mathcal{X})}=\Omega(i,\mX),
	\end{align}
where the second equality is obtained by	Definition~\ref{thm:defrix}. 

The relation $\Omega(\mathcal{X}) = \Tr(\LL_{\backslash \mX}^{-1})$ follows directly from the definition of $\Omega(\mathcal{X})$ and the above expression for $\Omega(i,\mX)$.
\end{proof}
Note that Theorem~\ref{GroupOmegaD} generalizes the results for resistance distances of vertex group for undirected graph~\cite{LiPeShYiZh19}.

\section{Finding Vertex Group with Minimum Resistance Distance}

Identifying crucial vertex groups is a fundamental problem in data mining and graph applications~\cite{LaMe12,LuChRe16}. As an application, in this section we use the notion of resistance distance of a vertex group to find the most important group with fixed number of vertices. To this end, we first study the properties of the resistance distance $\Omega(\mX)$ as a function of set $\mX$ by showing that $\Omega(\mX)$ is monotone decreasing and supermodular. Then, we extend the NP-hard optimization problem for vertex group centrality in~\cite{LiPeShYiZh19} to digraphs: How to choose $1\leq k \leq n$ vertices forming set $\mX$, so that $\Omega(\mX)$ is minimized. Moreover, we propose a deterministic greedy approximation algorithm to solve the problem. Finally, we conduct experiments in model and real networks to evaluate our approximation algorithm.

\subsection{Properties of Resistance Distance for a Vertex Group }

Note that the resistance distance $\Omega(\mX)$ is a function of set $\mX$. In this subsection, we show that $\Omega(\mX)$ is monotone decreasing and supermodular. To this end, We first introduce the formal definitions for monotone and supermodular set functions. For simplicity, for any vertex $i$ and set $\mX$ of vertices,  we write $\mX+i$ to denote $\mX\cup\{i\}$ and $\mX-i$ to denote $\mX\backslash\{i\}$.

\begin{mydef}{(Monotonicity)}
A set function $f:2^{\mathcal{V}} \to \mathbb{R}$ is monotone decreasing if $f(\mX) \geq f(\mY)$ holds for all $\mX \subseteq\mY \subseteq \mV$.
\end{mydef}
\begin{mydef}{(Supermodularity)}
A set function $f : 2^{\mathcal{V}} \to \mathbb{R}$ is 	supermodular if $f(\mX) - f (\mX + i) \geq f (\mY) - f(\mY + i)$ holds for all
	$\mX \subseteq \mY \subseteq \mV$ and $i\in\mV$.
\end{mydef}
\begin{lemma}\label{thm:monov}
For an arbitrary pair of vertices $i$ and $j$ in graph $\mathcal{G}$, the generalized voltage $\phi_{j,\mathcal{X}}(i)$ is a monotone decreasing function of $\mathcal{X}$. Namely, for two nonempty vertex groups $\mathcal{X}\subseteq\mathcal{Y}\subseteq\mathcal{V}$,   
	\begin{align}
		\phi_{j,\mathcal{X}}(i) \geq \phi_{j,\mathcal{Y}}(i).
	\end{align}
\end{lemma}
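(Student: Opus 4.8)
\emph{Proof plan.} The plan is to give a direct single-trajectory probabilistic argument using the very definition of the generalized voltage as an absorption probability of the random walk, rather than appealing to the Laplacian machinery. First I would fix the starting vertex $i$ and consider one random walk $(W_0, W_1, W_2, \ldots)$ on $\mG$ with $W_0 = i$ governed by the transition matrix $\PP$. For any vertex set $\mathcal{S}$ write $\tau_{\mathcal{S}} = \min\{t \ge 0 : W_t \in \mathcal{S}\}$ for its first hitting time, and write $\tau_j = \tau_{\{j\}}$. By the defining description of $\phi_{j,\mathcal{S}}(i)$ as the probability that the walk from $i$ reaches $j$ before visiting any vertex of $\mathcal{S}$, these generalized voltages are exactly $\phi_{j,\mathcal{X}}(i) = \mathrm{Pr}(\tau_j < \tau_{\mathcal{X}})$ and $\phi_{j,\mathcal{Y}}(i) = \mathrm{Pr}(\tau_j < \tau_{\mathcal{Y}})$, where both probabilities refer to the \emph{same} random walk, so the comparison can be carried out pathwise on a common probability space with no need for a coupling.

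The key step is the pathwise monotonicity of hitting times: since $\mathcal{X} \subseteq \mathcal{Y}$, every vertex of $\mathcal{X}$ is a vertex of $\mathcal{Y}$, so along any trajectory the walk meets $\mathcal{Y}$ no later than it meets $\mathcal{X}$, i.e. $\tau_{\mathcal{Y}} \le \tau_{\mathcal{X}}$ holds surely. Consequently, on any trajectory for which $\tau_j < \tau_{\mathcal{Y}}$ we also have $\tau_j < \tau_{\mathcal{Y}} \le \tau_{\mathcal{X}}$, giving the event inclusion
\begin{equation*}
\{\tau_j < \tau_{\mathcal{Y}}\} \subseteq \{\tau_j < \tau_{\mathcal{X}}\}.
\end{equation*}
Taking probabilities yields $\phi_{j,\mathcal{Y}}(i) = \mathrm{Pr}(\tau_j < \tau_{\mathcal{Y}}) \le \mathrm{Pr}(\tau_j < \tau_{\mathcal{X}}) = \phi_{j,\mathcal{X}}(i)$, which is exactly the claimed inequality $\phi_{j,\mathcal{X}}(i) \ge \phi_{j,\mathcal{Y}}(i)$.

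For full rigor I would then verify that these events are well defined. Because $\mG$ is strongly connected, $\PP$ is irreducible, so on the finite state space the walk hits the set $\mathcal{X} \cup \{j\}$ (resp. $\mathcal{Y} \cup \{j\}$) in finite time almost surely; hence $\min(\tau_j, \tau_{\mathcal{X}})$ and $\min(\tau_j, \tau_{\mathcal{Y}})$ are finite with probability one and $\{\tau_j < \tau_{\mathcal{X}}\}$ is a genuine event. The only thing to be careful about is the bookkeeping of degenerate cases under the adopted conventions — for instance if $i \in \mathcal{X}$, or if $j \in \mathcal{Y}\setminus\mathcal{X}$ so that the target $j$ is itself forbidden. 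These are not genuine obstacles: with the strict-inequality convention above, any such case forces $\phi_{j,\mathcal{Y}}(i) = 0$ while $\phi_{j,\mathcal{X}}(i) \ge 0$, so the inequality still holds, and the generic case $i,j \notin \mathcal{Y}$ (the one relevant to the resistance distance $\Omega(\mX)$) is covered by the pathwise inclusion above.

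As an alternative staying closer to the paper's harmonic-function framework, I would instead invoke a discrete minimum principle: restricting the harmonic function $\phi_{j,\mathcal{X}}$ to the smaller domain $\mV \setminus (\mathcal{Y} \cup \{j\})$, it remains harmonic there and dominates $\phi_{j,\mathcal{Y}}$ on the boundary $\mathcal{Y} \cup \{j\}$ (they agree with value $1$ at $j$, while $\phi_{j,\mathcal{X}} \ge 0 = \phi_{j,\mathcal{Y}}$ on $\mathcal{Y}$); the minimum principle for harmonic functions on the strongly connected digraph then propagates $\phi_{j,\mathcal{X}} - \phi_{j,\mathcal{Y}} \ge 0$ to the interior, recovering the claim. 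I expect the probabilistic route to be the shorter one, with the conventions for the degenerate cases being the only delicate point.
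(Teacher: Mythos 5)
Your proof is correct and is essentially the paper's own argument: the authors define the event $B_{i,j}(\mZ)$ that the walk reaches $j$ without first visiting $\mZ$ and observe that $B_{i,j}(\mY)$ implies $B_{i,j}(\mX)$ when $\mX\subseteq\mY$, which is exactly your pathwise inclusion $\{\tau_j<\tau_{\mY}\}\subseteq\{\tau_j<\tau_{\mX}\}$. Your additional remarks on well-definedness and degenerate cases, and the alternative minimum-principle route, go beyond what the paper writes but do not change the substance.
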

\begin{proof}
Given an instantiation of a random walk, let $B_{i,j}(\mZ)$ denote the proposition that the random walk started at vertex $i$ and visited vertex $j$ without visiting any vertex in $\mZ$, and let $\mathrm{P}(B_{i,j}(\mZ))$  denote the probability that the proposition $B_{i,j}(\mZ)$ is true. Since for any vertex groups $\mX$ and $\mY$ such that $\mX\subset\mY$, $B_{i,j}(\mY)$ implies $B_{i,j}(\mX)$,  we have
	\begin{equation*}
		\phi_{j,\mathcal{X}}(i)=\mathrm{P}\left(B_{i,j}(\mX)\right)\geq\mathrm{P}\left(B_{i,j}(\mY)\right)=\phi_{j,\mathcal{Y}}(i),
	\end{equation*} 
leading to the result.
\end{proof}

In order to prove the monotonicity of function $\Omega(\mX)$, we introduce a notion, random detour ($i\to\mathcal{X}\to j$), which is a constrained random walk. Concretely, ($i\to\mathcal{X}\to j$) denotes a random walk starting from vertex $i$, that must visit some transit vertex in set $\mathcal{X}$, before it reaches vertex $j$ and stops. We write $H(i,\mathcal{X},j)$ to denote the expected number of jumping for a random detour ($i\to\mathcal{X}\to j$) to finish such a random walk. Note that when $\mX$ includes only one vertex, the defined random detour reduces to Definition 3 in~\cite{Ra13}.

\begin{lemma}\label{thm:monoh}
For an arbitrary pair of vertices $i$ and $j$ in graph $\mG=(\mV,\mE,\WW)$, $H(i,\mathcal{X},j)$ is a monotone decreasing function of the set $\mathcal{X}$ of transit vertices. In other words, for two vertex sets $\mathcal{X}\subseteq\mathcal{Y}\subseteq\mathcal{V}$,   
	\begin{align}
		H(i,\mathcal{X},j) \geq H(i,\mathcal{Y},j).
	\end{align}
\end{lemma}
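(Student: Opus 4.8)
The plan is to prove the pathwise (coupling) version of the inequality, which is far cleaner than manipulating the linear systems that define $H$. First I would fix a single realization of the random walk started at vertex $i$, writing the trajectory as $X_0 = i, X_1, X_2, \ldots$, and realize both detour times $(i\to\mathcal{X}\to j)$ and $(i\to\mathcal{Y}\to j)$ on this one common sample path. Setting $\tau_{\mathcal{X}} = \min\{t \geq 0 : X_t \in \mathcal{X}\}$ and $\tau_{\mathcal{Y}} = \min\{t \geq 0 : X_t \in \mathcal{Y}\}$ to be the first times the walk enters the respective transit sets, the completion time of the detour through $\mathcal{X}$ is precisely $T_{\mathcal{X}} = \min\{t \geq \tau_{\mathcal{X}} : X_t = j\}$, that is, the first visit to $j$ occurring after a transit vertex of $\mathcal{X}$ has been seen, and similarly $T_{\mathcal{Y}} = \min\{t \geq \tau_{\mathcal{Y}} : X_t = j\}$. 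By construction $H(i,\mathcal{X},j) = \mathrm{E}[T_{\mathcal{X}}]$ and $H(i,\mathcal{Y},j) = \mathrm{E}[T_{\mathcal{Y}}]$, so the lemma reduces to comparing these two stopping times.

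The key observation is a pathwise domination. Since $\mathcal{X} \subseteq \mathcal{Y}$, any trajectory that has entered $\mathcal{X}$ has a fortiori entered $\mathcal{Y}$, so $\tau_{\mathcal{Y}} \leq \tau_{\mathcal{X}}$ on every sample path. Consequently the feasible index set $\{t \geq \tau_{\mathcal{X}} : X_t = j\}$ defining $T_{\mathcal{X}}$ is contained in the set $\{t \geq \tau_{\mathcal{Y}} : X_t = j\}$ defining $T_{\mathcal{Y}}$, whence $T_{\mathcal{Y}} \leq T_{\mathcal{X}}$ holds deterministically for every trajectory. Taking expectations over the common probability space then yields $H(i,\mathcal{Y},j) = \mathrm{E}[T_{\mathcal{Y}}] \leq \mathrm{E}[T_{\mathcal{X}}] = H(i,\mathcal{X},j)$, which is exactly the claim.

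Two routine points would round this out. First, I would confirm the expectations are finite so the inequality is meaningful: because $\mG$ is finite and strongly connected, the strong Markov property decomposes $T_{\mathcal{X}}$ into the (finite-expectation) hitting time of $\mathcal{X}$ plus the hitting time of $j$ from the entry vertex, both finite. Second, I would note the argument is insensitive to the boundary cases $i \in \mathcal{X}$ (then $\tau_{\mathcal{X}} = 0$) or $j \in \mathcal{Y}\setminus\mathcal{X}$ (then the very first visit to $j$ already fulfills the transit condition for $\mathcal{Y}$), since the inclusion bound $T_{\mathcal{Y}} \leq T_{\mathcal{X}}$ was derived using nothing beyond $\mathcal{X}\subseteq\mathcal{Y}$.

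I do not anticipate a genuine obstacle: the whole content is the elementary implication $\tau_{\mathcal{Y}} \leq \tau_{\mathcal{X}} \Rightarrow T_{\mathcal{Y}} \leq T_{\mathcal{X}}$, and the only care required is in writing the detour completion time correctly as \emph{the first arrival at $j$ after the transit set has been hit} rather than merely \emph{the first arrival at $j$}. A first-step analysis writing recursive equations for $H(i,\cdot,j)$ and comparing their solutions would also work, but it is considerably more cumbersome, whereas the coupling delivers the monotonicity essentially for free.
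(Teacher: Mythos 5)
Your coupling argument is correct, and it takes a genuinely different route from the paper. The paper proceeds in two stages: it first establishes the single\-/vertex increment $H(i,\mathcal{X},j)\geq H(i,\mathcal{X}+k,j)$ by a first\-/step/conditional\-/expectation decomposition of the detour time according to whether the walk reaches $k$ before $\mathcal{X}\cup\{j\}$ or reaches $j$ before $k$, arriving at the exact identity $H(i,\mathcal{X},j)-H(i,\mathcal{X}+k,j)=\phi_{k,\mathcal{X}+j}(i)\,\phi_{j,\mathcal{X}}(k)\,H(j,\mathcal{X},j)\geq 0$, and then interpolates from $\mathcal{X}$ to $\mathcal{Y}$ by adding the vertices of $\mathcal{Y}\setminus\mathcal{X}$ one at a time. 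You instead realize both detours as stopping times $T_{\mathcal{X}}=\min\{t\geq\tau_{\mathcal{X}}:X_t=j\}$ and $T_{\mathcal{Y}}=\min\{t\geq\tau_{\mathcal{Y}}:X_t=j\}$ of a single trajectory and observe that $\mathcal{X}\subseteq\mathcal{Y}$ forces $\tau_{\mathcal{Y}}\leq\tau_{\mathcal{X}}$ and hence $T_{\mathcal{Y}}\leq T_{\mathcal{X}}$ pathwise, which gives the inequality in one step for arbitrary $\mathcal{X}\subseteq\mathcal{Y}$ with no induction and no case analysis; your formalization of the detour completion time as \emph{the first visit to $j$ after the transit set has been hit} matches the paper's definition (and its use with $i=j$ for commute times). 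What the paper's longer route buys is the quantitative identity for the marginal decrease, which is of the same flavor as the decomposition it reuses in the supermodularity proof of Proposition 5.5; what your route buys is brevity, robustness (it sidesteps the somewhat delicate enumeration of cases in the paper's conditional decomposition), and the fact that monotonicity is exposed as a purely deterministic statement about stopping times. Both are valid proofs of the lemma.
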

\begin{proof}
  We first prove that for any vertex $k\in\mathcal{V}\backslash\{i,j\}$,
    \begin{equation}\label{eq:Hixj}
        H(i,\mathcal{X},j)\geq H(i,\mathcal{X}+k,j).
    \end{equation}
For this purpose, we use $A_s(t,\mZ)$ to denote the event that a random walk starting from vertex $s$ reaches vertex $t$ without visiting any vertex in set $\mZ$, and use $A_s(t,l)$ to denote the event that a random walk starting from vertex $s$ arrives at vertex $t$ without visiting vertex $l$. We write $\Psi_{s,t}(\mZ)$ to denote the number of jumping needed by a random walk starting from vertex $s$ to stop at vertex $t$ after passing through some vertex in $\mZ$. Then, $H(i,\mX,j)$ can be calculated as
    \begin{equation*}
    \begin{aligned}
        H(i,\mX,j)=\mathrm{E}(\Psi_{i,j}(\mX)) 
        =&\left(\mathrm{E}(\Psi_{i,j}(\{k\})|A_i(k,\mX+j))+H(j,\mX,j)\right)\phi_{k,\mX+j}(i)\phi_{j,\mX}(k)\\&+\mathrm{E}(\Psi_{i,j}(\mX)|A_i(j,k))\phi_{j,k}(i).
    \end{aligned}
    \end{equation*}
     Similarly, $H(i,\mathcal{X}+k,j)$ can be calculated as 
     \begin{equation*}
         \begin{aligned}
             H(i,\mX+k,j)=\mathrm{E}(\Psi_{i,j}(\mX+k))            =&\mathrm{E}(\Psi_{i,j}(\{k\})|A_i(k,\mX+j))\phi_{k,\mX+j}(i)\phi_{j,\mX}(k)\\&+\mathrm{E}(\Psi_{i,j}(\mX)|A_i(j,k))\phi_{j,k}(i).
         \end{aligned}
     \end{equation*}
Combining the above two relations, one has
    \begin{equation*}
    	\begin{aligned}    		
     H(i,\mathcal{X},j)-H(i,\mathcal{X}+k,j)=\phi_{k,\mX+j}(i)\phi_{j,\mX}(k)H(j,\mathcal{X},j).
    	\end{aligned}
    \end{equation*}
    Since $\phi_{k,\mX+j}(i)\phi_{j,\mX}(k)$ is nonnegative, we have $H(i,\mathcal{X},j)$
    $\geq H(i,\mathcal{X}+k,j)$. Suppose that $\mathcal{Y}\backslash\mathcal{X}=\{i_1,i_2,\ldots,i_q\}$, where $q=\lvert\mathcal{Y}\rvert-\lvert\mathcal{X}\rvert$. Then we can construct a series of vertex sets $\mathcal{Y}_0,\mathcal{Y}_1,\mathcal{Y}_2,\ldots\mathcal{Y}_q$ satisfying $\mathcal{Y}_0=\mathcal{X}$, $\mathcal{Y}_q=\mathcal{Y}$, and $\mathcal{Y}_{p}=\mathcal{Y}_{p-1}+i_p$ for $p=1,2,\ldots,q$. By using~\eqref{eq:Hixj}, we have
    \begin{align*}
H(i,\mathcal{X},j)=H(i,\mathcal{Y}_0,j)&\geq H(i,\mathcal{Y}_1,j)
\geq\ldots\geq H(i,\mathcal{Y}_q,j)=H(i,\mathcal{Y},j),
    \end{align*}
which completes the proof.
\end{proof}

\begin{proposition}\label{thm:monoc}
The resistance distance $\Omega(\mathcal{X})$ is a monotone decreasing and supermodular function of the set $\mathcal{X}$. In other words, for two nonempty vertex sets $\mathcal{X}\subseteq\mathcal{Y}\subseteq\mathcal{V}$ and any vertex $i\in\mathcal{V}$ with $i\notin\mathcal{Y}$,   
	\begin{align}\label{eq:monorc}
		\Omega(\mathcal{X}) \geq \Omega(\mathcal{Y})
	\end{align}
	and
	\begin{equation}\label{eq:supc}
		\Omega(\mathcal{X}) - \Omega\left(\mathcal{X}+i\right) \geq \Omega(\mathcal{Y}) - \Omega\left(\mathcal{Y}+i\right).
	\end{equation} 
\end{proposition}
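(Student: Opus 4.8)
The plan is to reduce both properties to statements about closed random detours, for which the two preceding lemmas do almost all the work. First I would rewrite the objective in probabilistic form. By Theorem~\ref{thm:rciX} we have $\Omega(i,\mX)=\frac{1}{d_{\mG}}C(i,\mX)$, and the commute time $C(i,\mX)$ is exactly the expected length of the closed detour $(i\to\mX\to i)$, that is, $C(i,\mX)=H(i,\mX,i)$ (with the convention $H(i,\mX,i)=0$ when $i\in\mX$). Hence, by Definition~\ref{thm:centdef},
\begin{equation*}
\Omega(\mX)=\sum_{i\in\mV}\Omega(i,\mX)=\frac{1}{d_{\mG}}\sum_{i\in\mV}H(i,\mX,i),
\end{equation*}
so it suffices to analyze each summand $H(i,\mX,i)$ separately and then add.

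Monotonicity is then immediate. Fix $\mX\subseteq\mY$ and consider the $i$-th summand. If $i\in\mX$ both detour times vanish; if $i\in\mY\setminus\mX$ then $H(i,\mX,i)\geq 0=H(i,\mY,i)$; and if $i\notin\mY$ then $H(i,\mX,i)\geq H(i,\mY,i)$ directly by Lemma~\ref{thm:monoh} applied to the closed detour. Summing the three cases over $i\in\mV$ yields $\Omega(\mX)\geq\Omega(\mY)$, which is~\eqref{eq:monorc}.

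For supermodularity I would prove the stronger termwise inequality: for every $i\notin\mY$ and every $j\in\mV$,
\begin{equation*}
H(j,\mX,j)-H(j,\mX+i,j)\geq H(j,\mY,j)-H(j,\mY+i,j),
\end{equation*}
and then sum over $j$ and divide by $d_{\mG}$ to obtain~\eqref{eq:supc}. The key ingredient is the marginal-gain identity established inside the proof of Lemma~\ref{thm:monoh}, specialized to a closed detour with the newly added vertex $i$:
\begin{equation*}
H(j,\mX,j)-H(j,\mX+i,j)=\phi_{i,\mX+j}(j)\,\phi_{j,\mX}(i)\,H(j,\mX,j).
\end{equation*}
Each of the three factors on the right is nonnegative, and each is monotone decreasing as the set grows from $\mX$ to $\mY$: the two generalized-voltage factors decrease by Lemma~\ref{thm:monov} (using $\mX+j\subseteq\mY+j$ and $\mX\subseteq\mY$), and the detour factor decreases by Lemma~\ref{thm:monoh}. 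Since a product of nonnegative nonincreasing quantities is nonincreasing, the right-hand side evaluated at $\mX$ dominates that at $\mY$, which gives the termwise inequality for every $j\notin\mY$ with $j\neq i$. The remaining indices need only the monotonicity already in hand: for $j\in\mY$ the $\mY$-side marginal is zero while the $\mX$-side marginal is nonnegative, and for $j=i$ both added-vertex terms vanish so the difference of marginals reduces to $H(i,\mX,i)-H(i,\mY,i)\geq 0$.

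The main obstacle is making the marginal-gain identity rigorous for a \emph{closed} detour. The derivation in Lemma~\ref{thm:monoh} is written for a detour $(i\to\mX\to j)$ with distinct endpoints; I would re-run the same first-passage decomposition with coincident start and end vertex $j$, conditioning on whether the walk from $j$ reaches the added vertex $i$ before it completes a valid detour through $\mX$, and verify that the bookkeeping of the hitting events and the recurring factor $H(j,\mX,j)$ carries over unchanged. Once this degenerate case is confirmed, the monotonicity of the three factors together with the elementary product argument finishes supermodularity, and the edge cases ($j\in\mY$ and $j=i$) follow purely from the monotonicity part.
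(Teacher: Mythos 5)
Your proposal is correct and shares the paper's overall architecture: reduce $\Omega(\mX)$ to a sum of commute times via Theorem~\ref{thm:rciX}, get monotonicity from the $i=j$ case of Lemma~\ref{thm:monoh}, and get supermodularity termwise from a product formula for the marginal gain whose factors are each monotone by Lemmas~\ref{thm:monov} and~\ref{thm:monoh}. Where you diverge is in \emph{which} product formula you use. The paper derives a fresh two-factor identity by conditioning the commute-time walk from $i$ on whether it visits the added vertex $j$ before $\mX$, obtaining $C(i,\mX)-C(i,\mX+j)=\phi_{j,\mX}(i)\bigl(H(j,\mX,i)-H(j,i)\bigr)$, and then argues that the voltage and the bracketed difference are each nonnegative and decreasing in the set. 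You instead recycle the three-factor identity from inside the proof of Lemma~\ref{thm:monoh}, specialized to coincident endpoints, $H(j,\mX,j)-H(j,\mX+i,j)=\phi_{i,\mX+j}(j)\,\phi_{j,\mX}(i)\,H(j,\mX,j)$; the two expressions agree (they are easily checked to coincide on small examples), and yours is arguably tidier once granted, since all three factors are manifestly nonnegative and individually monotone, with no auxiliary inequality such as $H(j,\mX,i)\geq H(j,i)$ needed. The price is the obligation you yourself flag: the identity in Lemma~\ref{thm:monoh} is derived for distinct endpoints, so you must re-justify the first-passage decomposition when start and end coincide, and you must apply Lemma~\ref{thm:monov} to the voltage $\phi_{i,\mX+j}(j)$ whose base point $j$ lies in the avoidance set, which requires the ``before returning to $j$'' convention; both steps go through (the decomposition at the first visit to the added vertex is insensitive to whether the terminal vertex equals the start, and the event-inclusion argument in Lemma~\ref{thm:monov} is likewise unaffected), but the paper's route avoids them entirely by never touching the degenerate detour. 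A small point in your favor: you explicitly dispose of the edge cases $j\in\mY$ and $j=i$ in the termwise sum, which the paper's final summation glosses over (it even reuses $i$ as both the summation index and the added vertex).
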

\begin{proof}
According to Theorem~\ref{thm:rciX} and Definition~\ref{thm:centdef}, we have 
	\begin{align}\label{eq:OmegaComX}
		\Omega(\mathcal{X}) = \frac{1}{d_{\mG}}\sum\limits_{i=1}^n C(i,\mathcal{X}).
	\end{align}
Thus, we can alternatively prove this theorem by showing the monotonicity and supermodularity of the commute time $C(i,\mX)$.
Considering a special case $i=j$ of Lemma~\ref{thm:monoh}, we have
    \begin{align}\label{eq:monoc}
    	C(i,\mathcal{X}) \geq C(i,\mathcal{Y}).
    \end{align}
    Combining~\eqref{eq:OmegaComX} and~\eqref{eq:monoc} gives
    \begin{equation*}
    	\Omega(\mathcal{X})=\frac{1}{d_{\mG}}\sum_{i=1}^{n}C(i,\mathcal{X})
    	\geq\frac{1}{d_{\mG}}\sum_{i=1}^{n}C(i,\mathcal{Y})=\Omega(\mathcal{Y}),
    \end{equation*}
 which implies the monotonicity of $\Omega(\mX)$.

We continue to prove the supermodularity of $\Omega(\mX)$. As the notation $A_s(t,\mZ)$ defined in the proof of Lemma~\ref{thm:monoh}, for any pair of vertices $i$ and $j$ in $\mV$ and a set of vertices $\mX\subseteq\mV$, we introduce two notations $A_i(j,\mX)$ and $A_i(\mX,j)$, where $A_i(j,\mX)$ has been used in the proof of Lemma~\ref{thm:monoh} and $A_i(\mX,j)$ represents the event that a random walk starting from vertex $i$ reaches some vertex in $\mX$ without visiting vertex $j$. Then,  $\mathrm{P}\kh{A_i(j,\mX)}=\phi_{j,\mX}(i)$ and $\mathrm{P}\kh{A_i(\mX,j)}=1-\mathrm{P}\kh{A_i(j,\mX)}$ hold.
    
Let $\Psi_i(\mX)$ be the number of jumping required by a random walk originating from vertex $i$ visits some vertex in $\mX$ and comes back to $i$.  It is easy to see that the expectation $\mathrm{E}(\Psi_i(\mX))$ of the random variable $\Psi_i(\mX)$ is equal to $C(i,\mX)$. By definition, $C(i,\mX)$ can be calculated as
    \begin{equation*}
    	C(i,\mX)  =  \mathrm{E}\kh{\Psi_i(\mX)} 
    	=  \phi_{j,\mX}(i)\mathrm{E}\kh{\Psi_i(\mX)|A_i(j,\mX)}  +(1-\phi_{j,\mX}(i))\mathrm{E}\kh{\Psi_i(\mX)|A_i(\mX,j)},
    \end{equation*}
where $j\notin\mX$.
Similarly, $C(i,\mX+j)$ can be evaluated as
    \begin{equation*}
    		C(i,\mX+j)= \phi_{j,\mX}(i)\mathrm{E}\kh{\Psi_i(\mX+j)|A_i(j,\mX)}  
    		+(1-\phi_{j,\mX}(i))\mathrm{E}\kh{\Psi_i(\mX+j)|A_i(\mX,j)}.
    \end{equation*}
 Then, we have
    \begin{equation}\label{eq:CixmCixj}
    	\begin{aligned}
    		C(i,\mX)-C(i,\mX+j)
    		=&\phi_{j,\mX}(i)\mathrm{E}\kh{\Psi_i(\mX)-\Psi_i(\mX+j)|A_i(j,\mX)}\\
    		&+(1-\phi_{j,\mX}(i))\mathrm{E}\kh{\Psi_i(\mX)-\Psi_i(\mX+j)|A_i(\mX,j)}.
    	\end{aligned}
    \end{equation}
Since $\Psi_i(\mX)=\Psi_i(\mX+j)$ holds with probability $\mathrm{P}\kh{A_i(j,\mX)}$
$=\phi_{j,\mX}(i)$,~\eqref{eq:CixmCixj} can be rewritten as
    \begin{align*}
    	C(i,\mX)-C(i,\mX+j)
    	=&\phi_{j,\mX}(i)\mathrm{E}\kh{\Psi_i(\mX)-\Psi_i(\mX+j)|A_i(j,\mX)}\\
    	=&\phi_{j,\mX}(i)\left(H(j,\mX,i)-H(j,i)\right).
    \end{align*}
In a similar way, we obtain
    \begin{equation*}
    	C(i,\mY)-C(i,\mY+j)=\phi_{j,\mY}(i)\left(H(j,\mY,i)-H(j,i)\right).
    \end{equation*}
By Lemma~\ref{thm:monov} one has $\phi_{j,\mX}(i)\geq \phi_{j,\mY}(i)$, and by Lemma~\ref{thm:monoh} one has $H(j,\mX,i)\geq H(j,\mY,i)$. Then, we derive 
    \begin{equation}\label{eq:superCom}
    	C(i,\mathcal{X}) - C(i,\mathcal{X}+j) \geq C(i,\mathcal{Y}) - C(i,\mathcal{Y}+j),
    \end{equation}
Combining~\eqref{eq:OmegaComX} and~\eqref{eq:superCom} gives 
	\begin{align*}
		\Omega(\mathcal{X}) - \Omega\left(\mathcal{X}+i\right)
		=&\frac{1}{d_{\mG}}\sum_{i=1}^{n}C(i,\mathcal{X})-\frac{1}{d_{\mG}}\sum_{i=1}^{n}C(i,\mathcal{X}+i)\\
		=&\frac{1}{d_{\mG}}\sum_{i=1}^{n}\left[C(i,\mathcal{X})-C(i,\mathcal{X}+i)\right]\\
		\geq&\frac{1}{d_{\mG}}\sum_{i=1}^{n}\left[C(i,\mathcal{Y})-C(i,\mathcal{Y}+i)\right]\\
		=&\Omega(\mathcal{Y}) - \Omega\left(\mathcal{Y}+i\right),
	\end{align*}
which finishes the proof.
\end{proof}

\subsection{Problem Formulation and Algorithms}

Proposition~\ref{thm:monoc} shows that $\Omega(\mX)$ is a decreasing set function. In other words, the addition of any vertex into set $\mX$ will lead to a decrease of the resistance distance $\Omega(\mX)$. Then, we naturally raise the following problem: How to optimally select a set $\mX$ with $1 \leq k \ll n$ vertices in $\mV$, so that the resistance distance $\Omega(\mX)$ of the vertex set $\mX$ is minimized. Mathematically, the resistance distance minimization problem is formally stated as follows.

\begin{tcolorbox}
\begin{myprob}[Resistance Distance Minimization, RDM]\label{prob:rdm}
Given a weighted digraph $\mathcal{G} = (\mathcal{V}, \mathcal{E},\WW)$ with $n$ vertices and $m$ edges, and an integer $1 \leq k \ll n$, find a set $\mathcal{X}^*\subset \mathcal{V}$ of $k$ vertices such that the resistance distance $\Omega(\mathcal{X}^{*})$ is minimized. This set optimization problem is formulated as: 
            $\mathcal{X}^{*}\in \argmin\limits_{\mathcal{X}\subset\mathcal{V},\lvert\mathcal{X}\rvert=k}\Omega(\mathcal{X})$.
	\end{myprob}
\end{tcolorbox}

Problem~\ref{prob:rdm} is inherently a combinatorial problem. It can be solved accurately by the following na\"{\i}ve brute-force method. For each set $\mX$ of the $\binom{n}{k}$ possible subsets of vertices, compute the resistance distance of $\mX$. Then, output the set $\mX^*$ of vertices, which has the minim resistance distance. This method appears to be simple, but it is computationally impractical even for small-scale graphs, since its computation complexity is exponential, scaling with $n$ as $O\left(\binom{n}{k}\cdot n^3\right)$. In fact, Problem~\ref{prob:rdm} is NP-hard, since it is even NP-hard for 3-regular undirected graphs~\cite{LiPeShYiZh19}, for which there is a reduction from Problem~\ref{prob:rdm} to the problem of vertex cover.

To tackle the exponential complexity of a combinatorial optimization problem, one often resorts to greedy heuristic approaches. Due to the monotonicity and supermodularity of the objective function $\Omega(\cdot)$ for Problem~\ref{prob:rdm}, a simple greedy method is guaranteed to have a $(1 - \frac{k}{k-1}\cdot\frac{1}{e})$-approximation solution to this Problem~\cite{NeWoFi78}. This greedy method is as follows. Initially, the vertex set $\mX$ is set to be empty. Then $k$ vertices are added to $\mX$, each of which is chosen iteratively from set $\mV\backslash\mX$. In every iteration step $i$, vertex $v_i$ in the set $\mV\backslash\mX$ of candidate vertices is selected, such that the resistance distance $\Omega(\mX +v_i)$ is minimum among all $\Omega(\mX +v)$ with $v \in \mV\backslash\mX$. The iteration process terminates when $k$ vertices are chosen to be added to $\mX$. In each iteration, we need to compute the resistance distance $\Omega(\mX +v)$ for every candidate vertex $v$  in  $\mV\backslash\mX$, which involves matrix inversion. Suppose that directly inverting a matrix requires $O(n^3)$ time, the total computation complexity of the simple greedy method is $O(kn^4)$ for small $k$.

The $O(kn^4)$ complexity of the above simple greedy method is still very high. It can be reduced to $O(n^3)$ without sacrificing its accuracy. For this purpose, we first provide an expression of the marginal gain of each vertex $v\notin \mX$. 
\begin{lemma}\label{thm:margin}
For a vertex set $\mZ \subset \mV$ and a vertex $v\notin\mZ$ in a
 weighted digraph $G = (\mV, \mE, \WW)$, define $\Delta(\mZ,v)\defeq\Omega(\mZ) - \Omega(\mZ+v)$. Then,
\begin{equation*}
     \Delta(\mZ,v) = \frac{\ee_v^\top \LL_{\backslash\mZ}^{-2} \ee_v}{
		\ee_v^\top \LL_{\backslash\mZ}^{-1} \ee_v}.
\end{equation*}
\end{lemma}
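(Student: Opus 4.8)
The plan is to reduce everything to a single block manipulation of the invertible submatrix $\LL_{\backslash\mZ}$. Starting from $\Omega(\mZ)=\Tr(\LL_{\backslash\mZ}^{-1})$ and $\Omega(\mZ+v)=\Tr(\LL_{\backslash(\mZ+v)}^{-1})$ of Theorem~\ref{GroupOmegaD}, the marginal gain becomes $\Delta(\mZ,v)=\Tr(\LL_{\backslash\mZ}^{-1})-\Tr(\LL_{\backslash(\mZ+v)}^{-1})$. The crucial structural fact is that $\LL_{\backslash(\mZ+v)}$ is exactly $\LL_{\backslash\mZ}$ with the single row and column indexed by $v$ deleted. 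Since the two scalars $\ee_v^\top\LL_{\backslash\mZ}^{-1}\ee_v$ and $\ee_v^\top\LL_{\backslash\mZ}^{-2}\ee_v$ are invariant under a simultaneous permutation of rows and columns, I would first relabel so that $v$ occupies the last coordinate of $\mV\setminus\mZ$, and write the invertible matrix $\bm{M}\defeq\LL_{\backslash\mZ}$ (invertibility from Lemma~\ref{thm:invertL}) in block form $\bm{M}=\kh{\begin{smallmatrix}\bm{A}&\bm{b}\\\bm{c}^\top&d\end{smallmatrix}}$ with $\bm{A}=\LL_{\backslash(\mZ+v)}$ and $d=\bm{M}_{v,v}$.

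Next I would apply the standard block-inverse (Schur complement) identity. Writing $s\defeq d-\bm{c}^\top\bm{A}^{-1}\bm{b}$ for the Schur complement of $\bm{A}$, one has
\[
\bm{M}^{-1}=\begin{pmatrix}\bm{A}^{-1}+\tfrac{1}{s}\bm{A}^{-1}\bm{b}\bm{c}^\top\bm{A}^{-1}&-\tfrac{1}{s}\bm{A}^{-1}\bm{b}\\-\tfrac{1}{s}\bm{c}^\top\bm{A}^{-1}&\tfrac{1}{s}\end{pmatrix}.
\]
Taking traces and using the cyclic property $\Tr(\bm{A}^{-1}\bm{b}\bm{c}^\top\bm{A}^{-1})=\bm{c}^\top\bm{A}^{-2}\bm{b}$ would yield
\[
\Delta(\mZ,v)=\Tr(\bm{M}^{-1})-\Tr(\bm{A}^{-1})=\frac{1}{s}\kh{1+\bm{c}^\top\bm{A}^{-2}\bm{b}}.
\]

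Finally, I would read off the two quadratic forms in the claimed formula from the same block inverse. The bottom-right entry gives $\ee_v^\top\bm{M}^{-1}\ee_v=\tfrac{1}{s}$, while $\ee_v^\top\bm{M}^{-2}\ee_v$ is the bottom-right entry of $\bm{M}^{-1}\bm{M}^{-1}$, i.e.\ the inner product of the last row and last column of $\bm{M}^{-1}$, which evaluates to
\[
\ee_v^\top\bm{M}^{-2}\ee_v=\frac{1}{s^2}\bm{c}^\top\bm{A}^{-2}\bm{b}+\frac{1}{s^2}=\frac{1}{s^2}\kh{1+\bm{c}^\top\bm{A}^{-2}\bm{b}}.
\]
Dividing the two would give $\ee_v^\top\bm{M}^{-2}\ee_v\big/\ee_v^\top\bm{M}^{-1}\ee_v=\tfrac{1}{s}\kh{1+\bm{c}^\top\bm{A}^{-2}\bm{b}}$, which coincides with the expression for $\Delta(\mZ,v)$ above, completing the argument.

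The computation is essentially routine linear algebra, so I do not expect a deep obstacle; the points demanding care will be (i) justifying invertibility of both $\LL_{\backslash\mZ}$ and $\LL_{\backslash(\mZ+v)}$ through Lemma~\ref{thm:invertL} (which needs $\mZ$ nonempty and $\mZ+v\subsetneqq\mV$), (ii) confirming that the permutation bringing $v$ to the last coordinate leaves the permutation-invariant scalars $\ee_v^\top\LL_{\backslash\mZ}^{-1}\ee_v$ and $\ee_v^\top\LL_{\backslash\mZ}^{-2}\ee_v$ unchanged, and (iii) keeping the cross term $\bm{c}^\top\bm{A}^{-2}\bm{b}$ correctly aligned between the trace computation and the $\ee_v^\top\bm{M}^{-2}\ee_v$ computation, since it must appear identically in both for the ratio to collapse exactly to $\Delta(\mZ,v)$.
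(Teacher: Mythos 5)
Your proposal is correct and follows essentially the same route as the paper's own proof: both start from $\Omega(\mZ)=\Tr(\LL_{\backslash\mZ}^{-1})$, isolate the row and column of $v$ in a block decomposition of $\LL_{\backslash\mZ}$ with $\LL_{\backslash(\mZ+v)}$ as the complementary block, apply the Schur-complement block-inverse formula, and identify $\Delta(\mZ,v)=\tfrac{1}{s}(1+\bm{c}^\top\bm{A}^{-2}\bm{b})$ with the ratio of the two quadratic forms read off from the same block inverse. The only difference is the cosmetic choice of placing $v$ in the last rather than the first coordinate, and your cautionary remarks (invertibility via Lemma~\ref{thm:invertL}, which requires $\mZ\neq\emptyset$, and permutation invariance of the quadratic forms) are sensible refinements that the paper leaves implicit.
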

\begin{proof}
 For a vertex $v\notin\mZ$, 
We write $\trace{\LL_{\backslash\mZ}}$
in block form as
\begin{equation*}
    \trace{\LL_{\backslash\mZ}} = \trace{
\begin{pmatrix}
	d_v & \aa^\top \\
	\bb & \LL_{\backslash(\mZ+v)}
\end{pmatrix}},
\end{equation*}
where $\AA = \LL_{\backslash(\mZ+v)}$, $\bm{a}^\top=\LL_{v,\mV\backslash(\mZ+v)}$, and $\bm{b}=\LL_{\mV\backslash(\mZ+v),v}$. 
By blockwise matrix inversion, we obtain
\begin{equation}\label{eq:block}
	\trace{\LL_{\backslash\mZ}^{-1}} \!=\!
	\trace{\!\begin{pmatrix}
		\frac{1}{s} \!\!&\!\!
		-\frac{1}{s} \aa^\top \AA^{-1} \\
		-\frac{1}{s} \AA^{-1} \bb \!\!&\!\!
		\AA^{-1} \!\!+\!\!
		\frac{1}{s}
		\AA^{-1}
		\bb \aa^\top
		\AA^{-1}
	\end{pmatrix}\!},
\end{equation}
where $s = d_u - \aa^\top \AA^{-1} \bb$.  Then   $\Delta(\mZ,v)$ can be expressed as
\begin{equation*}
	\begin{aligned}
        \Delta(\mZ,v) = &\Omega(\mZ)-\Omega(\mZ+v)\\
		=&\trace{\LL_{\backslash\mZ}^{-1}} - \trace{\LL_{\backslash(\mZ+v)}^{-1}}
		\\
		= & \kh{\LL_{\backslash\mZ}^{-1}}_{v,v}
		+ \sum\limits_{u\in \mV\setminus (\mZ + v)}
		\kh{ \kh{\LL_{\backslash\mZ}^{-1}}_{u,u} -
			\kh{\LL_{-(\mZ + v)}^{-1}}_{u,u}} \\
		= & \frac{1}{s} + \frac{1}{s}
		\trace{\AA^{-1} \bb \aa^\top \AA^{-1}}
		= \frac{1}{s} + \frac{1}{s} \aa^\top \AA^{-1} \AA^{-1} \bb \\
		= & \frac{\ee_v^\top \LL_{\backslash\mX}^{-2} \ee_v}{\ee_v^\top \LL_{\backslash\mX}^{-1} \ee_v},
	\end{aligned}
\end{equation*}
where the fourth equality and the sixth equality are obtained according to~\eqref{eq:block},
while the fifth equality follows by the cyclicity of trace.
\end{proof}

By~\eqref{eq:block} in the proof of Lemma~\ref{thm:margin}, after adding a single vertex $v$ to $\mX$ forming set $\mX +v$, $\LL_{\backslash(\mX + v)}^{-1}$ can be looked upon as a rank-$1$ update to matrix $\LL^{-1}_{\mX}$:
\begin{equation*}
	\LL_{\backslash(\mX + v)}^{-1} =
	\kh{
		\LL_{\backslash\mX}^{-1} - \frac{\LL_{\backslash\mX}^{-1} \ee_v \ee_v^\top \LL_{\backslash\mX}^{-1}}{\ee_v^\top \LL_{\backslash\mX}^{-1} \ee_v}
		}_{\backslash v},
\end{equation*}
which can be done in time $O(n^2)$ in the case that $\LL^{-1}_{\mX}$ is already computed, rather than directly inverting
a matrix in time $O(n^3)$. This leads to our fast exact algorithm
\textsc{Exact}($\mG$, $\LL$, $k$) to solve Problem~\ref{prob:rdm}, which is outlined in Algorithm~\ref{alg:ExtGreedy}. 
The algorithm first calculates the pseudoinverse  $\LM$ of matrix $\LL$ by using the formula $\LM= \left(\bm{L} - \frac{1}{n}\1\1^{\top}\right)^{-1} + \frac{1}{n}\1\1^{\top}$ obtained in  Proposition~\ref{thm:psuformula} in time $O(n^3)$ and picks a vertex $v_1$ with minimum resistance distance $\Omega(v_1)$, which can be done by using the relation $\Omega(v) = n\LM_{v,v}+\trace{\LM}$ in Proposition~\ref{thm:OmegaiL}. 
Then it works in $k-1$ rounds, each of which includes two main operations. One is to evaluate  $\Delta(\mX,v)$ in $O(n^2)$ time (Line 4), the other is to update $\LL^{-1}_{\backslash\mX}$ in $O(n^2)$ time.
Therefore, the whole running
time of Algorithm~\ref{alg:ExtGreedy} is $O(n^3+kn^2)$, much smaller than $O(kn^4)$.

\begin{algorithm}[tb]
	\caption{\textsc{Exact}$(\mathcal{G},\LL,k)$}\label{alg:ExtGreedy}
	\DontPrintSemicolon
        \Input{
		A weighted digraph $\mG=(\mV,\mE,\WW)$; the Laplacian matrix $\LL$ of $\mG$;
        an integer $1 \leq k \leq |Q|$\\
	}
	\Output{
		$\mX$: A subset of $\mV$ with $\left|\mX\right|=k$
	}
	Compute $\LL^\dag$\;
	Initialize solution $\mX = \setof{v_1}$ where $v_1 = \argmin_{v\in \mV}\nolimits
	n\kh{\LL^\dag}_{v,v} + \trace{\LL^\dag}
	$\;
	\For{$i=2,\ldots,k$}{
        Compute $\Delta(\mX,v)$ for each $v\in\mV\backslash\mX$\;
        Select $v_i$ s.t. $v_i\gets \argmax_{v\in \mV\setminus \mX}\nolimits\Delta(\mX,v)$\;
		Update solution $\mX\gets\mX+v_i$\;
        Update 
		$
			\LL_{\backslash\mX}^{-1} =
			\kh{
				\LL_{\backslash\mX}^{-1} -
				\frac{\LL_{\backslash\mX}^{-1} \ee_{v_i}
					\ee_{v_i}^\top \LL_{\backslash\mX}^{-1}}
				{\ee_{v_i}^\top \LL_{\backslash\mX}^{-1} \ee_{v_i}}}_
			{\backslash v_i}.
		$\;
	}
	\Return{$\mX$}
\end{algorithm}

On the basis of the well-established result in~\cite{NeWoFi78}, Algorithm~\ref{alg:ExtGreedy} provides a $(1 - \frac{k}{k-1}\cdot\frac{1}{e})$-approximation of the optimal solution to Problem~\ref{prob:rdm}.
\begin{theorem}
	The set $\mX$ returned by Algorithm~\ref{alg:ExtGreedy} satisfies
	\begin{equation*}
		\Omega(v^*) -
		\Omega(\mathcal{X})
		\geq 
		\kh{1 - \frac{k}{k-1}\cdot \frac{1}{e}}
		\kh{\Omega(v^*)-
			\Omega(\mathcal{X}^*)
		},
	\end{equation*}
	where $\mX^*$ is the optimal solution to Problem~\ref{prob:rdm} and $v^*$ is the vertex with the minimum resistance distance for a single vertex, i.e.,
\begin{equation*}
\mathcal{X}^* \defeq
	\argmin\limits_{\sizeof{\mathcal{X}}\leq k}
	\Omega(\mathcal{X}),\quad v^* \defeq \argmin\limits_{v\in \mV}
	\Omega(v).
\end{equation*}	
 
\end{theorem}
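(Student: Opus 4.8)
The plan is to recast Problem~\ref{prob:rdm} as the maximization of a monotone, submodular set function that vanishes on the empty set, and then invoke the classical greedy guarantee of Nemhauser, Wolsey and Fisher~\cite{NeWoFi78}. Concretely, I would fix the first vertex $v^*$ chosen in Line~2 of Algorithm~\ref{alg:ExtGreedy} and introduce the auxiliary function $g:2^{\mV}\to\mathbb{R}$, $g(\mathcal{S})\defeq\Omega(v^*)-\Omega(\setof{v^*}\cup\mathcal{S})$. From Proposition~\ref{thm:monoc}, $\Omega$ is monotone decreasing and supermodular, so one checks immediately that $g(\emptyset)=0$, that $g$ is monotone increasing (enlarging $\mathcal{S}$ can only decrease $\Omega(\setof{v^*}\cup\mathcal{S})$), and that $g$ is submodular, since the marginal decrease of $\Omega$ upon adding a vertex is non-increasing in the base set, which is exactly inequality~\eqref{eq:supc} transcribed for $g$.

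Next I would identify the greedy iterations with greedy maximization of $g$. After Line~2 sets $\mathcal{X}=\setof{v^*}$, the loop over $i=2,\ldots,k$ selects $v_i=\argmax_{v}\Delta(\mathcal{X},v)=\argmax_{v}\kh{\Omega(\mathcal{X})-\Omega(\mathcal{X}+v)}$. Writing $\mathcal{S}=\mathcal{X}\setminus\setof{v^*}$, this marginal gain equals $g(\mathcal{S}+v)-g(\mathcal{S})$, so the $k-1$ iterations are precisely the first $k-1$ steps of the standard greedy procedure applied to $g$, producing a set $\mathcal{S}_{k-1}$ of size $k-1$ with $\mathcal{X}=\setof{v^*}\cup\mathcal{S}_{k-1}$ and $\Omega(v^*)-\Omega(\mathcal{X})=g(\mathcal{S}_{k-1})$.

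Then I would run the textbook submodular argument for $\ell=k-1$ greedy steps against an optimum of cardinality $p=k$. Using monotonicity and submodularity, each step satisfies $g(\mathcal{O})-g(\mathcal{S}_i)\leq\kh{1-\tfrac{1}{k}}\kh{g(\mathcal{O})-g(\mathcal{S}_{i-1})}$ for any $\mathcal{O}$ with $\sizeof{\mathcal{O}}\leq k$; iterating and using $g(\emptyset)=0$ yields $g(\mathcal{S}_{k-1})\geq\kh{1-\kh{1-\tfrac{1}{k}}^{k-1}}g(\mathcal{O})$. Taking $\mathcal{O}=\mathcal{X}^*$ and invoking monotonicity once more, $g(\mathcal{X}^*)=\Omega(v^*)-\Omega(\setof{v^*}\cup\mathcal{X}^*)\geq\Omega(v^*)-\Omega(\mathcal{X}^*)$, so $\Omega(v^*)-\Omega(\mathcal{X})\geq\kh{1-\kh{1-\tfrac{1}{k}}^{k-1}}\kh{\Omega(v^*)-\Omega(\mathcal{X}^*)}$.

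Finally I would reconcile this with the stated ratio by the elementary estimate $\kh{1-\tfrac{1}{k}}^{k-1}=\tfrac{k}{k-1}\kh{1-\tfrac1k}^{k}<\tfrac{k}{k-1}\cdot\tfrac1e$, which uses the standard fact that $(1-1/k)^{k}$ increases to $1/e$ from below (here $k\geq 2$). Hence $1-\kh{1-\tfrac1k}^{k-1}>1-\tfrac{k}{k-1}\cdot\tfrac1e$, and the claimed $\kh{1-\tfrac{k}{k-1}\cdot\tfrac1e}$-approximation follows. The only genuinely delicate point is the cardinality bookkeeping: because $v^*$ is fixed in advance only $k-1$ greedy steps remain, yet the comparison set $\mathcal{X}^*$ has $k$ vertices; tracking this mismatch is precisely what produces the extra $\tfrac{k}{k-1}$ factor relative to the usual $1-1/e$ bound, and it is the step I would write out most carefully.
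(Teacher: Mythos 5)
Your proposal is correct and is essentially the paper's own argument: the paper likewise fixes $\mX_1=\{v^*\}$, applies the Nemhauser--Wolsey--Fisher per-step recursion $\Omega(\mX_{i+1})-\Omega(\mX^*)\leq\left(1-\frac{1}{k}\right)\left(\Omega(\mX_i)-\Omega(\mX^*)\right)$ over the remaining $k-1$ iterations, and closes with the same estimate $\left(1-\frac{1}{k}\right)^{k-1}\leq\frac{k}{k-1}\cdot\frac{1}{e}$. Your reformulation via the normalized gain function $g(\mathcal{S})=\Omega(v^*)-\Omega(\{v^*\}\cup\mathcal{S})$ is only a notational repackaging of the same steps, though it does make the cardinality bookkeeping and the use of monotonicity (to pass from $\Omega(\{v^*\}\cup\mathcal{X}^*)$ to $\Omega(\mathcal{X}^*)$) more explicit than the paper does.
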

\begin{proof}
	Let $\mX_i$ be the solution set after exactly $i$ vertices have been selected. By supermodularity, for any $i\geq 1$
	\begin{equation*}
		\Omega(\mX_i) -
		\Omega(\mX_{i+1})
		\geq \frac{1}{k}
		\Omega(\mX_i) -
			\Omega(\mX^*),
	\end{equation*}
	which implies
	\begin{equation*}
		\Omega(\mX_{i+1}) - \Omega(\mX^*) \leq 
		\kh{1 - \frac{1}{k}}
		\kh{ \Omega(\mX_i) -
			 \Omega(\mX^*) }.
	\end{equation*}
	Then, we have
	\begin{equation*}
		 \Omega(\mX) - \Omega(\mX^*) \leq
		 \kh{1 - \frac{1}{k}}^{k-1}
		\kh{ \Omega(\mX_1) - \Omega(\mX^*)} 
		\leq
		 \frac{k}{k-1} \cdot \frac{1}{e}
		\kh{ \Omega(\mX_1) - \Omega(\mX^*)},
	\end{equation*}
	which coupled with
	$\Omega(\mX_1) = \Omega(u^*)$ completes the proof. 
\end{proof}

\subsection{Experiments}

In this subsection, we evaluate the performance of Algorithm~\ref{alg:ExtGreedy} 
 by conducting experiments on three popular model networks and four realistic networks, with the latter taken from KONECT~\cite{Ku13} and SNAP~\cite{LeKr14}. We run our experiments on the largest strongly connected components of these seven networks, related information of which is shown in Table~\ref{table:inf}. All experiments are implemented in Julia, which are run on a Windows desktop with 2.5 GHz Intel i7-11700 CPU and 16G memory, using a single thread.

\begin{table}
\caption{Information of model and real-world networks. For a network with $n$  vertices and $m$ edges, we use $n'$ and $m'$ to denote, respectively, the number of vertices and the number of edges in its largest connected component.}
	\centering
	\setlength{\tabcolsep}{2.3mm}{
	\begin{tabular}{ccccc}
		\toprule
		Network & $n$ & $m$ & $n'$ & $m'$\\
		\midrule
		Watts-Strogatz & $50$ & $500$ & $50$ & $500$\\
		Erd\"os-R\'enyi  & $50$ & $300$ & $50$ & $300$\\
		Scale-Free & $50$ & $282$ & $50$ & $282$\\
		email-Eu-core & $1,005$ & $25,571$ & $803$ & $24,729$\\
		Air traffic control & $1,226$ & $2,615$ &  $792$ & $1,900$\\
		Wiki-Vote & $7,115$ & $103,689$ & $1,300$ & $39,456$ \\
		Advogato & $6,541$ & $51,127$ & $3,140$ & $41,872$\\
		\bottomrule
	\end{tabular}}
 \label{table:inf}
\end{table}

Before presenting our experiment results, we give a brief introduction to the construction of considered model networks and real datasets.  

\textbf{Watts-Strogatz (WS) small-world graph~\cite{WaSt98, Mo03}.} We start with a regular network consisting of $n=50$ vertices connected to their $K=10$ nearest neighbors. The vertices are arranged in a ring with $2K$ edges per vertex. In the rewiring procedure, We select a vertex and the edge that connects this vertex to its nearest neighbor in a counterclockwise sense. With probability $p$ we rewire this edge to a randomly chosen vertex in the network, and with probability $1-p$ we leave it as it is. Self-connections and repeated connections are not allowed. With probability $b$ the edge goes out from the current vertex and with probability $1-b$ it goes into it. We move counterclockwise around the ring and repeat this procedure for each vertex until one lap is completed. Then we repeat the process with the second nearest counterclockwise neighbors of each node, and so on up to the $K$th nearest neighbors. In our experiment, $p$ is set to be $0.5$, and $b$ is set to be $1$.

\textbf{Erd\"os-R\'enyi (ER)  random graph~\cite{ErRe59, Bo01}.} The ER random graph starts with a vertex set $\mV$ with $N$ vertices. Then, for each ordered pair $(i,j)$  satisfying $i,j\in\mV$ and $i\neq j$, we create a directed edge from $i$ to $j$ with probability $p$. In our experiment, $p$ is set to be $0.15$.

\textbf{Scale-Free network (SF)}~\cite{GoKaKi01}: We assign two weights $p_i = i^{-\alpha_{\mathrm{out}}}$ and $q_i = i^{-\alpha_{\mathrm{in}}}$ $(i = 1,\ldots,N)$ to each vertex for outgoing and incoming edges, respectively. Both control parameters $\alpha_{\mathrm{out}}$ and $\alpha_{\mathrm{in}}$ are in the interval $[0, 1)$. Then two different vertices $(i, j)$ are selected with probabilities, $p_i/\sum_k p_k$ and $q_j/\sum_k q_k$, respectively, and an edge from the vertex $i$ to $j$ is created with an arrow pointing to $j$ from $i$. We repeat this process $m$ times. The SF networks generated in this way exhibit the power-law behavior in both outgoing and incoming degree distributions, with their exponents $\gamma_{\mathrm{out}}$ and $\gamma_{\mathrm{in}}$  being $\gamma_{\mathrm{out}} = (1 + \alpha_{\mathrm{out}})/\alpha_{\mathrm{out}}$ and $\gamma_{\mathrm{in}} = (1 + \alpha_{\mathrm{in}})/\alpha_{\mathrm{in}}$, respectively. In our experiment, $\alpha_{\mathrm{in}}$ and $\alpha_{\mathrm{out}}$ are both set to be $0.5$, and $m$ is set to be $300$.

\textbf{email-Eu-core}~\cite{LeKr14, LeKlFa07, YiBeAuLeGl17}: The network was generated using email data from a large European research institution. The vertices in the network represent people within this institution, and there is a directed edge $(u,v)$ from vertex $u$ to vertex $v$ in the network if person $u$ sent person $v$ at least one email. The dataset only includes email communication between members of the institution (the core) and does not include incoming messages from or outgoing messages to external entities.

\textbf{Air traffic control}~\cite{Ku13}: The network was generated from the USA's FAA (Federal Aviation Administration) National Flight Data Center (NFDC), Preferred Routes Database. Vertices in this network represent airports or service centers and directed edges are created from strings of preferred routes recommended by the NFDC.

\textbf{wiki-Vote}~\cite{LeKr14, LeHuDaKl10, LeHuDaKl10/2}: Wikipedia is a free online encyclopedia, created and edited collaboratively by volunteers around the world. A small proportion of Wikipedia contributors are administrators, who have access to additional technical features that aid in maintenance. In order for a user to become an administrator a Request for adminship (RfA) is issued and the Wikipedia community decides who will be granted adminship through a public discussion or a vote. This network contains all the Wikipedia voting data from the inception of Wikipedia till January 2008. Vertices in the network represent Wikipedia users and a directed edge from vertex $i$ to vertex $j$ represents that user $i$ voted on user $j$.

\textbf{Advogato}~\cite{Ku13, MaSaTo09}: This is the trust network of Advogato, an online community platform for free software developers launched in 1999. The vertices in this network represent Advogato users, and the directed edges represent trust relationships. Trust links are referred to as "certifications" on Advogato, and there are three different levels of certifications that correspond to the following three different edge weights: apprentice (0.6), journeyer (0.8), and master (1.0). Users who have not received any trust certifications are referred to as observers. It is possible to trust oneself on Advogato, which leads to the presence of loops in the network.

We now study the accuracy of our algorithm \textsc{Exact} by comparing it with the following baseline strategies for selecting $k$ vertices: \textsc{Optimum}, \textsc{Random}, \textsc{Top-degree} and \textsc{Min-res}. The strategy \textsc{Optimum} selects $k$ vertices with the optimum resistance distance by brute-force search. \textsc{Random} scheme chooses $k$ vertices at random. \textsc{Top-degree} method chooses $k$ vertices with the highest out degrees, while the \textsc{Min-res} approach chooses $k$ vertices with the lowest resistance distance according to Definition~\ref{def:Omegai}.

We first evaluate the effectiveness of algorithm \textsc{Exact} on the three model networks for $k = 1, 2,\ldots, 6$, for which we are able to compute the optimum solutions because of their small sizes. Figure~\ref{fig:syn} presents the resistance distance of vertex sets obtained by different methods, from which we observe that the solutions returned by our algorithm \textsc{Exact} and the optimum solution are almost the same, both of which are better than those returned by the three other baseline schemes. Thus, our algorithm is very effective in practice, the approximation ratio of which is significantly better than the theoretical guarantee. 

\begin{figure}
\centering 
\includegraphics[width=0.48\textwidth]{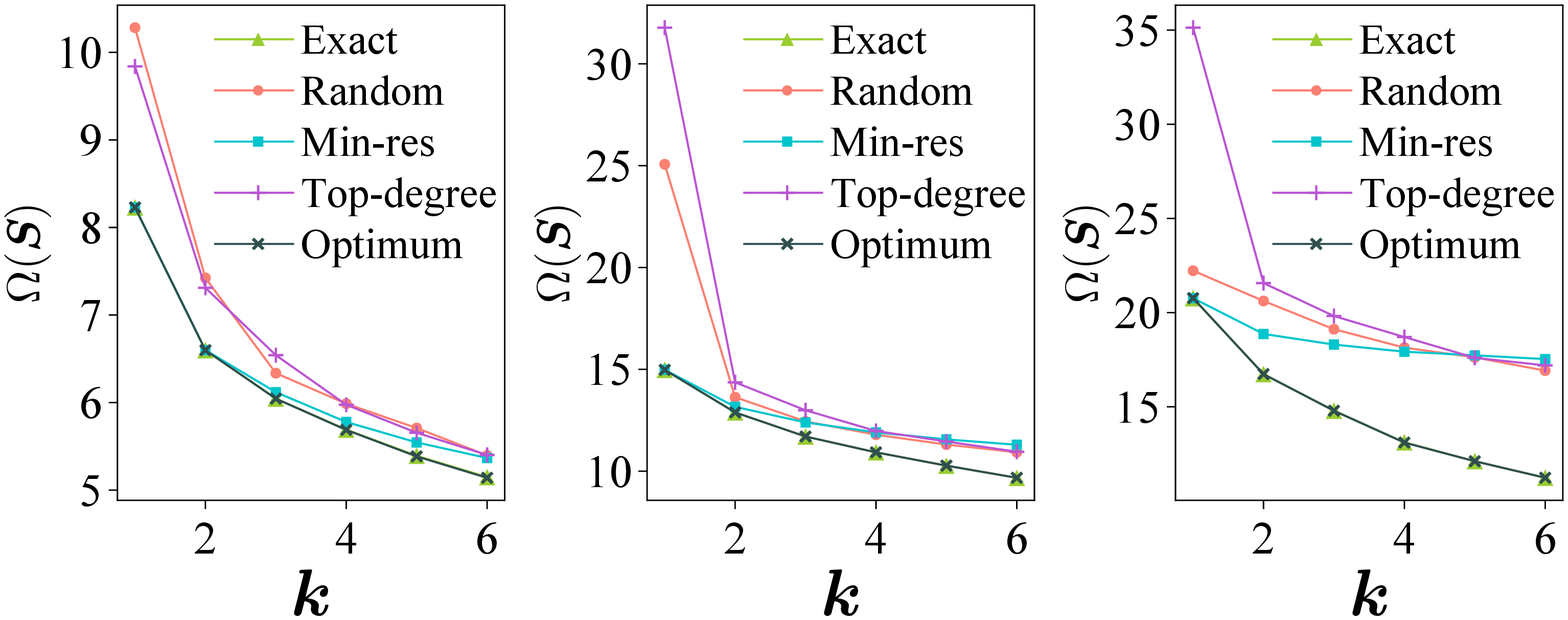} 
\caption{Resistance distance of vertex sets returned by \textsc{Exact}, random and optimum strategies on three models: WS (a), ER (b), and SF (c).} 
\label{fig:syn}
\end{figure}

We then demonstrate the effectiveness of \textsc{Exact} by comparing it with \textsc{Random}, \textsc{Top-degree}, and \textsc{Min-res} on four realistic networks, for which we cannot obtain the optimal solutions. The comparison of
The results for these four strategies are shown in Figure~\ref{fig:real}, which indicates that our greedy algorithm \textsc{Exact} outperforms the three baseline schemes for vertex selection.

\begin{figure}
	\centering 
	\includegraphics[width=0.48\textwidth]{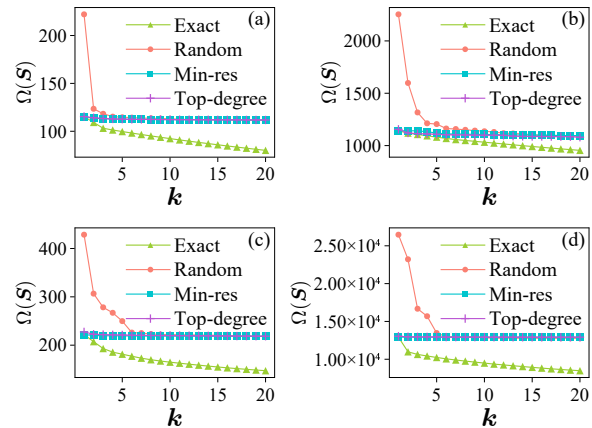} 
	\caption{Resistance distance of vertex sets returned versus the number \(k\) of vertices chosen for the four algorithms on email-Eu-core (a), Air traffic control (b), Wiki-Vote (c), and Advogato (d).} 
	\label{fig:real}
\end{figure}

\section{Conclusion}

We introduced the resistance distances for strongly connected directed graphs based on random walks, which is a natural extension of resistance distances for undirected graphs. We defined the Laplacian matrix $\LL$ for directed graphs, which subsumes the Laplacian matrix of undirected graphs as a special case. We studied the properties of the Laplacian matrix for directed graphs, in terms of whose pseudoinverse we provided expression for the two-node effective resistance, as well as some defined quantities based on effective resistances, such as Kirchhoff index, and multiplicative degree-Kirchhoff index. Moreover, we proved that the two-node resistance distance on directed graphs is a metric. 

In the second part, we defined the resistance distance between a vertex and a vertex group in directed graphs, and expressed this quantity in terms of the elements of the inverse of a submatrix of $\LL$. We further proposed the problem of selecting a set of fixed number of nodes, such that their effective resistance is minimized. Since this combinatorial optimization problem is NP-hard, we presented a greedy algorithm to approximately solve it, which has a proved approximation ratio, since the objective function of the problem is monotone and supermodular. Experiments on model and realistic networks validate the performance of our approximation algorithm. Our work provides useful insight on potential applications of directed graphs in diverse aspects, such as graph clustering, link prediction, and network reliability. 






%
\bibliographystyle{IEEEtran}
\bibliography{directed}

%

\begin{IEEEbiography}{Mingzhe Zhu}
received the B.Sc. degree in the
School of Computer Science, Fudan University,
Shanghai, China, in 2021. He is currently pursuing the Master degree in the School of Computer
Science, Fudan University, Shanghai, China. His
research interests include network science, graph
data mining, and random walks.
\end{IEEEbiography}

\begin{IEEEbiography}{Liwang Zhu}
		received the B.Eng. degree in computer science and technology, Nanjing University of Science and Technology, Nanjing, China, in 2018. He is currently pursuing the Ph.D. degree in the School of Computer Science, Fudan University, Shanghai, China. His research interests include social networks, opinion dynamics, graph data mining and network science.
\end{IEEEbiography}

\begin{IEEEbiography}{Huan Li}
received the B.S. degree and
the M.S. degree in computer science from
Fudan University, Shanghai, China, in 2016
and 2019, respectively. He is currently
pursuing the Ph.D. degree in University
of Pennsylvania. His research interests include graph algorithms, social networks,
and network science.
\end{IEEEbiography}

\begin{IEEEbiography}{Wei Li}
received the B.Eng. degree in automation and
the M.Eng. degree in control science and engineering
from the Harbin Institute of Technology, China, in
2009 and 2011, respectively, and the Ph.D. degree
from the University of Sheffield, U.K., in 2016. After
being a research associate at the University of York,
UK, he is currently an associate professor with the
Academy for Engineering and Technology, Fudan
University. His research interests include robotics
and computational intelligence, and especially selforganized/
swarm systems, and evolutionary machine
learning.
\end{IEEEbiography}

\begin{IEEEbiography}{Zhongzhi Zhang}
	(M'19)	 received the B.Sc. degree in applied mathematics from Anhui University, Hefei, China, in 1997 and the Ph.D. degree in management science and engineering from Dalian University of Technology, Dalian, China, in 2006. \\
	From 2006 to 2008, he was a Post-Doctoral Research Fellow with Fudan University, Shanghai, China, where he is currently a Full Professor with the School of Computer Science. He has published over 160 papers in international journals or conferences. 
 He was selected as one of the most cited Chinese researchers
	(Elsevier) in 2019,  2020, and 2021. His current research interests include network science, graph data mining, social network analysis, computational social science, spectral graph theory, and random walks. \\
	Dr. Zhang was a recipient of the Excellent Doctoral Dissertation Award of Liaoning Province, China, in 2007, the Excellent Post-Doctor Award of Fudan University in 2008, the Shanghai Natural Science Award (third class) in 2013, the CCF Natural Science Award (second class) in 2022, and the Wilkes Award for the best paper published in The Computer Journal in 2019. He is a member of the IEEE.
\end{IEEEbiography}




\end{document}